\g@addto@macro{\UrlBreaks}{\UrlOrds}
\newtheorem{definition}{Definition}
\newtheorem{lemma}{Lemma}
\newtheorem{proposition}{Proposition}
\DeclareMathOperator*{\argmax}{argmax}
\newcommand{\gmin}[0]{g_{\min}}
\newcommand{\gmax}[0]{g_{\max}}
\newcommand{\betahvcf}[0]{\bm{\hat{\beta}_\text{\normalfont cf}}}
\newcommand{\betahvgf}[0]{\bm{\hat{\beta}_\text{\normalfont gf}}}
\newcommand{\betahvegf}[0]{\bm{\hat{\beta}_\text{\normalfont egf}}}
\newcommand{\Icd}[0]{\bm{I_\text{\normalfont cd}}}
\newcommand{\Igd}[0]{\bm{I_\text{\normalfont gd}}}
\newcommand{\Iegd}[0]{\bm{I_\text{\normalfont egd}}}
\newcommand{\dbetahv}[0]{\bm{\Delta\hat{\beta}}}
\newcommand{\dbetahvcd}[0]{\bm{\Delta\hat{\beta}_\text{\normalfont cd}}}
\newcommand{\dbetahvgd}[0]{\bm{\Delta\hat{\beta}_\text{\normalfont gd}}}
\newcommand{\dbetahvegd}[0]{\bm{\Delta\hat{\beta}_\text{\normalfont egd}}}
\newcommand{\betahv}[0]{\bm{\hat{\beta}}}
\newcommand{\etav}[0]{\bm{\eta}}
\newcommand{\Iegds}[0]{\bm{I_\text{\normalfont egd,sd}}}
\newcommand{\dbetahvs}[0]{\bm{\Delta\hat{\beta}_\text{\normalfont sd}}}
\newcommand{\dbetahvcds}[0]{\bm{\Delta\hat{\beta}_\text{\normalfont cd,sd}}}
\newcommand{\dbetahvgds}[0]{\bm{\Delta\hat{\beta}_\text{\normalfont gd,sd}}}
\newcommand{\dbetahvegds}[0]{\bm{\Delta\hat{\beta}_\text{\normalfont egd,sd}}}
\newcommand{\dbetahvg}[0]{\bm{\Delta\hat{\beta}_\text{\normalfont gs}}}
\newcommand{\dbetahvcdg}[0]{\bm{\Delta\hat{\beta}_\text{\normalfont cd,gs}}}
\newcommand{\dbetahvgdg}[0]{\bm{\Delta\hat{\beta}_\text{\normalfont gd,gs}}}
\newcommand{\dbetahvegdg}[0]{\bm{\Delta\hat{\beta}_\text{\normalfont egd,gs}}}
\newcommand{\dbetahvegdgc}[0]{\bm{\Delta\hat{\beta}_\text{\normalfont egd,gs,c}}}
\newcommand{\Icf}[0]{\bm{I_\text{\normalfont cf}}}
\newcommand{\Igf}[0]{\bm{I_\text{\normalfont gf}}}
\newcommand{\Iegf}[0]{\bm{I_\text{\normalfont egf}}}
\newcommand{\I}[0]{\bm{I_\text{\normalfont egf}}}
\newcommand{\Icfi}[0]{\bm{I}^i_\text{\normalfont cf}}
\newcommand{\Iegfi}[0]{\bm{I}^i_\text{\normalfont egf}}
\newcommand{\Ii}[0]{\bm{I}^i_\text{\normalfont egf}}
\newcommand{\yv}[0]{\bm{y}}
\newcommand{\Xm}[0]{\bm{X}}
\newcommand{\Om}[0]{\bm{\Omega}}
\newcommand{\Omi}[0]{\bm{\Omega}^i}
\newcommand{\betav}[0]{\bm{\beta}}
\newcommand{\betahnollv}[0]{\bm{\hat{\beta}_0}}
\newcommand{\betahatolsv}[0]{\bm{\hat{\beta}}^\text{\normalfont OLS}}
\newcommand{\betahatols}[0]{\hat{\beta}^\text{\normalfont OLS}}
\newcommand{\betahatv}[0]{\bm{\hat{\beta}}}
\newcommand{\betahat}[0]{\hat{\beta}}
\newcommand{\betavti}[0]{\bm{\beta}(t_i)}
\newcommand{\lambdamax}[0]{\lambda_{\text{\normalfont max}}}
\newcommand{\imax}[0]{{i_{\text{\normalfont max}}}}
\newcommand{\R}[0]{\mathbb{R}}
\newcommand{\Ss}[0]{\bm{\hat{\Sigma}}}
\newcommand{\Imm}[0]{\bm{I}}
\newcommand{\gv}[0]{\bm{g}}
\newcommand{\vv}[0]{\bm{v}}
\newcommand{\svi}[0]{\bm{s}^i}
\newcommand{\si}[0]{s^i}
\newcommand{\gsvi}[0]{\bm{\zeta}^i}
\newcommand{\gsi}[0]{\zeta^i}
\newcommand{\ckv}[0]{\bm{c}^{k-1}}
\newcommand{\ck}[0]{c^{k-1}}
\newcommand{\sgn}[0]{\text{\normalfont sign}}
\newcommand{\N}{\mathcal{N}}
\newcommand{\Oo}{\mathcal{O}}
\newcommand*{\horzbar}{\rule[.5ex]{2.5ex}{0.5pt}}
\title{Elastic Gradient Descent, an Iterative Optimization Method Approximating the Solution Paths of the Elastic Net}
\author{%
  Oskar Allerbo \\
  Mathematical Sciences\\
  University of Gothenburg and Chalmers University of Technology\\
  \texttt{allerbo@chalmers.se} \\
  \And
  Johan Jonasson\\
  Mathematical Sciences\\
  University of Gothenburg and Chalmers University of Technology\\
  \texttt{jonasson@chalmers.se} \\
  \And
  Rebecka J\"ornsten\\
  Mathematical Sciences\\
  University of Gothenburg and Chalmers University of Technology\\
  \texttt{jornsten@chalmers.se} \\
}
\begin{document}
\maketitle

\begin{abstract}%
The elastic net combines lasso and ridge regression to fuse the sparsity property of lasso with the grouping property of ridge regression. The connections between ridge regression and gradient descent and between lasso and forward stagewise regression have previously been shown. Similar to how the elastic net generalizes lasso and ridge regression, we introduce elastic gradient descent, a generalization of gradient descent and forward stagewise regression.
We theoretically analyze elastic gradient descent and compare it to the elastic net and forward stagewise regression. Parts of the analysis are based on elastic gradient flow, a piecewise analytical construction, obtained for elastic gradient descent with infinitesimal step size.
We also compare elastic gradient descent to the elastic net on real and simulated data and show that it provides similar solution paths, but is several orders of magnitude faster. Compared to forward stagewise regression, elastic gradient descent selects a model that, although still sparse, provides considerably lower prediction and estimation errors.
\end{abstract}

\textbf{Keywords:} Elastic Net, Gradient Descent, Gradient Flow, Forward Stagewise Regression

\section{Introduction}
Lasso \citep{tibshirani1996regression} is a popular method for combining regularization and model selection in linear regression. The objective is to minimize
\begin{equation}
\label{eq:lasso}
\frac 1{2n}\|\yv-\Xm\betav\|_2^2+\lambda \|\betav\|_1
\end{equation}
with respect to the parameter vector $\betav\in \R^p$, where $\Xm \in \R^{n\times p}$ is the design matrix, $\yv \in \R^n$ is the response vector, and $\lambda>0$ is the regularization strength. Provided that the regularization is large enough, the lasso estimates of some parameters in $\betav$ become exactly zero, thus eliminating the corresponding variables from the model, which results in a simpler representation. Since the introduction of lasso, many extensions have been proposed, such as the adaptive lasso \citep{zou2006adaptive}, with individual regularization strengths to each $\beta_i$; the group lasso \citep{yuan2006model}, which regularizes predefined groups of parameters together; the fused lasso \citep{tibshirani2005sparsity}, which accounts for spatial and/or temporal dependencies; the graphical lasso \citep{friedman2008sparse}, for sparse inverse covariance estimation; and the elastic net \citep{zou2005regularization}, which is a convex combination of lasso and ridge regression, generalizing Equation \ref{eq:lasso} into
\begin{equation}
\label{eq:en}
\frac 1{2n}\|\yv-\Xm\betav\|_2^2+\lambda(\alpha\|\betav\|_1+(1-\alpha)\|\betav\|_2^2),\ \alpha\in[0,1].
\end{equation}
The motivation behind adding the squared $\ell_2$ penalty of ridge regression to the elastic net is two-fold. First, in the high-dimensional setting, when $p>n$, lasso can select at most $n$ variables. Second, if two or more variables are highly correlated, lasso tends to include only one of these in the model, and to be quite indifferent as to which. Both of these shortcomings are alleviated by the elastic net.

As can be seen in Equation \ref{eq:lasso}, a larger value of $\lambda$ enforces a smaller value of $\|\betav\|_1$. Thus, provided $n\geq p$, the lasso estimate, $\betahatv$, shrinks (in $\ell_1$ norm) with increasing $\lambda$ from the ordinary least squares solution, $\betahatolsv:=(\Xm^\top\Xm)^{-1}\Xm^\top\yv$ for $\lambda=0$, to $\bm{0}$ for $\lambda \geq \lambdamax:=\frac1n\|\Xm^\top \yv\|_\infty$. Due to Lagrangian duality, the solution path of $\betahatv$ as a function of $\lambda$ from 0 to $\lambdamax$ can equivalently be expressed in terms of $\|\betahatv\|_1$, where $\lambda=0$ corresponds to $\|\betahatv\|_1=\|\betahatolsv\|_1$ and $\lambda=\lambdamax$ corresponds to $\|\betahatv\|_1=0$.

Several authors have addressed the striking similarities between the lasso solution path and the solution path of forward stagewise linear regression (see e.g.\ work by \citealt{rosset2004boosting}, \citealt{efron2004least} and \citealt{hastie2007forward}). Forward stagewise regression is an iterative method for solving linear regression. Starting at $\betahatv=\bm{0}$, the solution moves toward $\betahatolsv$, successively adding more variables to the model, resulting in a solution path very similar to that of lasso. Selecting a solution before convergence, something that is often referred to as early stopping, can thus be thought of as applying lasso with a regularization strength $\lambda \in(0,\lambdamax)$.
\citet{tibshirani2015general} proposed a generalization of forward stagewise regression to be used with any convex function as opposed to just the $\ell_1$ norm, and used it to obtain solution paths for group lasso, nuclear norm regularized matrix completion (e.g.\ \citet{candes2009exact}) and ridge logistic regression. \citet{vaughan2017stagewise} used the general stagewise procedure to obtain solution paths for sparse group lasso \citep{simon2013sparse}, while \citet{zhang2019forward} used it for clustering. 

Just as forward stagewise regression and lasso provide similar solution paths, so do gradient descent and ridge regression. \citet{ali2019continuous} investigated these similarities for infinitesimal optimization step size. They argued that, just as for forward stagewise regression, optimization time can be thought of as an inverted penalty, and that early stopping at time $t$ roughly corresponds to ridge regression with penalty $1/t$.

In this paper, we combine forward stagewise regression and gradient descent into elastic gradient descent, an iterative optimization method that produces a solution path similar to that of the elastic net. Analogously to how the elastic net is a combination of lasso and ridge regression, elastic gradient descent is a combination of forward stagewise regression and gradient descent.

In Section \ref{sec:egd}, we introduce the elastic gradient descent algorithm.
In Section \ref{sec:prop}, we theoretically analyze the algorithm, and compare it to the elastic net and to forward stagewise regression.
In Section \ref{sec:experiments}, we compare elastic gradient descent to the elastic net and forward stagewise regression on synthetic and real data sets. 

Our main contributions are:
\begin{itemize}
\item
We define elastic gradient descent, an iterative optimization algorithm, that generalizes forward stagewise regression (also known as coordinate descent) and gradient descent, with solution paths very similar to those of the elastic net. 
\item
We theoretically analyze the convergence properties of elastic gradient descent, the similarities and differences between elastic gradient descent with and without momentum, and the similarities and differences between elastic gradient descent and the elastic net and forward stagewise regression.
\item
We show on real and synthetic data that 
\begin{itemize}
\item
compared to the elastic net, elastic gradient descent selects similar models, but is orders of magnitude faster.
\item
compared to forward stagewise regression, elastic gradient descent is able to select a sparse model with considerably lower prediction and estimation errors.
\end{itemize}
\end{itemize}
All proofs are deferred to Appendix \ref{sec:proofs}.

\section{Elastic Gradient Descent}
\label{sec:egd}
Gradient descent is an iterative optimization method, where, in each time step, the solution is updated in the direction of the negative gradient. For the related method coordinate descent, each optimization step is constrained to update only one coordinate, namely the one with the largest absolute gradient value. For linear regression, coordinate descent and forward stagewise regression coincide, and thus we will henceforth use the name coordinate descent. For both coordinate and gradient descent, one optimization step can be expressed as
\begin{equation}
\label{eq:cgd}
\begin{aligned}
\betahatv(t+\Delta t)=\betahatv(t)-\Delta t\cdot\dbetahv(t),
\end{aligned}
\end{equation}
where $\dbetahv$ differs between the two algorithms. 

We let $\gv$ denote the gradient of the loss function, i.e.\ $\gv(t):=\nabla_{\betav(t)}L\left(\Xm,\yv,\betav(t)\right)$. (For least squares, with $L\left(\Xm,\yv,\betav(t)\right)=\frac1{2n}\|\yv-\Xm\betav(t)\|_2^2$, $\gv(t)=-\frac1n\Xm^\top(\yv-\Xm\betav(t))$.) For coordinate descent, $\dbetahvcd$ is defined according to

\begin{equation*}
\label{eq:coord_desc_step}
\begin{aligned}
&m(t):=\argmax_d|g_d(t)|,\\
&\dbetahvcd(t)=\sgn(g_m(t))\cdot \bm{e_m}(t)=\Icd(t)\cdot\sgn(\gv(t)),
\end{aligned}
\end{equation*}
where $\bm{e_m}$ is the $m$-th standard basis vector, $\Icd$ is a matrix of only zeros, except $(\Icd)_{mm}$ which is 1, and the sign of vector $\gv$ is taken element-wise. Multiplying the matrix $\Icd$ with the vector $\sgn(\gv)$ we obtain a vector where all elements are zero, except element $m$ which is exactly $\sgn(g_m)$.
For gradient descent,
\begin{equation*}
\label{eq:grad_desc_step}
\begin{aligned}
&\dbetahvgd(t)=\gv(t)=\Igd\cdot \gv(t),
\end{aligned}
\end{equation*}
where $\Igd=\bm{I}$ is the identity matrix, which is included to emphasize the similarities to coordinate descent.

Naively combining coordinate and gradient descent with inspiration from the elastic net, Equation \ref{eq:en}, suggests that
$$\dbetahvegd(\alpha,t)=\alpha\cdot\dbetahvcd(t)+(1-\alpha)\cdot\dbetahvgd(t)=\alpha\cdot\Icd(t)\cdot \sgn(\gv(t))+(1-\alpha)\cdot\Igd\cdot \gv(t),\ \alpha\in[0,1],$$
where coordinate and gradient descent are recovered as special cases for $\alpha=1$ and $\alpha=0$. However, this proposal does not share the desirable model selection property of the elastic net since $\dbetahvgd$ updates all parameters at all time steps, thus making all parameters non-zero already in the first time step. Therefore, we need a combination with the ability to keep some parameters fixed. Hence, we define
\begin{equation}
\label{eq:egd}
\dbetahvegd(\alpha,t):=\Iegd(\alpha,t)\cdot\left(\alpha\cdot\sgn(\gv(t))+(1-\alpha)\cdot\gv(t)\right),
\end{equation}
where $\Iegd$ is a diagonal matrix with zeros and ones on the diagonal, such that $\Iegd(0,t)=\Igd=\bm{I}$ and $\Iegd(1,t)=\Icd(t)$.  $\Iegd$ could be defined in multiple ways. We, however, choose the following simple definition:

\begin{definition}[$\Iegd$]~\\
\label{dfn:iegd}
\begin{equation*}
\begin{aligned}
&\textrm{For } m(t)=\argmax_d|g_d(t)|,\\
&\Iegd(\alpha,t)_{d_1d_2}:=
\begin{cases}
1\textrm{ if } d_1=d_2=d \textrm{ and } |g_d(t)|\geq\alpha\cdot|g_m(t)|\\
0\textrm{ else}.
\end{cases}
\end{aligned}
\end{equation*}
\end{definition}

That is, for large gradient components, where ``large'' means ``larger than $\alpha$ times the maximum component'', the corresponding value in $\Iegd$ is 1, while for small components it is 0. Note that if $\alpha=0$, all components are considered large, while for $\alpha=1$ only the maximum component is. Our definition of large gradient components coincides with that by \citet{friedman2004gradient}, but the update directions differ since we include the signed gradient in $\dbetahvegd$.
The reason for including the sign gradient is for elastic gradient descent to generalize coordinate descent, and thus to obtain a distinct connection to the elastic net.

Elastic gradient descent is summarized in Algorithm \ref{alg:egd}.

\begin{algorithm}
\caption{Elastic Gradient Descent}
\begin{algorithmic}[1]
  \State Initialize $\betahv=\bm{0}$.
  \Repeat
  \State $\gv=\nabla_{\betahv}L\left(\Xm,\yv,\betahv\right)$, where $L(\cdot)$ denotes the loss function.
  \State $\Iegd=\text{diag}\left(\mathbb{I}\left[\frac{|\gv|}{\max_d|g_d|}\geq \alpha\right]\right)$, where $\mathbb{I}[\cdot]$ denotes the indicator function, which is taken element-wise, and where diag$(\cdot)$ creates a diagonal matrix from a vector.
  \State $\betahv=\betahv-\Delta t \cdot \Iegd\cdot\left(\alpha\cdot\sgn(\gv)+(1-\alpha)\cdot\gv\right)$.
  \Until{convergence or other stopping criterion.}
\end{algorithmic}
\label{alg:egd}
\end{algorithm}

In Figure \ref{fig:path_demo}, we demonstrate the similarities between the solution paths of explicit regularization and iterative optimization. We compare the solution paths of ridge regression and gradient descent (GD), lasso and coordinate descent (CD), and the elastic net and elastic gradient descent (EGD) for a simple linear model with two correlated parameters, $\beta_1$ and $\beta_2$. The solution paths of the corresponding algorithms are similar, although not identical.

Even though our definition of elastic gradient descent includes an element of arbitrariness, it proves to work well, as is shown in Section \ref{sec:experiments}. In Appendix \ref{sec:steep_general}, we investigate two slightly different definitions, based on the frameworks of steepest descent \citep{boyd2004convex} and the general stagewise procedure \citep{tibshirani2015general}. The three definitions provide virtually identical solutions.

\subsection{Elastic Gradient Flow}
\label{sec:egf}
Gradient descent with infinitesimal step size, $\Delta t$, is often referred to as gradient flow, which, since $\Delta t\to 0$, can be interpreted as a differential equation in training time, $t$. For some problems, including linear regression, this differential equation has a closed-form solution, which opens up for a better theoretical understanding of the algorithm. For elastic gradient descent, the corresponding differential equation becomes quite complicated. However, in Appendix \ref{sec:flow}, we use it to construct something we refer to as elastic gradient flow, in analogy with gradient flow.
Elastic gradient flow helps us to establish a theoretical connection between elastic gradient descent and the elastic net. To improve readability, in this section, we just state the equations of elastic gradient flow, and its special cases gradient flow and coordinate flow; for details, see Appendix \ref{sec:flow}.

\begin{figure}
  \center
  \includegraphics[width=1.\textwidth]{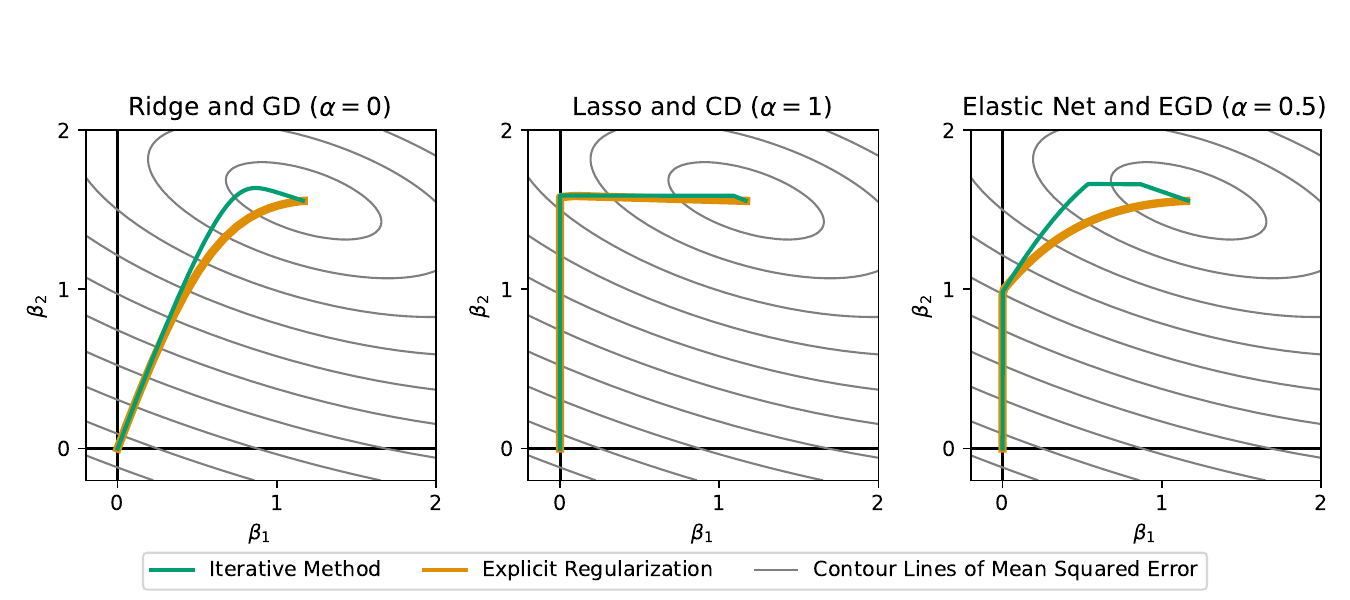}
  \caption{Solution paths of explicitly regularized and early stopping methods. In all three cases, the two methods follow similar, although not identical, solution paths.}
  \label{fig:path_demo}
\end{figure}

The elastic gradient flow estimate at time $t$ is given by Equation \ref{eq:beta_ef1},
\begin{equation}
\begin{aligned}
\label{eq:beta_ef1}
\betahvegf(t)=& \betahvegf(t_i)+ \left((1-\alpha)\Ss\right)^{-1}\left(\Imm-\exp\left(\Omi(t_i,t)\right)\right)\\
&\cdot\left(\alpha\cdot\sgn\left(\Ss(\betahatolsv-\betahvegf(t_i))\right)+(1-\alpha)\cdot\Ss\left(\betahatolsv-\betahvegf(t_i)\right)\right),\\
&t\in [t_i,t_{i+1}),
\end{aligned}
\end{equation}
where $\Ss:=\frac 1n \Xm^\top \Xm$ is the empirical covariance matrix, 
$\Omi(t_i,t)$ is the Magnus expansion \citep{magnus1954exponential} of $-\frac{1-\alpha}{1-\gamma}\Ss\Iegfi(\alpha,t)$,
$\{\Iegfi\}_{i=0}^\imax$ are the continuous-time versions of $\Iegd(\alpha,t)$ for $t\in [t_i,t_{i+1})$, 
and $\{t_i\}_{i=0}^\imax$ are the times when parameters enter or leave the model.

The parameter $\gamma\in [0,1)$ is the strength of the momentum \citep{polyak1964some}, which is a generalization of gradient descent discussed in Section \ref{sec:momentum}. For standard gradient descent without momentum, $\gamma=0$.

For $\alpha=0$, Equation \ref{eq:beta_ef1} simplifies to gradient flow,
\begin{equation}
\label{eq:beta_gf1}
\betahvgf(t)=\left(\Imm-\exp\left(-\frac t{1-\gamma}\Ss\right)\right)\betahatolsv,
\end{equation}
and for $\alpha=1$, it simplifies to what we refer to as coordinate flow,
\begin{equation}
\label{eq:beta_cf1}
\betahvcf(t)=\betahvcf(t_i)+\frac{t-t_i}{1-\gamma}\Icfi\cdot\sgn\left(\Ss(\betahatolsv-\betahvcf(t_i))\right),\quad t\in [t_i,t_{i+1}),
\end{equation}
where $\{\Icfi\}_{i=0}^\imax$ are continuous-time versions of $\Icd(t)$ for $t\in [t_i,t_{i+1})$.

\section{Properties of Elastic Gradient Descent}
\label{sec:prop}
In this section, we theoretically investigate elastic gradient descent and flow and make comparisons to the elastic net and coordinate descent, assessing the similarities and differences.

\subsection{Convergence of Elastic Gradient Descent}
For a small enough step size, $\Delta t$, elastic gradient descent always moves downhill in the optimization landscape. In Proposition \ref{thm:conv_anal}, we present bounds for the step size that guarantee an improvement when applying elastic gradient descent to a strongly convex problem.
\begin{proposition}~\\
\label{thm:conv_anal}
Assume that the loss function, $L(\betav)=L(\Xm,\yv,\betav)$, is strongly convex with Hessian bounded according to $\nabla^2L(\betav)\preceq M\Imm$ (i.e.\ $M\Imm-\nabla^2L(\betav)$ is a positive semi-definite matrix), for some $M>0$.
Denote 
$$\gv:=\nabla L(\betav),\ \gmax=g_m:=\max_d|g_d|\text{ and }\gmin:=\min_{\substack{d:\ |g_d|\geq \alpha,\\ g_d\neq 0}}|g_d|.$$

Then, 
\begin{subequations}
\begin{align}
&\Delta t< \frac2M\cdot\gmax\cdot\frac{\alpha+(1-\alpha)\gmax}{(1_{\alpha>0}+(1-\alpha)\gmax)^2}\label{eq:bound1}\\ 
\implies &\Delta t< \frac2M\cdot\frac{\gmin^2}{\gmax}\cdot \frac{\alpha+(1-\alpha)\gmax}{(\alpha+(1-\alpha)\gmin)^2}\label{eq:bound2}\\
\implies &L(\betahv-\Delta t \cdot \dbetahvegd)-L(\betahv)\leq 0,\notag
\end{align}
\end{subequations}
where 
$$1_{\alpha>0}=\begin{cases}1\text{ if } \alpha>0\\0\text{ if } \alpha=0.\end{cases}$$
\end{proposition}
\noindent\textbf{Remark 1:} The bound in \ref{eq:bound2} allows for a greater value of $\Delta t$ than that in Equation \ref{eq:bound1}, but requires knowledge of the minimum gradient value in addition to the maximum gradient value.\\
\noindent\textbf{Remark 2:} For $\alpha=0$ and $\alpha=1$, the bounds for $\Delta t$ become $\frac2M$ and $\frac{2\gmax}M$ respectively. Note than for $\alpha=1$, $\gmin=\gmax$.\\ 
\noindent\textbf{Remark 3:} If a fixed value is used for $\Delta t$, once $\gmax$ gets small enough, the loss function is not guaranteed to decrease, unless $\alpha=0$. In this case, training should be interrupted when the solution starts to worsen.\\
\noindent\textbf{Remark 4:} For linear regression, $M$ is the maximum eigenvalue of the empirical covariance matrix, $\Ss=\frac1n\Xm^\top\Xm$.

\subsection{Calculating Solution Paths}
In the original implementation of the elastic net, ridge regression in the penalized version,
$$\min_{\betav} \frac 1{2n} \|\yv-\Xm\betav\|_2^2+\lambda_2\|\betav\|_2^2,$$
is combined with the LARS algorithm \citep{efron2004least}, which solves the constrained version of the lasso problem,
$$\min_{\betav} \frac 1{2n} \|\yv-\Xm\betav\|_2^2,\textrm{ s.t. } \|\betav\|_1\leq R_1,$$
returning the entire solution path as a function of $R_1$.
This implies that the elastic net problem is formulated as
$$\min_{\betav} \frac 1{2n} \|\yv-\Xm\betav\|_2^2+\lambda_2\|\betav\|_2^2,\textrm{ s.t. } \|\betav\|_1\leq R_1$$
and that call to the algorithm returns the solution path for $\betahv$ for different values of $R_1$, with a fixed value of $\lambda_2$. Thus, each solution corresponds to a combination ($R_1$, $\lambda_2$), rather than the more intuitive combination ($\alpha$, $\lambda$). Later versions, including those by \citet{friedman2010regularization}, use iterative methods to obtain solutions expressed as combinations of ($\alpha$, $\lambda$), where the solution for a given $\lambda$ is calculated independently of the others by running an iterative algorithm to convergence.

Elastic gradient descent is also an iterative algorithm, but here the solution at each iteration is of interest by itself and corresponds to a combination $(\alpha,t)$. Running the algorithm to convergence once returns all values of $t$ between 0 and $t_\textrm{max}$. In contrast, the elastic net algorithm has to be run to convergence once for every value of $\lambda$. 

Comparing elastic gradient descent to coordinate descent, while there is no restriction on the number of parameters elastic gradient descent can update in each iteration, coordinate descent always only updates one parameter per iteration. Thus, especially for problems with many dimensions, elastic gradient descent has a computational advantage compared to coordinate descent.

In Section \ref{sec:experiments}, we verify the faster computational speed of elastic gradient descent compared to those of the elastic net and coordinate descent.

\subsection{Differences in the Solution Paths}
\label{sec:path_diff}
For $\alpha>0$, both the elastic net and elastic gradient descent tend to set some parameters to 0, but this is done using two different techniques.
The elastic net has no closed-form solution, unless for isotropic features, i.e.\ $\bm{\Sigma}=\Imm$, for which the solution is given by
\begin{equation}
\label{eq:beta_en}
\betahat^{\text{en}}_d(\lambda)=\frac{\sgn(\betahatols_d)\cdot\max(0,|\betahatols_d|-\alpha\lambda)}{1+(1-\alpha)\lambda}.
\end{equation}
Consider the numerator of Equation \ref{eq:beta_en}. Compared to the ordinary least squares solution, each $\betahat^{\text{en}}_d$ is translated toward zero, and once it changes sign it is set to exactly zero, i.e.\ the elastic net shifts all paths toward 0.
Elastic gradient descent, in contrast, by Definition \ref{dfn:iegd} stops updating a parameter when the corresponding gradient value is small. If this occurs when the parameter value is 0, the value will constantly remain so, but it might also stay constant at some other level. This is illustrated in Figures \ref{fig:path_demo}, \ref{fig:diff_demo} and \ref{fig:diab_paths_zoom}.

\subsection{Susceptibility to Correlations}
\label{sec:corrs}
The ridge estimate is usually written as $\betahatv(\lambda):=(\Xm^\top\Xm+n\lambda \Imm)^{-1}\Xm^\top\yv$, but according to Lemma \ref{thm:ridge_rewrite} it can be reformulated in a way that resembles the gradient flow estimate.
\begin{lemma}~\\
\label{thm:ridge_rewrite}
With $\Ss:=\frac 1n \Xm^\top \Xm$ and $\bm{\hat{\beta}}^\textrm{\emph{OLS}}:=(\Xm^\top\Xm)^+\Xm^\top\yv$, where $(\cdot)^+$ denotes the Moore-Penrose pseudoinverse, the ridge estimate can be written as
\begin{equation}
\label{eq:beta_ridge}
\betahatv(\lambda)=\left(\Imm-\left(\Imm+\frac1 \lambda \Ss\right)^{-1}\right)\betahatolsv.
\end{equation}
\end{lemma}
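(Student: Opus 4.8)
The plan is to start from the right-hand side of Equation \ref{eq:beta_ridge} and reduce it to the textbook ridge formula $(\Xm^\top\Xm+n\lambda\Imm)^{-1}\Xm^\top\yv$ by elementary matrix manipulations, the only substantive ingredient being the pseudoinverse identity already recorded in Equation \ref{eq:grad}. Throughout I would abbreviate $\bm{A}:=\Xm^\top\Xm$, so that $\Ss=\tfrac1n\bm{A}$; note that $n\lambda\Imm+\bm{A}$ and $\Imm+\tfrac1{n\lambda}\bm{A}$ are both invertible because $\bm{A}\succeq 0$ and $n\lambda>0$.

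First I would rewrite the inner inverse: since $\tfrac1\lambda\Ss=\tfrac1{n\lambda}\bm{A}$,
$$\Imm-\Bigl(\Imm+\tfrac1\lambda\Ss\Bigr)^{-1}=\Imm-n\lambda(n\lambda\Imm+\bm{A})^{-1}=\bm{A}(n\lambda\Imm+\bm{A})^{-1},$$
the last step obtained by writing $\Imm=(n\lambda\Imm+\bm{A})(n\lambda\Imm+\bm{A})^{-1}$ and cancelling. Next, since $\bm{A}$ commutes with $(n\lambda\Imm+\bm{A})^{-1}$, I can move $\bm{A}$ past the inverse, so that the right-hand side of Equation \ref{eq:beta_ridge} becomes $(n\lambda\Imm+\bm{A})^{-1}\bm{A}\bm{A}^+\Xm^\top\yv$. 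Finally I would invoke $\bm{A}\bm{A}^+\Xm^\top\yv=\Xm^\top\Xm(\Xm^\top\Xm)^+\Xm^\top\yv=\Xm^\top\yv$ --- this is exactly the identity $\Xm^\top=(\Xm^\top\Xm)(\Xm^\top\Xm)^+\Xm^\top$ used in Equation \ref{eq:grad}, or equivalently the fact that $\bm{A}\bm{A}^+$ is the orthogonal projector onto $\mathrm{range}(\Xm^\top\Xm)=\mathrm{range}(\Xm^\top)$, which contains $\Xm^\top\yv$. This leaves $(n\lambda\Imm+\bm{A})^{-1}\Xm^\top\yv=\betahatv(\lambda)$, as claimed.

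There is no real obstacle here; the computation is routine. The only point demanding a little care is the last step, which is where possible rank deficiency of $\Xm$ (and hence the need for the pseudoinverse in $\betahatolsv$) actually matters: when $\Xm^\top\Xm$ is invertible one has $\bm{A}\bm{A}^+=\Imm$ and the identity is immediate, whereas in general one must argue via the ranges. I would also double-check the harmless sign and scaling bookkeeping in the first display, since that is the only place a slip could occur.
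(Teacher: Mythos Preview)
Your proof is correct and is essentially the paper's own argument run in reverse: the paper starts from $(\Xm^\top\Xm+n\lambda\Imm)^{-1}\Xm^\top\yv$, inserts the same pseudoinverse identity $\Xm^\top=(\Xm^\top\Xm)(\Xm^\top\Xm)^+\Xm^\top$, and then splits $\Ss=(\Ss+\lambda\Imm)-\lambda\Imm$ to arrive at the stated form, while you start from the stated form and undo those same steps. The content, including the one nontrivial ingredient (the pseudoinverse identity handling rank deficiency), is identical.
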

Comparing Equation \ref{eq:beta_ridge} to Equation \ref{eq:beta_gf1},
$$\betahvgf(t)=\left(\Imm-\exp\left(-\frac t{1-\gamma}\Ss\right)\right)\betahatolsv=\left(\Imm-\exp\left(\frac t{1-\gamma}\Ss\right)^{-1}\right)\betahatolsv,$$
we see that if we define $\lambda:=(1-\gamma)/t$, the ridge estimate can be thought of as a first-order Taylor approximation of the gradient flow estimate. 
The fact that the ridge estimate depends linearly on $\Ss$, whereas the gradient flow estimate depends exponentially, suggests that elastic gradient descent takes correlations into larger consideration than the elastic net does, with an even stronger tendency to, for standardized data, assign similar parameter values to correlated variables. This is further illustrated in Section \ref{sec:experiments}.

\subsection{The Connection between \texorpdfstring{$\lambda$}{lambda} and \texorpdfstring{$t$}{t}}
\label{sec:lbda_t}
As stated above, lasso and the elastic net have no closed-form solutions, unless for isotropic features, where the elastic net solution is given by Equation \ref{eq:beta_en},
which for lasso ($\alpha=1$) simplifies to
\begin{equation}
\begin{aligned}
\label{eq:beta_l}
\betahat^{\text{lasso}}_d(\lambda)=\sgn(\betahatols_d)\cdot\max(0,|\betahatols_d|-\lambda).
\end{aligned}
\end{equation}
In Proposition \ref{thm:t_lbda} we investigate the connection between Equations \ref{eq:beta_en} (elastic net with isotropic features) and \ref{eq:beta_ef1} (elastic gradient flow), and, as a special case, between Equations \ref{eq:beta_l} (lasso with isotropic features) and \ref{eq:beta_cf1} (coordinate flow) when $\Ss=\Imm$ by requiring $\betahat^\text{\normalfont en}_d(\lambda)=(\betahvegf(t))_d=:\betahat^{\text{\normalfont egf}}_d(t)$.
\begin{proposition}~\\
\label{thm:t_lbda}
Solving $\betahat^\text{\normalfont en}_d(\lambda)=\betahat^{\text{\normalfont egf}}_d(t):=(\betahvegf(t))_d$ for $\Ss=\Imm$, with $\betahvegf(t)$ according to Equation \ref{eq:beta_ef1} and $\betahat^{\text{\normalfont en}}_d(\lambda)$ according to Equation \ref{eq:beta_en}, we obtain
\begin{equation}
\label{eq:t_lbda_en}
\lambda_d=\max\left(\frac{|\betahatols_d|-|\betahat^{\text{\normalfont egf}}_d(t_i)|-\vv}
{\alpha +(1-\alpha) \left(|\betahat^{\text{\normalfont egf}}_d(t_i)|+\vv\right)},0\right),
\end{equation}
where $$\vv=\frac1{1-\alpha}\left(1-\exp\left(-\frac{1-\alpha}{1-\gamma}\int_{t_i}^t(\Ii)_{dd}(\alpha,\tau)d\tau\right)\right)\left(\alpha+(1-\alpha)\left(|\betahatols_d|-|\betahat^{\text{\normalfont egf}}_d(t_i)|\right)\right),$$
which implies $\frac{\partial \lambda_d(t)}{\partial t}\leq 0$.

For $\alpha=1$, Equation \ref{eq:t_lbda_en} reduces to
\begin{equation*}
\begin{aligned}
\label{eq:t_lbda_cf}
\lambda_d=\max\left(|\betahatols_d|-|\betahat^{\text{\normalfont cf}}_d(t_i)|-\frac{t-t_i}{1-\gamma}(\Icfi)_{dd},0\right).
\end{aligned}
\end{equation*}
\end{proposition}
We note that while for gradient flow, the relationship between $\lambda$ and $t/(1-\gamma)$ is approximately the multiplicative inverse, $\lambda\approx (1-\gamma)/t$, for coordinate flow it is approximately the additive inverse, $\lambda \approx -t/(1-\gamma)$.  For the elastic net, it is something in between. 
Furthermore, for the elastic net, the relationship between $\lambda$ and $t$ depends on $\int_{t_i}^t(\Iegfi)_{dd}(\alpha,\tau)d\tau$. When this integral is close to zero for a parameter, the relationship between $\lambda_d$ and $t$ becomes almost linear, while for a larger value, the relation becomes almost exponential. Since the value of the integral may vary with $d$, for a given value of $\alpha$ the relation between $\lambda_{d}$ and $t$ might be almost linear for some parameters and almost exponential for others. Furthermore, since $\Iegfi$ is recalculated at times $t_i$, for some time $t_i$ the relation might change between almost linear and almost exponential for a parameter.

In summary, Proposition \ref{thm:t_lbda} reveals that, while always decreasing with optimization time, the rate of the decrease of the regularization might vary substantially, both between parameters and during optimization.

\subsection{The Effect of Momentum}
\label{sec:momentum}
Momentum \citep{polyak1964some} is a way to introduce memory into gradient-based optimization methods. The idea is to increase the computational stability and speed by allowing not only for current, but also for past, gradient values to influence the update direction, analogous to how a ball rolls down a slope: with increased momentum (and speed), it does not respond immediately to changes in the slope.
Introducing momentum, Equation \ref{eq:cgd} generalizes into
$$\betahv(t+\Delta t)=\betahv(t)+\gamma\left(\betahv(t)-\betahv(t-\Delta t)\right)-\Delta t\cdot \gv(t),$$
where $\gamma \in [0,1)$ is the strength of the momentum.

For elastic gradient flow, and its special cases gradient and coordinate flow, $\gamma>0$ has the effect of rescaling the gradient selection matrix. $\Igf$ is replaced by $\frac1{1-\gamma}\Igf$ (where $\Igf=\Imm$), $\Icf$ by $\frac1{1-\gamma}\Icf$, and $\Iegf$ by $\frac1{1-\gamma}\Iegf$. Thus, for (very) small step sizes, momentum does not affect the solution path, it just increases the speed. 

For larger step sizes, the addition of momentum may, in addition to increasing the computational speed, change the solution path. With momentum, the gradient values at the beginning of the training contribute more to the solution, than without it. For elastic gradient descent, in the early stages of training many parameters have small gradient values and are not yet included in the model. This suggests that elastic gradient descent with momentum would promote sparser models, compared to elastic gradient descent without momentum, something that is supported by the experiments in Section \ref{sec:experiments} and Appendix \ref{sec:more_exps}.

\section{Experiments}
\label{sec:experiments}
In this section, we compare elastic gradient descent with and without momentum to the elastic net and coordinate descent on twelve different data sets. In order to illustrate the path differences between elastic gradient descent and the elastic net as discussed in Section \ref{sec:path_diff}, we use a very simple data set with only three variables, and the diabetes data set used by \citet{efron2004least}.\footnote{Available at \url{https://web.stanford.edu/~hastie/Papers/LARS/diabetes.data}.}
We then compare model selection accuracy and performance on a synthetic data set consisting of two blocks of parameters, where one block is included in the true model, and the other is not, for different correlations. Finally, we compare the performance of the algorithms on nine relatively large real data sets.

For elastic gradient and coordinate descent, a step size of 0.01 was used in all experiments except for the first, simple experiment, where 0.001 was used. We stopped the training when the training error no longer decreased, which, for $\alpha>0$, eventually happens according to Proposition \ref{thm:conv_anal}.
For elastic gradient descent with momentum, we consistently used $\gamma=0.5$.
For the elastic net the \texttt{enet\_path} method in the Scikit-learn library \citep{scikit-learn} was used.
All experiments, except those in Section \ref{sec:synth_exp} (and the corresponding additional experiments in Appendix \ref{sec:more_exps}), were run in Python on a Dell Latitude 7480 laptop, with an Intel Core i7, 2.80 GHz processor with four kernels. The experiments in Section \ref{sec:synth_exp}, and the corresponding experiments in the appendix, were run on a cluster with Intel Xeon Gold 6130, 2.10 GHz processors.

\subsection{Solution Paths for Simple Synthetic Data}
\label{sec:exp_diff}
To illustrate the different path properties of elastic gradient descent and the elastic net, 1000 observations were generated according to
$$\Xm\sim\N\left(\bm{0},
\begin{bmatrix}
1 & 0.7 & 0.7\\
0.7 & 1 & 0.7\\
0.7 & 0.7 & 1\\
\end{bmatrix}
\right),\ 
\yv=\Xm
\begin{bmatrix}
1\\ 0.1\\ 0\\
\end{bmatrix}.$$

The solution paths for four different values of $\alpha$ are shown in Figure \ref{fig:diff_demo}. For $\alpha=0$, due to the correlations in the data, initially, all parameter estimates aim toward values somewhere between 0 and 1. As $t$ increases ($\lambda$ decreases), the estimates start approaching their true values. Elastic gradient descent is more affected by the correlations, i.e.\ the parameter estimates move together for a larger fraction of the solution path, than the elastic net is, which is in line with the observations in Section \ref{sec:corrs}, and also tends to affect the model selection properties.
While elastic gradient descent includes the true positive $\beta_2$ for the entire solution paths for $\alpha=0.5$ and $\alpha=0.7$, this is not the case for the elastic net. On the other hand, for $\alpha=0.5$, elastic gradient descent erroneously includes true negative $\beta_3$ for a larger fraction of the solution path than the elastic net does.
As $\alpha$ increases, in both cases, the maximum values of the paths $\beta_2$ and $\beta_3$ are reduced, but while elastic gradient descent "cuts the peak" from above, the elastic net translates the entire path downward. The "peak cutting" behavior of elastic gradient descent comes from the fact that the gradient is the smallest just before changing sign, at the top of the peak. 

\begin{figure}
  \center
  \includegraphics[width=1.\textwidth]{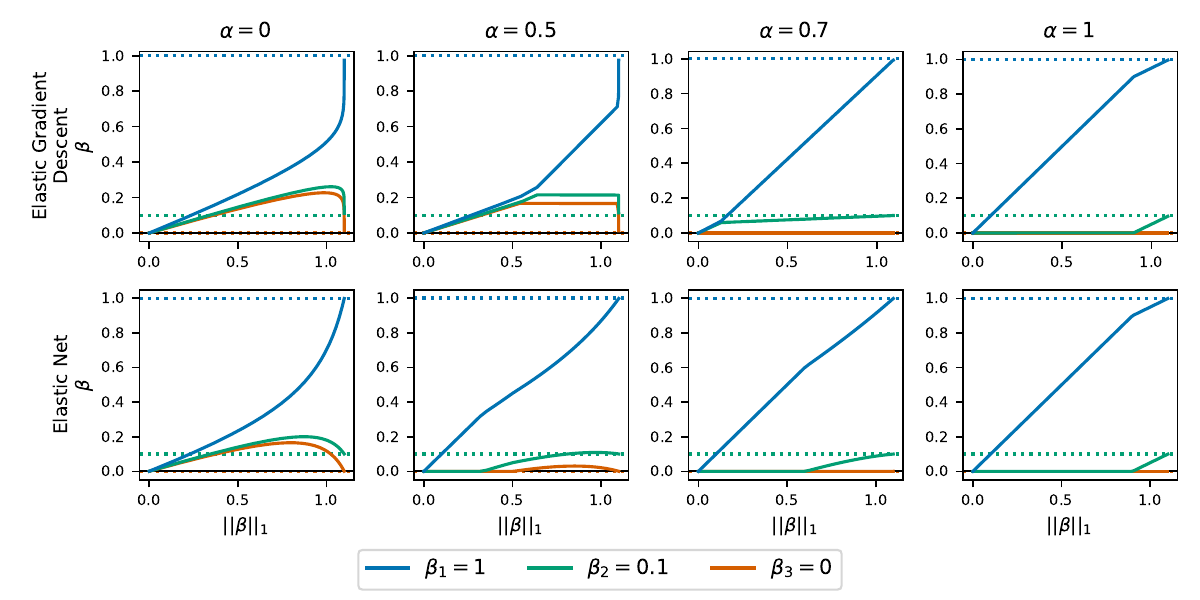}
  \caption{Comparison between elastic gradient descent (without momentum) and the elastic net on highly correlated data. In both cases, the drift toward 1 of the second parameter reduces with increasing $\alpha$, but while elastic gradient descent cuts the peak from above, the elastic net moves the entire path downward. In contrast to the elastic net, elastic gradient descent correctly includes $\beta_2$ for the whole solution paths when $\alpha=0.5$ and $\alpha=0.7$, but erroneously includes $\beta_3$ for a larger fraction of the solution path for $\alpha=0.5$.}
  \label{fig:diff_demo}
\end{figure}

\subsection{Solution Paths for the Diabetes Data}
\label{sec:diabetes}
The diabetes data set contains 442 observations, each consisting of 1 target value, which measures disease progression, and the 10 covariates \texttt{age}, \texttt{sex}, \texttt{bmi} (body mass index), \texttt{bp} (average blood pressure), \texttt{tc} (t-cells), \texttt{ld} (low-density lipoproteins), \texttt{hdl} (high-density lipoproteins), \texttt{tch} (thyroid stimulating hormone), \texttt{ltg} (lamotrigine) and \texttt{glu} (blood sugar level).

In Figure \ref{fig:diab_paths_zoom}, we show the solution paths of elastic gradient descent, without momentum, and the elastic net for two different values of $\alpha$. Similar to in Figure \ref{fig:diff_demo}, elastic gradient descent cuts peaks from above, while the elastic net translates them toward zero. It can be seen how this difference makes the algorithms behave differently for small values of $\|\betav\|_1$. While elastic gradient descent tends to include a subset of the parameters in the model immediately, the inclusion of the same set is more spread out for the elastic net. This contributes to elastic gradient descent proposing fewer models along the solution path than the elastic net does. Excluding the empty model, elastic gradient descent proposes 3 different models for $\alpha=0.3$ and 7 models for $\alpha=0.7$. The corresponding numbers for the elastic net are 10 and 11. If it were to be taken into account that the elastic net proposes the same model at different, non-adjacent sections along the path, its numbers would be even higher. This suggests that in terms of model selection, elastic gradient descent is more robust with respect to the degree of penalization than the elastic net is.

\begin{figure}
  \center
  \includegraphics[width=1.\textwidth]{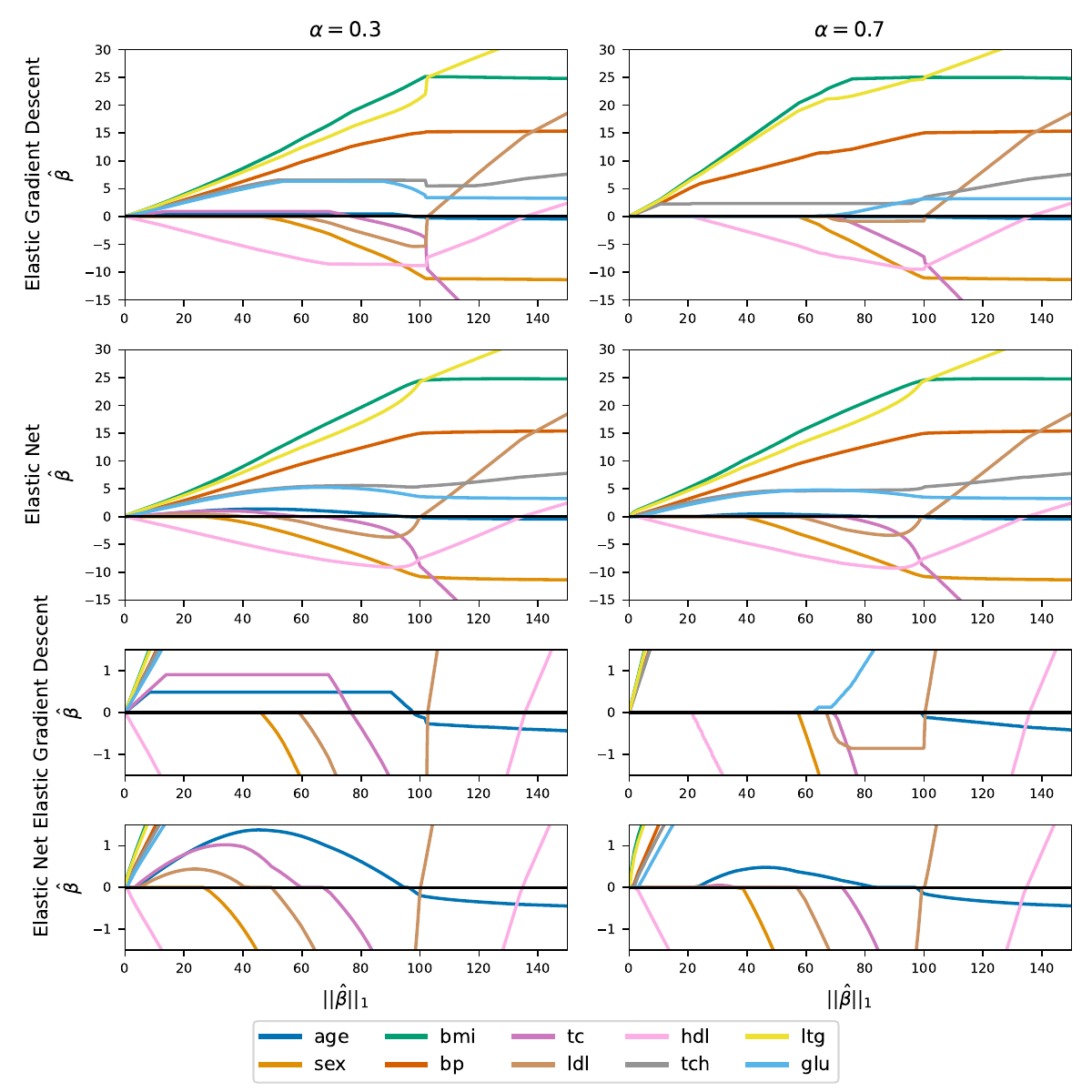}
  \caption{Solution paths for elastic gradient descent (without momentum) and the elastic net on the diabetes data. Rows three and four show the same things as rows one and two but on different y-scales. While elastic gradient descent cuts peaks from above, the elastic net translates them toward zero. Elastic gradient descent is more robust in terms of model selection with respect to the degree of penalization, proposing fewer different models along the solution path than the elastic net does.}
  \label{fig:diab_paths_zoom}
\end{figure}

In Figure \ref{fig:diab_paths_grad}, we compare the solution paths and the normalized gradients for elastic gradient descent with $\alpha=0.5$, coordinate descent, and the elastic net. Note that the bottom right panel does not show the gradients of the elastic net, since there are none, but instead the gradients of the elastic gradient flow solution. Compared to coordinate descent, elastic gradient descent includes more parameters earlier, which is in line with the motivation behind the elastic net to include correlated covariates together.
Studying the gradients, it can be seen how the parameters are split into three sets, which we refer to as the free, coupled, and inactive sets, see Appendix \ref{sec:flow} for details. The free parameters all have normalized gradient values, $|g_d|/\|\gv\|_\infty$, larger than $\alpha$, and are updated freely. This group includes the maximum gradient parameter with $|g_m|/\|\gv\|_\infty=1$. For the inactive parameters, $|g_d|/\|\gv\|_\infty<\alpha$ and these parameters are not updated as can be seen in the first column. For the coupled parameters, $|g_d|/\|\gv\|_\infty$ oscillates around (for the descent algorithms) or equals (for the flow algorithm) $\alpha$. The coupled parameters are still updated but at a slower pace than the free ones. For coordinate descent, there are no free parameters, only coupled and inactive.
Toward the end of the training, when $\|\gv\|_\infty$ is small, we see oscillations in the gradients for coordinate descent and elastic gradient descent, which is in line with the conclusions from Proposition \ref{thm:conv_anal}.
\begin{figure}
  \center
  \includegraphics[width=1.\textwidth]{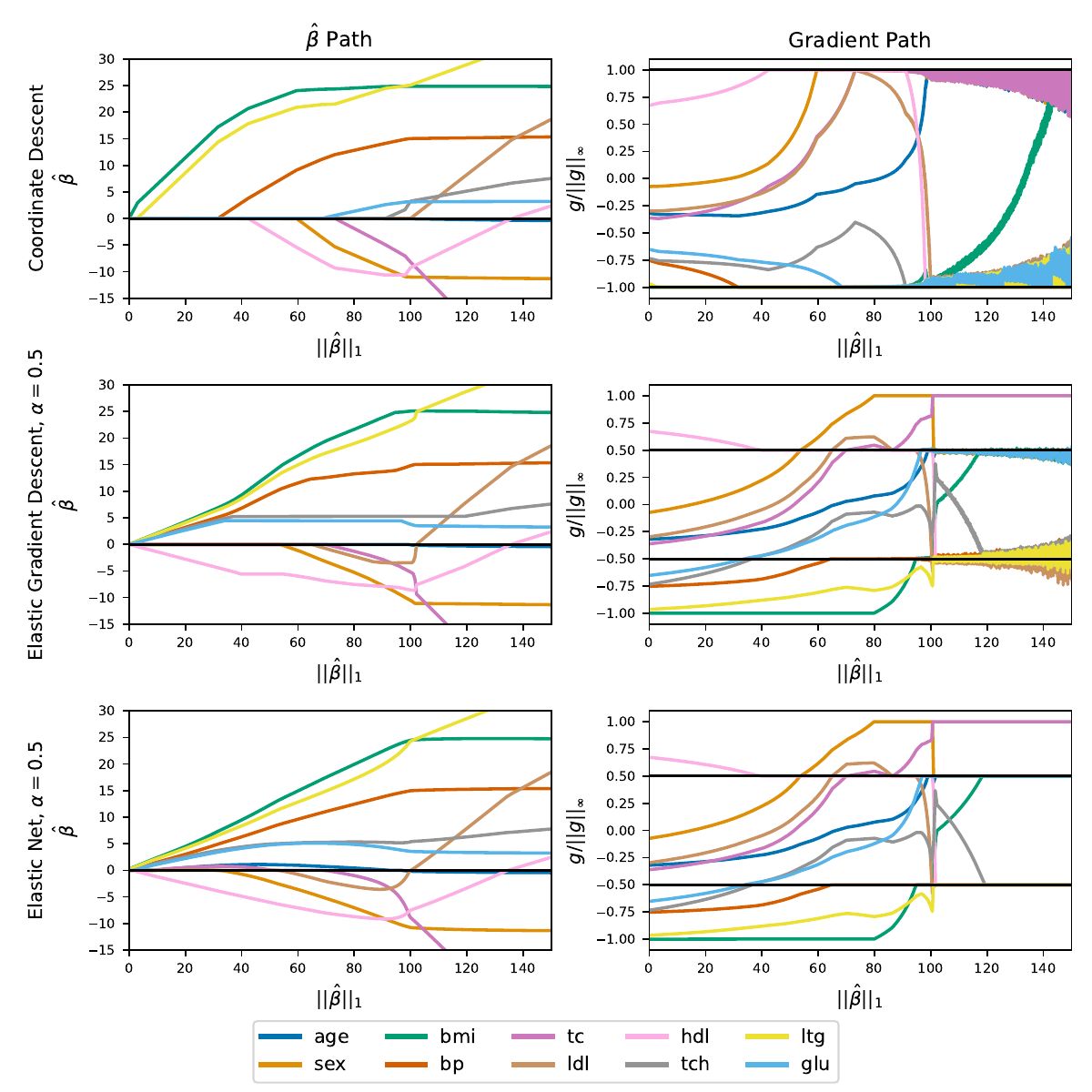}
  \caption{Solution paths and normalized gradients for coordinate descent, elastic gradient descent (without momentum), and the elastic net on the diabetes data. Note that the bottom right pane shows the gradients of elastic gradient flow.  We see how the parameters are split into the free, coupled, and inactive sets. Depending on whether $|g_d|$ is greater than, equal to, or smaller than $\alpha\cdot\|\gv\|_\infty$, the parameters update either freely, in a coupled fashion, or not at all, respectively. For instance, the \texttt{tch} parameter initially has an absolute normalized gradient value larger than $\alpha=0.5$ and updates freely. At $\|\betahv\|_1\approx 40$, the absolute normalized gradient becomes less than $\alpha$ and the parameter is not updated at all until $\|\betahv\|_1\approx 120$. Then it starts updating in a coupled fashion.}
  \label{fig:diab_paths_grad}
\end{figure}

\subsection{Synthetic Data for Model Selection}
\label{sec:synth_exp}
To compare model selection and performance, the following synthetic data set was created: The variables were split into two blocks of equal length, where the first block was included in the true model, and the second was not. The parameter values of the true positive variables were normally distributed with mean 2 and variance 1, the correlations within the two blocks were set to $\rho_1$, and between the two blocks to $\rho_2$:
\begin{equation*}
\begin{aligned}
\bm{\beta^*}&=\begin{bmatrix} \N(2,1)_{p/2}^\top & \bm{0}_{p/2}^\top\end{bmatrix}^\top\\
\bm{\Sigma_{11}}&=\bm{\Sigma_{22}}=\rho_1\cdot(\bm{1}\bm{1}^\top)_{p/2\times p/2}+(1-\rho_1)\cdot\Imm_{{p/2}\times {p/2}}\\
\bm{\Sigma_{12}}&=\bm{\Sigma_{12}}^\top=\rho_2\cdot(\bm{1}\bm{1}^\top)_{p/2\times p/2}\\
\bm{\Sigma}&= \begin{bmatrix} \bm{\Sigma_{11}} & \bm{\Sigma_{12}}\\ \bm{\Sigma_{12}}^\top & \bm{\Sigma_{11}} \end{bmatrix},
\end{aligned}
\end{equation*}
where $\Imm_{{p/2}\times {p/2}}$ denotes the $(p/2)\times (p/2)$ identity matrix, $(\bm{1}\bm{1}^\top)_{{p/2}\times {p/2}}$ denotes a $(p/2)\times (p/2)$ matrix of only ones, $\N(\cdot, \cdot)_{p/2}$ denotes an i.i.d.\ vector of length $p/2$ and $\bm{0}_{p/2}$ a vector of length $p/2$ with all zeros.
For $n=100/30/30$, where the three values of $n$ denote training, validation and testing sets, $n$ observations were sampled according to
\begin{equation*}
\begin{aligned}
\Xm&\sim\N(\bm{0},\bm{\Sigma}),\ \yv=\Xm\bm{\beta^*}+ \N(\bm{0},\sigma^2\Imm),
\end{aligned}
\end{equation*}
for $\sigma=10$, $\rho_1=0.7$, $\rho_2=0.3$ and $p\in[50,60,\dots, 200]$.
For each value of $p$, the experiment was repeated 5001 times for different data realizations. 
For elastic gradient descent and the elastic net, nine different values of $\alpha$ were considered, $\alpha\in\{0.1,\ 0.2,\dots\ 0.9\}$, and the combination of $(\alpha,t/\lambda)$ with the lowest mean squared error, MSE, on validation data was selected. For coordinate descent, where always $\alpha=1$, $t$ was selected by validation MSE.

The following test statistics were computed and compared between the three models:
\begin{itemize}
\item
Sensitivity (true positive rate).
\item
Specificity (true negative rate).
\item
Estimation error, $\frac1{n^*}\|\bm{X^*}\betahv-\bm{X^*}\bm{\beta^*}\|_2$, where $\bm{X^*}\in \R^{n^*\times p}$ is previously unseen data.
\item
Prediction error, $\frac1{p}\|\betahv-\bm{\beta^*}\|_2$.
\item
Execution time in seconds.
\end{itemize}

Figure \ref{fig:p_sweep} shows the median values together with the first and third quartiles across the 5001 realizations, for the different test statistics.

Elastic gradient descent and the elastic net perform similarly in all aspects except computational time, where elastic gradient descent performs significantly faster. The computational performance of elastic gradient descent improves with momentum. For high-dimensional data ($p>n=100$), where no unique solution exists, momentum also greatly improves the model specificity. These two results are in line with the discussion in Section \ref{sec:momentum}, according to which momentum increases the computational speed and promotes a sparser solution. The elastic net is more stable than elastic gradient descent in terms of specificity, at least in the absence of momentum, where elastic gradient descent, although performing well in general, sometimes includes all true negatives. In Appendix \ref{sec:more_exps}, we further examine the specificity properties by varying the experiment so that the number of non-zero parameters is constant when $p$ increases.

Compared to coordinate descent, elastic gradient descent performs better in all aspects except for specificity. The higher specificity of coordinate descent, however, comes at the cost of much worse sensitivity, and prediction and estimation errors.
The execution times of elastic gradient descent and the elastic net include testing for nine different values of $\alpha$, while for coordinate descent only one value of $\alpha$ is considered. Still, coordinated descent requires more computational time than elastic gradient descent. This can be attributed to the fact that coordinate descent updates only one parameter per iteration, something that becomes more apparent when $p$ is large.

The computational time of elastic gradient descent is less affected by the dimensionality than those of the elastic net and coordinate descent. Since the elastic net and coordinate descent algorithms only update one parameter per iteration, the dimensionality has quite a large impact on the execution time of these algorithms. On the other hand, elastic gradient descent may update multiple parameters per iteration, and the execution time is thus less affected by the dimensionality.

In Appendix \ref{sec:more_exps}, we extend the simulation, presenting results for all combinations of $\rho_1\in[0.5,\ 0.6,\ 0.7,\ 0.8,\ 0.9,]$, $\rho_2\in[0.0,\ 0.1,\ 0.2,\ 0.3,$ $0.4,\ 0.5]$ and $p\in[50,100,200]$. The conclusions are consistent with the ones presented here.

\begin{figure}
  \center
  \includegraphics[width=1.\textwidth]{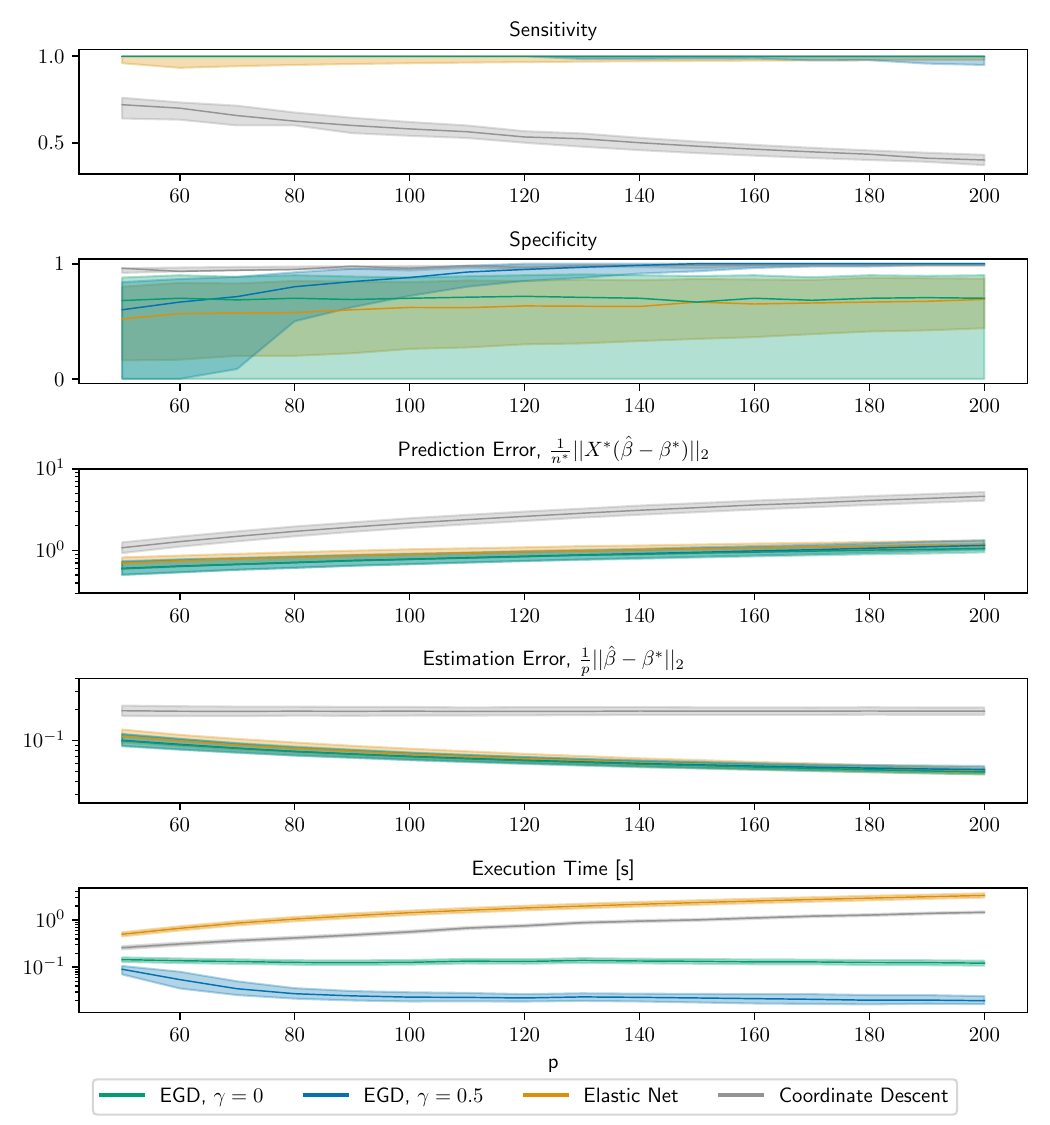}
  \caption{Median and first and third quartiles for the sensitivity, specificity, prediction and estimation errors, and execution time in seconds. Compared to the elastic net, elastic gradient descent performs similarly, except for execution time, where it is much faster. Compared to coordinate descent, elastic gradient descent performs better in all aspects except specificity. Elastic gradient descent performs faster, and has better specificity (especially when $p>n$), with momentum than without. The signal-to-noise ratio increases with the dimensionality and is, for some different values of $p$, $p=50$: 18.5, $p=100$: 72.5, $p=150$: 162, $p=200$: 287.}
  \label{fig:p_sweep}
\end{figure}

\subsection{Computational Efficiently on Real Data sets}
In this section, we compare elastic gradient descent with and without momentum to the elastic net and coordinate descent on the nine real data sets described in Table \ref{tab:datasets}. The data sets were selected to compare the algorithms on a diverse set of applications, although they all have in common that they are relatively large in terms of number of observations and/or dimensions.
The data was split 80\%/10\%/10\% into training, validation, and testing data for 5001 random splits. 
For elastic gradient descent and the elastic net, nine different values of $\alpha$ were considered, $\alpha\in\{0.1,\ 0.2,\dots\ 0.9\}$, and the combination of $(\alpha,t/\lambda)$ with the lowest mean squared error, MSE, on validation data was selected. For coordinate descent, where always $\alpha=1$, $t$ was selected by validation MSE.

\begin{table}
\centering
\begin{tabular}{|l|l|}
\hline
Data set & Size, $n\times p$\\
\hline
\hline
Quality of aspen tree fibres\tablefootnote{Available at \url{https://openmv.net/info/wood-fibres}.} & $25165\times 5$\\
\hline
House values in California \citep{pace1997sparse}\tablefootnote{Available at \url{https://www.dcc.fc.up.pt/~ltorgo/Regression/cal_housing.html}.} & $20640\times  8$\\
\hline
\makecell[l]{Daily concentration of black smoke particles in the U.K.\\ in the year 2000 \citep{wood2017generalized}\tablefootnote{Available at \url{https://www.maths.ed.ac.uk/~swood34}.}} & $45568\times 10$\\
\hline
Results of the 2019 Portuguese Parliamentary Elections\tablefootnote{Available at \url{https://archive.ics.uci.edu/dataset/513/real+time+election+results+portugal+2019}.} & $21643\times 18$\\
\hline
\makecell[l]{Appliances energy use in a low energy building in\\Stambruges, Belgium \citep{candanedo2017data}\tablefootnote{Available at \url{https://github.com/LuisM78/Appliances-energy-prediction-data}.}} & $19735\times 27$\\
\hline
\makecell[l]{Protein structure as root-mean-square deviation of\\ atomic positions, taken from CASP\tablefootnote{\url{https://predictioncenter.org/}, Available at \url{https://archive.ics.uci.edu/ml/datasets/Physicochemical+Properties+of+Protein+Tertiary+Structure}.}} & $45730\times 9$\\
\hline
Critical temperature of superconductors\tablefootnote{Available at \url{https://archive.ics.uci.edu/dataset/464/superconductivty+data}.} & $21263\times 81$\\
\hline
Readability of texts used in English education\tablefootnote{Available at \url{https://www.kaggle.com/code/uocoeeds/building-a-regression-model-with-elastic-net/input}.} & $2834\times 768$\\
\hline
Topic popularity on Twitter \citep{kawala2016study}\tablefootnote{Available at \url{http://archive.ics.uci.edu/dataset/248/buzz+in+social+media}.} & $291624\times 77$\\
\hline
\end{tabular}
\caption{Real data sets used for comparing elastic gradient descent to the elastic net and coordinate descent.}
\label{tab:datasets}
\vspace{-.5cm}
\end{table}
The results are presented in Table \ref{tab:real_data}, where we compare execution time in seconds, R$^2$ (proportion of explained variation) on test data, and model size (number of non-zeros in $\betahv$), and are in line with those of Section \ref{sec:synth_exp}.

Elastic gradient descent and the elastic net perform similarly, apart from elastic gradient descent being up to three orders of magnitude faster. In contrast to coordinate descent, which is executed once, elastic gradient descent is executed nine times, for nine different values of $\alpha$. Still, for all data sets except the California housing and protein structure data sets it performs faster than nine times the speed coordinate descent (with momentum, it is faster for all data sets). When $p$ is large, elastic gradient descent tends to be faster than coordinate descent even in absolute numbers, with the exception of the English readability data.

Coordinate descent tends to select a sparser model, at the expense of a lower R$^2$. This is in line with the higher specificity of coordinate descent observed in Section \ref{sec:synth_exp}.
In addition to increasing the computational performance of elastic gradient descent, there is also a tendency for momentum to promote a sparser model, especially when $p$ is large (again with the exception of the English readability data).

\begin{table}
\centering
\small
\begin{tabular}{|l|l||l|l|l|}
\hline
\makecell{Data set\\$(n\times p)$} & Algorithm & \makecell{Execution Time\\in Seconds} & R$^2$ & Model Size \\
\hline
\hline
\multirow{4}{*}{\makecell{Aspen\\Fibres\\$(25165 \times 5)$}} & EGD, $\gamma=0$ & $0.16,\ (0.13, 0.18)$ & $0.49,\ (0.47, 0.51)$ & $5,\ (5, 5)$ \\
& EGD, $\gamma=0.5$                                                    & $0.11,\ (0.082, 0.13)$ & $0.49,\ (0.47, 0.51)$ & $5,\ (5, 5)$ \\
& Elastic Net                                                          & $6.2,\ (5.7, 6.4)$ & $0.49,\ (0.47, 0.51)$ & $5,\ (5, 5)$ \\
& CD                                                                   & $0.023,\ (0.013, 0.031)$ & $0.48,\ (0.46, 0.50)$ & $3,\ (3, 3)$ \\
\hline
\multirow{4}{*}{\makecell{California\\Housing\\$(20640 \times 8)$}} & EGD, $\gamma=0$ & $1.1,\ (1.1, 1.2)$ & $0.60,\ (0.59, 0.61)$ & $8,\ (8, 8)$ \\
& EGD, $\gamma=0.5$                                                    & $0.62,\ (0.57, 0.69)$ & $0.60,\ (0.59, 0.62)$ & $8,\ (7, 8)$ \\
& Elastic Net                                                          & $11,\ (9.7, 11)$ & $0.60,\ (0.59, 0.62)$ & $8,\ (7, 8)$ \\
& CD                                                                   & $0.083,\ (0.053, 0.11)$ & $0.57,\ (0.55, 0.59)$ & $5,\ (4, 6)$ \\
\hline
\multirow{4}{*}{\makecell{U.K.\ Black\\Smoke\\$(45568 \times 10)$}} & EGD, $\gamma=0$ & $0.26,\ (0.22, 0.30)$ & $0.14,\ (0.13, 0.14)$ & $10,\ (10, 10)$ \\
& EGD, $\gamma=0.5$                                                    & $0.20,\ (0.17, 0.23)$ & $0.14,\ (0.13, 0.14)$ & $10,\ (10, 10)$ \\
& Elastic Net                                                          & $27,\ (27, 28)$ & $0.14,\ (0.13, 0.15)$ & $10,\ (10, 10)$ \\
& CD                                                                   & $0.039,\ (0.022, 0.056)$ & $0.13,\ (0.13, 0.14)$ & $7,\ (7, 7)$ \\
\hline
\multirow{4}{*}{\makecell{Portugese\\Elections\\$(21643 \times 18)$}} & EGD, $\gamma=0$ & $0.61,\ (0.55, 0.67)$ & $0.11,\ (0.087, 0.12)$ & $14,\ (12, 15)$ \\
& EGD, $\gamma=0.5$                                                    & $0.35,\ (0.30, 0.40)$ & $0.11,\ (0.087, 0.12)$ & $12,\ (12, 15)$ \\
& Elastic Net                                                          & $74,\ (66, 83)$ & $0.10,\ (0.087, 0.12)$ & $12,\ (9, 14)$ \\
& CD                                                                   & $0.11,\ (0.083, 0.15)$ & $0.10,\ (0.086, 0.12)$ & $7,\ (5, 7)$ \\
\hline
\multirow{4}{*}{\makecell{Appliances\\Energy Use\\$(19735 \times 27)$}} & EGD, $\gamma=0$ & $4.2,\ (3.9, 4.6)$ & $0.16,\ (0.15, 0.17)$ & $27,\ (27, 27)$ \\
& EGD, $\gamma=0.5$                                                    & $2.9,\ (2.6, 3.1)$ & $0.16,\ (0.15, 0.17)$ & $27,\ (27, 27)$ \\
& Elastic Net                                                          & $83,\ (80, 85)$ & $0.16,\ (0.15, 0.18)$ & $27,\ (27, 27)$ \\
& CD                                                                   & $1.2,\ (1.1, 1.4)$ & $0.078,\ (0.073, 0.083)$ & $5,\ (4, 5)$ \\
\hline
\multirow{4}{*}{\makecell{Protein\\Structure\\$(45730 \times 9)$}} & EGD, $\gamma=0$ & $1.3,\ (1.2, 1.4)$ & $0.24,\ (0.24, 0.25)$ & $7,\ (7, 8)$ \\
& EGD, $\gamma=0.5$                                                    & $0.86,\ (0.73, 0.97)$ & $0.26,\ (0.26, 0.27)$ & $9,\ (9, 9)$ \\
& Elastic Net                                                          & $120,\ (120, 120)$ & $0.28,\ (0.27, 0.29)$ & $9,\ (9, 9)$ \\
& CD                                                                   & $0.11,\ (0.077, 0.14)$ & $0.15,\ (0.15, 0.16)$ & $2,\ (2, 2)$ \\
\hline
\multirow{4}{*}{\makecell{Super-\\conductors\\$(21263 \times 81)$}} & EGD, $\gamma=0$ & $2.6,\ (2.2, 2.8)$ & $0.70,\ (0.70, 0.71)$ & $81,\ (81, 81)$ \\
& EGD, $\gamma=0.5$                                                    & $1.1,\ (0.93, 1.2)$ & $0.66,\ (0.66, 0.67)$ & $77,\ (76, 77)$ \\
& Elastic Net                                                          & $340,\ (330, 360)$ & $0.72,\ (0.71, 0.73)$ & $64,\ (64, 65)$ \\
& CD                                                                   & $1.5,\ (1.2, 1.7)$ & $0.62,\ (0.62, 0.63)$ & $15,\ (14, 15)$ \\
\hline
\multirow{4}{*}{\makecell{English\\Readability\\$(2834 \times 768)$}} & EGD, $\gamma=0$ & $1.7,\ (1.5, 1.7)$ & $0.69,\ (0.67, 0.71)$ & $105,\ (89, 123)$ \\
& EGD, $\gamma=0.5$                                                    & $1.2,\ (1.0, 1.3)$ & $0.70,\ (0.68, 0.72)$ & $115,\ (97, 144)$ \\
& Elastic Net                                                          & $1400,\ (1300, 1500)$ & $0.73,\ (0.71, 0.74)$ & $304,\ (265, 374)$ \\
& CD                                                                   & $0.26,\ (0.22, 0.32)$ & $0.58,\ (0.56, 0.60)$ & $22,\ (21, 24)$ \\
\hline
\multirow{4}{*}{\makecell{Twitter\\Popularity\\$(291624 \times 77)$}} & EGD, $\gamma=0$ & $48,\ (41, 55)$ & $0.94,\ (0.89, 0.95)$ & $60,\ (57, 61)$ \\
& EGD, $\gamma=0.5$                                                    & $6.0,\ (5.3, 7.0)$ & $0.93,\ (0.89, 0.94)$ & $27,\ (24, 29)$ \\
& Elastic Net                                                          & $5100,\ (4800, 5300)$ & $0.94,\ (0.90, 0.94)$ & $49,\ (48, 50)$ \\
& CD                                                                   & $360,\ (350, 370)$ & $0.93,\ (0.92, 0.94)$ & $8,\ (7, 8)$ \\
\hline
\end{tabular}
\caption{Median and first and third quartiles (within parenthesis) of execution time (in seconds), R$^2$, and model size when applying the three algorithms on the nine real data sets. Elastic gradient descent performs significantly faster than the elastic net. Adding momentum further increases the computational speed. Coordinate descent tends to select a sparser model. While coordinate descent is evaluated only once (for $\alpha=1$), elastic gradient descent and the elastic net are evaluated nine times (for $\alpha\in\{0.1,\dots\ 0.9\}$).}
\label{tab:real_data}
\end{table}

\section{Conclusions}
We proposed elastic gradient descent, a simple-to-implement, iterative optimization method, which generalizes gradient descent and coordinate descent (forward stagewise regression). We also investigated the case of infinitesimal optimization step size, presenting a piecewise analytical solution for solving linear regression with elastic gradient flow. 

We compared elastic gradient descent with and without momentum to the elastic net and coordinate descent, both theoretically and on simulated and real data. Elastic gradient descent and the elastic net provided similar solutions, but with elastic gradient descent being up to three orders of magnitude faster on the investigated data. Compared to coordinate descent, elastic gradient descent selected a model with better performance, although still sparse. In addition to faster performance, adding momentum to elastic gradient descent promotes a sparser model for high dimensional data.

We used elastic gradient descent for standard linear regression. However, it would also be interesting to apply it for classification by extending it to logistic and multinomial regression. Furthermore, the optimization algorithm can be used instead of e.g.\ gradient descent on any optimization problem. For instance, it would be interesting to train a neural network with elastic gradient descent, obtaining a model that grows in complexity with optimization time.

Code is available at \url{https://github.com/allerbo/elastic_gradient_descent}.

\section*{Acknowledgments}
This research was supported by funding from the Swedish Research Council (VR), the Swedish Foundation for Strategic Research, the Wallenberg AI, Autonomous Systems and Software Program (WASP), and the Chalmers AI Research Center (CHAIR).\\
OA would like to thank the editor and an anonymous reviewer, whose suggestions helped to improve the manuscript substantially.

\clearpage
\bibliography{refs}
\bibliographystyle{apalike}
\clearpage

\appendix

\section{Connection to Steepest Descent and the General Stagewise Procedure}
\label{sec:steep_general}
In this section, we redefine elastic gradient descent within the frameworks of steepest descent \citep{boyd2004convex} and the general stagewise procedure \citep{tibshirani2015general}, obtaining two related, but slightly different, flavors of Equation \ref{eq:egd}.

\subsection{Steepest Descent}
Steepest descent \citep{boyd2004convex}, generalizes coordinate and gradient descent. For a given norm $\|\cdot\|$, $\dbetahv$ is given by

\begin{equation}
\label{eq:steepest_desc}
\begin{aligned}
\dbetahvs(t)&=\argmax_{\vv:\ \|\vv\|=1}\gv(t)^\top\vv\\
\betahatv(t+\Delta t)&=\betahatv(t)-\Delta t\cdot \dbetahvs(t).
\end{aligned}
\end{equation}
For the $\ell_2$ norm, steepest descent becomes normalized gradient descent, 
$$\dbetahvgds(t)=\frac{\gv(t)}{\|\gv(t)\|_2}=\frac{\Igd\cdot\gv(t)}{\|\Igd\cdot\gv(t)\|_2},$$
while the $\ell_1$ norm corresponds to coordinate descent, 
$$\dbetahvcds(t)=\dbetahvcd(t)=\Icd(t)\cdot\sgn(\gv(t))=\frac{\Icd(t)\cdot\gv(t)}{\|\Icd(t)\cdot\gv(t)\|_1}.$$
When formulating elastic gradient descent, inspired by Equation \ref{eq:en}, we would like to use $\alpha \|\vv\|_1 +(1-\alpha)\|\vv\|_2^2=1$ in Equation \ref{eq:steepest_desc}, however then there is no analytical solution to the equation. Instead, we use the following strategy to obtain an approximate solution:
\begin{enumerate}
\item Define $\dbetahvegds$ as a generalization of both $\dbetahvcds$ and $\dbetahvgds$, such that
\begin{enumerate}
\item the model selection property of the elastic net is obtained,
\item $\alpha \|\vv\|_1 +(1-\alpha)\|\vv\|_2^2=1$.
\end{enumerate}
\item Within the freedom remaining after step 1, tune $\dbetahvegds$ to, approximately, maximize $\gv^\top\dbetahvegds$.
\end{enumerate}

Combining $\dbetahvgds$ and $\dbetahvcds$ in the same way as was done in Equation \ref{eq:egd}, we define 
\begin{equation}
\label{eq:egds}
\dbetahvegds(t):=\Iegds(\alpha,t)\cdot\gv(t)\left(\frac{\alpha}{\left\|\Iegds(\alpha,t)\cdot\gv(t)\right\|_1}+\frac{1-\alpha}{\left\|\Iegds(\alpha,t)\cdot\gv(t)\right\|_2}\right),
\end{equation}
where $\Iegds$ is a diagonal matrix with zeros and ones on the diagonal, such that $\Iegds(0,t)=\Igd=\bm{I}$ and $\Iegds(1,t)=\Icd(t)$. However, $\Iegds$ is not necessarily identical to $\Iegd$.

We define $p_1(t)\in[1,p]$ to be the number of ones in $\Iegds$, i.e.\ the number of parameters that are updated at time $t$:
\begin{definition}[$p_1$]
\label{dfn:p1}
$$p_1(\alpha,t):=\sum_{d=1}^p\left(\Iegds(\alpha,t)\right)_{dd}.$$
\end{definition}
As optimization proceeds toward convergence, all gradient components approach zero, and thus each other. This means that $p_1$ increases (i.e.\ more parameters are updated), but not necessarily monotonically, toward $p$. However, for $\alpha>0$, some absolute gradient components may oscillate around $\alpha\cdot|g_m|$, being updated in one time step, but not in the next. In that case, we may have $p_1<p$ during the entire training. Also, note that $p_1$ is not explicitly defined; its value is a consequence of Definition \ref{dfn:p1}. 

Now, to obtain $\alpha \left\|\dbetahvegds\right\|_1 +(1-\alpha)\left\|\dbetahvegds\right\|_2^2=1$, $\dbetahvegds$ needs to be scaled, as specified in Proposition \ref{thm:h_alpha_sd}.
\begin{proposition}~\\
\label{thm:h_alpha_sd}
\begin{equation*}
\begin{aligned}
&\text{Let } q_1(t):=\left(\frac{\|\Iegds(\alpha,t)\cdot\gv(t)\|_1}{\|\Iegds(\alpha,t)\cdot\gv(t)\|_2}\right)^2 \textrm{ and let }\\
&c_\alpha(t):=\frac{\sqrt{q_1(t)\cdot(\alpha^2q_1(t)+4(1-\alpha))}-\alpha \cdot q_1(t)}{2(1-\alpha)\left(\sqrt{q_1(t)}\cdot(1-\alpha)+\alpha\right)}.\\
&\text{Then }\alpha \cdot \left\|c_\alpha(t)\dbetahvegds(t)\right\|_1 +(1-\alpha)\cdot \left\|c_\alpha(t)\dbetahvegds(t)\right\|_2^2=1.
\end{aligned}
\end{equation*}
\end{proposition}
$c_\alpha$ depends both on $\alpha$ and the quotient between the $\ell_1$ and $\ell_2$ norms in a quite complicated form. However, according to Proposition \ref{thm:eg_norm_sd}, in the absence of $c_\alpha$ the distance from 1 is still bounded:

\begin{proposition}~\\
\label{thm:eg_norm_sd}
For $\alpha\in [0,1]$, $1\leq p_1\leq p$
\begin{equation*}
\begin{aligned}
0.61&<1-\alpha(1-\alpha)(2-\alpha)\cdot\left(1-\frac 1{p_1(t)}\right)\\
&\leq \alpha\left\|\dbetahvegds(t)\right\|_1 +(1-\alpha)\left\|\dbetahvegds(t)\right\|_2^2\\
&\leq1+\alpha(1-\alpha)\cdot\left(\sqrt{p_1(t)}-1\right)\leq1+\frac{\sqrt{p_1(t)}-1}4.
\end{aligned}
\end{equation*}
\end{proposition}

What remains to do, is to select $\Iegds$ to maximize $c_\alpha\gv^\top\dbetahvegds$. Since $\gv^\top\Iegds\gv=\gv^\top\Iegds\Iegds\gv=\left\|\Iegds\gv\right\|_2^2$, and since $c_\alpha\geq 0$ for $q_1\geq 0$ and $\alpha \in[0,1]$, maximizing $c_\alpha\gv^\top \dbetahvegds$ amounts to maximizing
\begin{equation*}
\label{eq:max_sd}
\gv(t)^\top\dbetahvegds(t)=\alpha\frac{\left\|\Iegds(\alpha,t)\cdot\gv(t)\right\|_2^2}{\left\|\Iegds(\alpha,t)\cdot\gv(t)\right\|_1}+(1-\alpha)\left\|\Iegds(\alpha,t)\cdot\gv(t)\right\|_2.
\end{equation*}
The second term trivially increases with $p_1$ while, according to Lemma \ref{thm:decr_p1}, the first term decreases with $p_1$.
\begin{lemma}~\\
\label{thm:decr_p1}
$$\frac{\|\Iegds\cdot\gv\|_2^2}{\|\Iegds\cdot\gv\|_1}\textrm{ is a decreasing function in } p_1.$$
\end{lemma}
The exact trade-off between the two terms depends on the gradient at the specific time step and has no general solution. However, when $\alpha$ is large we want $\left\|\Iegds\right\|_2^2/\left\|\Iegds\right\|_1$ to be large, i.e. we want $p_1$ to be small. When $\alpha$ is small, we want $\left\|\Iegds\right\|_2$ to be large, i.e. we want $p_1$ to be large. This desire is consistent with how we defined $\Iegd$ in Definition \ref{dfn:iegd}, and we thus define $\Iegds:=\Iegd$. This means that we use the same gradient selection matrix as in the original formulation of elastic gradient descent.

\subsection{The General Stagewise Procedure}
The general stagewise procedure \citep{tibshirani2015general} is formulated similarly to steepest descent, but while the purpose of steepest descent just is to find the optimal solution, in the general stagewise procedure, the entire solution path is of interest. Here, the norm in the constraint is replaced by any convex function, $h$, and the optimization step size, $\Delta t$, is incorporated into $\dbetahv$:
\begin{equation*}
\label{eq:general_stagewise}
\begin{aligned}
\dbetahvg(t)&=\argmax_{h(\vv)\leq\Delta t}\gv(t)^\top\vv\\
\betahatv(t+\Delta t)&=\betahatv(t)-\dbetahvg(t).
\end{aligned}
\end{equation*}

In this framework we obtain
\begin{equation*}
\begin{aligned}
&\dbetahvcdg(t)=\Delta t\cdot \frac{\Icd(t)\cdot\gv(t)}{\|\Icd(t)\cdot\gv(t)\|_1}=\Delta t\cdot\dbetahvcd(t)\\
&\dbetahvgdg(t)=\sqrt{\Delta t}\cdot\frac{\gv(t)}{\|\gv(t)\|_2}=\sqrt{\Delta t}\cdot\frac{\Igd\cdot\gv(t)}{\|\Igd\cdot\gv(t)\|_2}=\sqrt{\Delta t}\cdot\dbetahvgds(t),
\end{aligned}
\end{equation*}
which suggests
\begin{equation}
\label{eq:egdg}
\dbetahvegdg(t):=\Iegd(\alpha,t)\cdot\gv(t)\left(\frac{\alpha\Delta t}{\left\|\Iegd(\alpha,t)\cdot\gv(t)\right\|_1}+\frac{(1-\alpha)\sqrt{\Delta t}}{\left\|\Iegd(\alpha,t)\cdot\gv(t)\right\|_2}\right).
\end{equation}
The analogs of Propositions \ref{thm:h_alpha_sd} and \ref{thm:eg_norm_sd} in this framework are presented in Propositions \ref{thm:h_alpha_gs} and \ref{thm:eg_norm_gs}.

\begin{proposition}~\\
\label{thm:h_alpha_gs}
\begin{equation*}
\begin{aligned}
&\text{Let } \dbetahvegdgc(t) :=\Iegd(\alpha,t)\cdot\gv(t)\left(\frac{\alpha\cdot c_{\alpha,\Delta t}(t)\cdot\Delta t}{\left\|\Iegd(\alpha,t)\cdot\gv(t)\right\|_1}+\frac{(1-\alpha)\sqrt{c_{\alpha,\Delta t}(t)\cdot\Delta t}}{\left\|\Iegd(\alpha)\cdot\gv(t)\right\|_2}\right)\\
&\text{for } c_{\alpha,\Delta t}(t):=\\
&\left(\frac{\sqrt{2\alpha\sqrt{q_1\cdot(\alpha^2q_1 + 4\Delta t(1-\alpha))} + q_1\cdot((1-\alpha)^3-2\alpha^2)} - (1-\alpha)\sqrt{q_1\cdot(1-\alpha)}}{\alpha\sqrt{4\Delta t(1-\alpha)}}\right)^2,\\
&\text{where } q_1=q_1(t):=\left(\frac{\|\Iegd(\alpha,t)\cdot\gv(t)\|_1}{\|\Iegd(\alpha,t)\cdot\gv(t)\|_2}\right)^2.\\
&\text{Then }\alpha \cdot\left\|\dbetahvegdgc(t)\right\|_1 +(1-\alpha)\cdot\left\|\dbetahvegdgc(t)\right\|_2^2=\Delta t.
\end{aligned}
\end{equation*}
\end{proposition}

\begin{proposition}~\\
\label{thm:eg_norm_gs}
For $\alpha\in [0,1]$, $1\leq p_1\leq p$
\begin{equation*}
\begin{aligned}
0.61\cdot\Delta t&<\Delta t\left(1-\alpha(1-\alpha)(2-\alpha)\cdot\left(1-\frac {\Delta t}{p_1(t)}\right)\right)\\
&\leq \alpha \left\|\dbetahvegdg(t)\right\|_1 +(1-\alpha)\left\|\dbetahvegdg(t)\right\|_2^2\\
&\leq\Delta t\left(1+\alpha(1-\alpha)\cdot\left(\sqrt{\frac{p_1(t)}{\Delta t}}-1\right)\right)\leq\Delta t\left(1+\frac{\sqrt{p_1(t)/\Delta t}-1}4\right).
\end{aligned}
\end{equation*}
\end{proposition}

\subsection{Comparing the Formulations}
Compared to the original formulation of elastic gradient descent in Equation \ref{eq:egd}, both the steepest descent and general stagewise formulations differ in the normalization of the second term. Apart from that, the general stagewise formulation uses different step sizes for the coordinate and gradient descent contributions, where the difference grows with smaller $\Delta t$ (assuming $\Delta t<1$).

We also note that compared to Proposition \ref{thm:eg_norm_sd}, in Proposition \ref{thm:eg_norm_gs}, $p_1$ is replaced by $p_1/\Delta t$. This means that if $\Delta t$ is small, $\left\|\dbetahvegdg\right\|_1 +(1-\alpha)\left\|\dbetahvegdg\right\|_2^2$ might deviate quite much from $\Delta t$.

According to our empirical experience, however, all flavors of elastic gradient descent, i.e. Equation \ref{eq:egd}, and Equations \ref{eq:egds} and \ref{eq:egdg} with and without scaling, provide virtually identical solution paths. 
In Figure \ref{fig:diab_path_flavs}, we compare the solution paths for the diabetes data for the different flavors of elastic gradient descent with $\alpha=0.5$. The paths, displayed in the first column, are hardly, if at all, distinguishable. The second column shows the normalized gradients. Just as for the solution paths, the gradients evolve very similarly between the five versions, even though some differences are visible.

The third column shows how $\alpha\left\|\dbetahv\right\|_1+(1-\alpha)\left\|\dbetahv\right\|_2^2=: h_\alpha(\dbetahv)$ deviates from 1 (or $\Delta t$), together with the two bounds provided by Propositions \ref{thm:eg_norm_sd} and \ref{thm:eg_norm_gs} and with $p_1$, which is displayed on a different y-scale. It can be seen how $h_\alpha$ is exactly 1 ($\Delta t$) in the scaled case and how it stays within the bounds in the unscaled case. As expected, $h_\alpha$ deviates more from $\Delta t$ in the general stagewise framework than it does from 1 in the steepest descent framework. For standard gradient descent, $h_\alpha$ deviates a lot from 1. It can also be noted how the effective value of $p_1$ is always strictly lower than $p=10$, which can be attributed to the oscillations around $\pm\alpha\cdot|g_m|$, i.e.\ the coupled parameters are included at some time steps and excluded at other.

\begin{figure}
  \center
  \includegraphics[width=1.\textwidth]{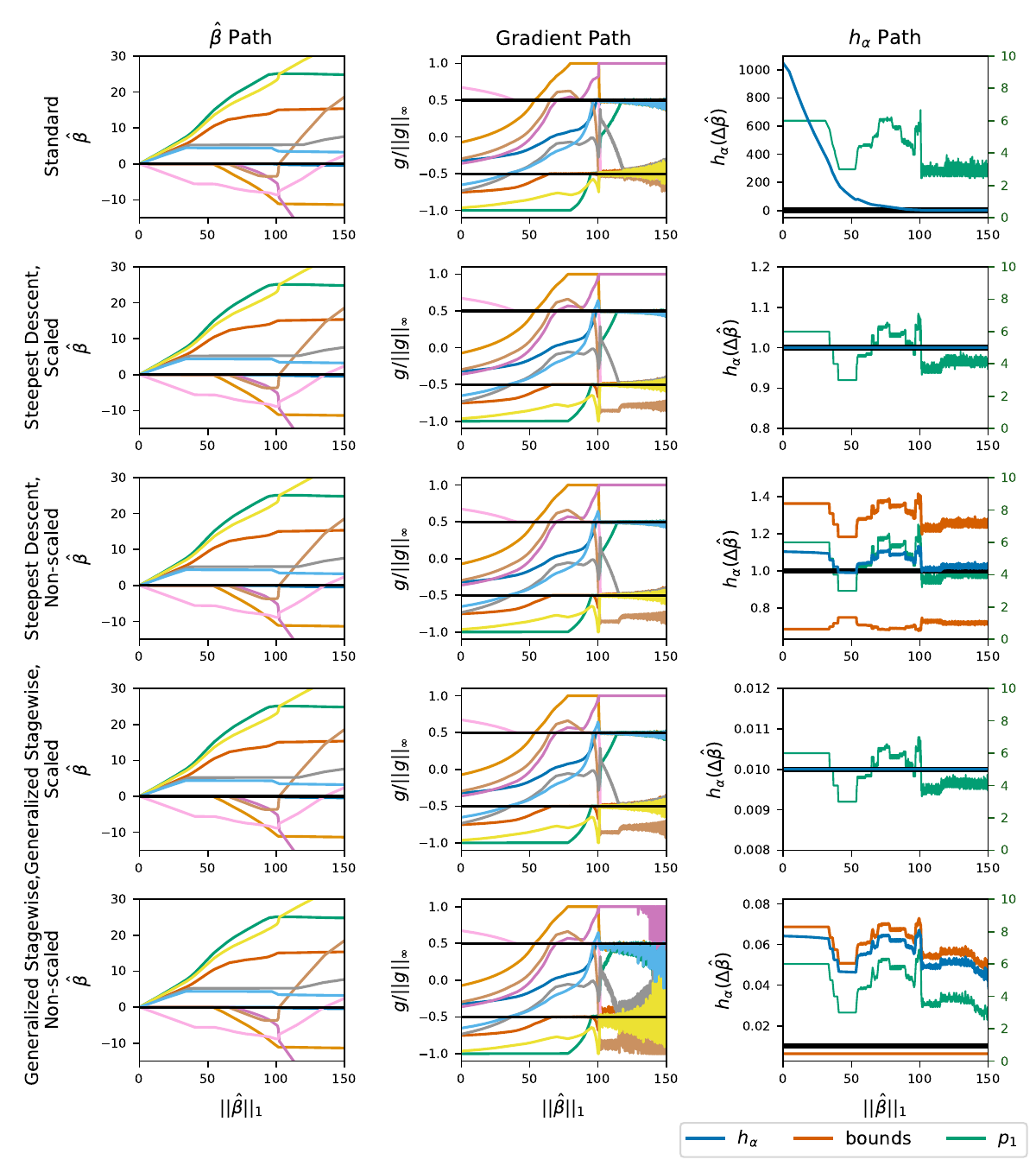}
  \caption{Solution paths for elastic gradient flow and the different flavors of elastic gradient descent with step size 0.01, without momentum, at $\alpha=0.5$ on the diabetes data. The second column shows the normed gradients, with black lines at $\pm \alpha$. The third column shows how $h_\alpha(\dbetahv)$ deviates from 1 (or $\Delta t$), together with bounds from Propositions \ref{thm:eg_norm_sd} and \ref{thm:eg_norm_gs}. On the right y-axis, $p_1$ is plotted. To increase readability a moving average with width 9 was applied to the graphs in the third column.}
  \label{fig:diab_path_flavs}
\end{figure}

\section{Additional Experiments}
\label{sec:more_exps}
In this section, we provide additional experiments.
\subsection{Illustration of Momentum Induced Sparsity}
In Figure \ref{fig:path_demo_momentum}, we compare the solution paths of elastic gradient descent with and without momentum, for a step size of 0.01. The inertia introduced by the momentum causes $\beta_2$ to be zero for a longer time, (it is included into the model at a smaller reconstruction error) than when no momentum is used.
\begin{figure}
  \center
  \includegraphics[width=.5\textwidth]{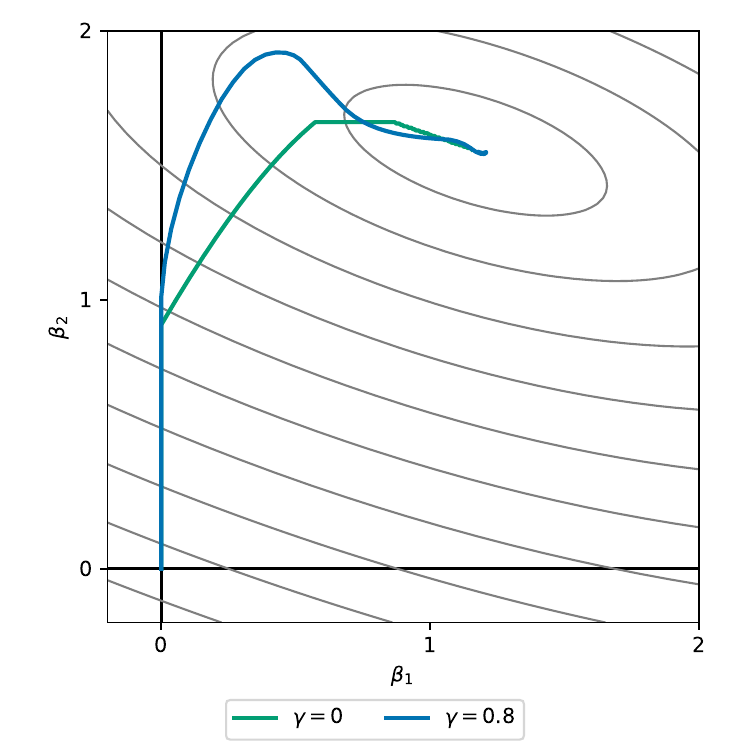}
  \caption{Solution paths of elastic gradient descent with and without momentum. With momentum, $\beta_2$ is included into the model later, i.e.\ for a smaller reconstruction error, than without momentum.}
  \label{fig:path_demo_momentum}
\end{figure}

\subsection{Synthetic Data for Model Selection}
In this section, we extend the simulation of Section \ref{sec:synth_exp}. 

To further investigate the specificity of the algorithms, we changed the experiment setup, so that we always used 40 non-zero parameters, with the remaining $p-40$ parameters being zero, keeping all other aspects the same. The results are presented in Figure \ref{fig:p_sweep2}. This time, momentum no longer provides the same advantage in specificity as before. 
This can probably be attributed to the fact that in the first case, when the number of non-zero parameters grows, so does the value of the maximum initial gradient, which makes it harder for a zero parameter to be erroneously included, especially with momentum when early stages of training matter more, and thus leads to a better specificity. On the other hand, in this second case, when the number of non-zero parameters, and thus the maximum initial gradient, is constant while the number of parameters to possibly erroneously include increases, this advantage of momentum is less prominent.

\begin{figure}
  \center
  \includegraphics[width=1.\textwidth]{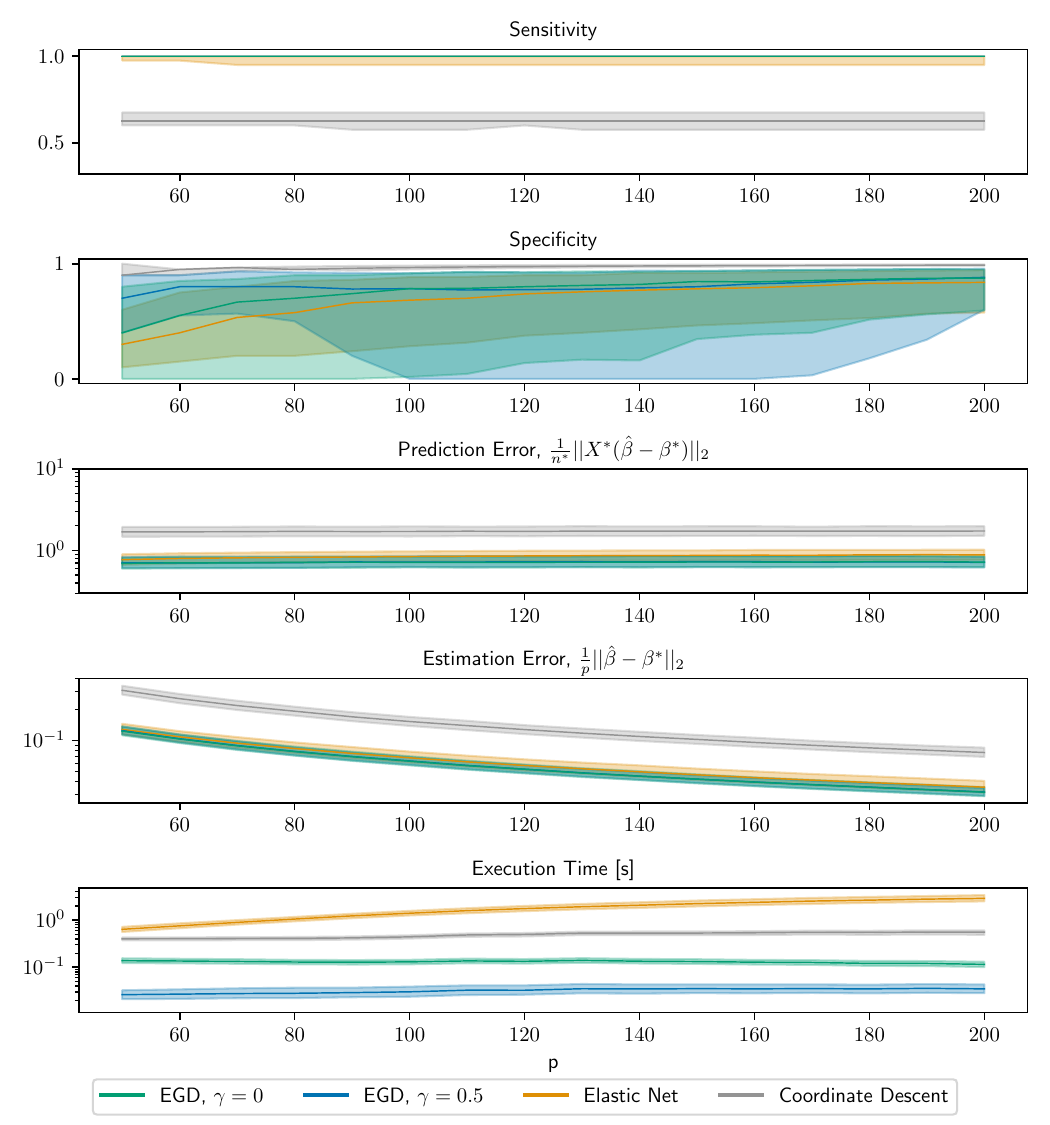}
  \caption{Median and first and third quartiles for the sensitivity, specificity, prediction and estimation errors, and execution time in seconds. Compared to in Figure \ref{fig:p_sweep}, momentum no longer provides the same advantage in specificity as before. The signal-to-noise ratio is always between $46.4$ and $46.9$.}
  \label{fig:p_sweep2}
\end{figure}

We also extended the experiments of Section \ref{sec:synth_exp} by using the same setup as in the original experiments, except using $\rho_1\in[0.5,\ 0.6,\ 0.7,\ 0.8,\ 0.9,]$, $\rho_2\in[0.0,\ 0.1,\ 0.2,\ 0.3,$ $0.4,\ 0.5]$ and $p\in[50,\ 100,\ 200]$.

The results are presented in Figures \ref{fig:rho_sweep_50}, \ref{fig:rho_sweep_100} and \ref{fig:rho_sweep_200}. The signal-to-noise ratios of the problems, which were essentially constant with respect to $\rho_2$, are stated for the different values of $\rho_1$.

The conclusions are consistent with those from Section \ref{sec:synth_exp}: Elastic gradient descent and the elastic net perform similarly in all aspects except computational time, where elastic gradient descent performs significantly faster. The computational performance of elastic gradient descent improves with momentum. For high-dimensional data ($p>n=100$), where no unique solution exists, momentum also greatly improves the model specificity. 
Compared to coordinate descent, elastic gradient descent performs better in all aspects except for specificity. The higher specificity of coordinate descent, however, comes at the cost of much worse sensitivity, and prediction and estimation errors.

\begin{figure}
  \center
  \includegraphics[width=1.\textwidth]{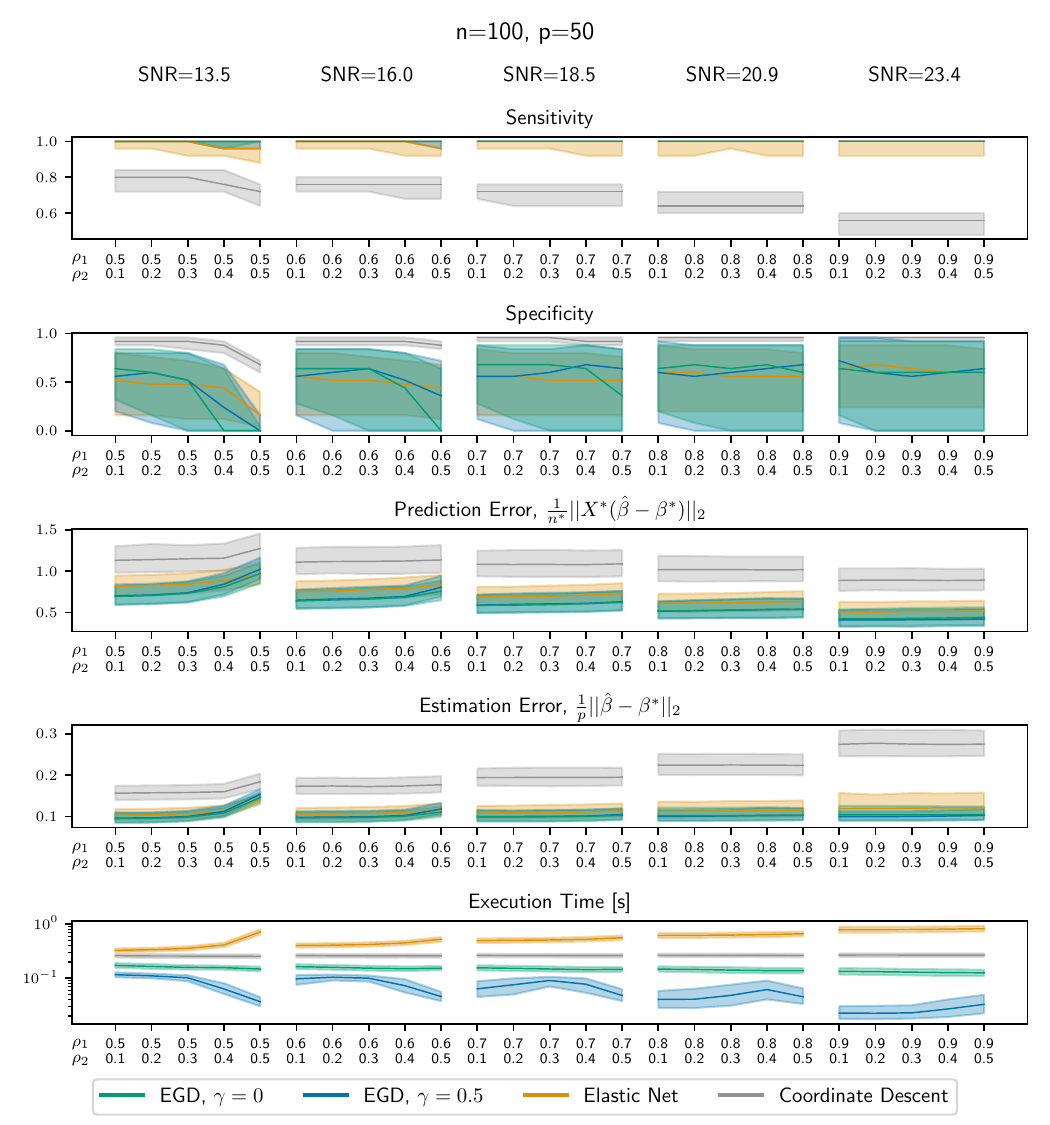}
  \caption{Median and first and third quartiles for the sensitivity, specificity, prediction and estimation errors, and execution time in seconds, in the low-dimensional case. The value of $\rho_1$ is constant within each of the five panels in each row, while $\rho_2$ varies. The signal-to-noise ratio of the problem, which is essentially constant with respect to $\rho_2$, is stated for the different values of $\rho_1$. Compared to the elastic net, elastic gradient descent performs similarly, except for execution time, where it is much faster. Compared to coordinate descent, elastic gradient descent performs better in all aspects except specificity. Elastic gradient descent performs faster with momentum than without.}
  \label{fig:rho_sweep_50}
\end{figure}

\begin{figure}
  \center
  \includegraphics[width=1.\textwidth]{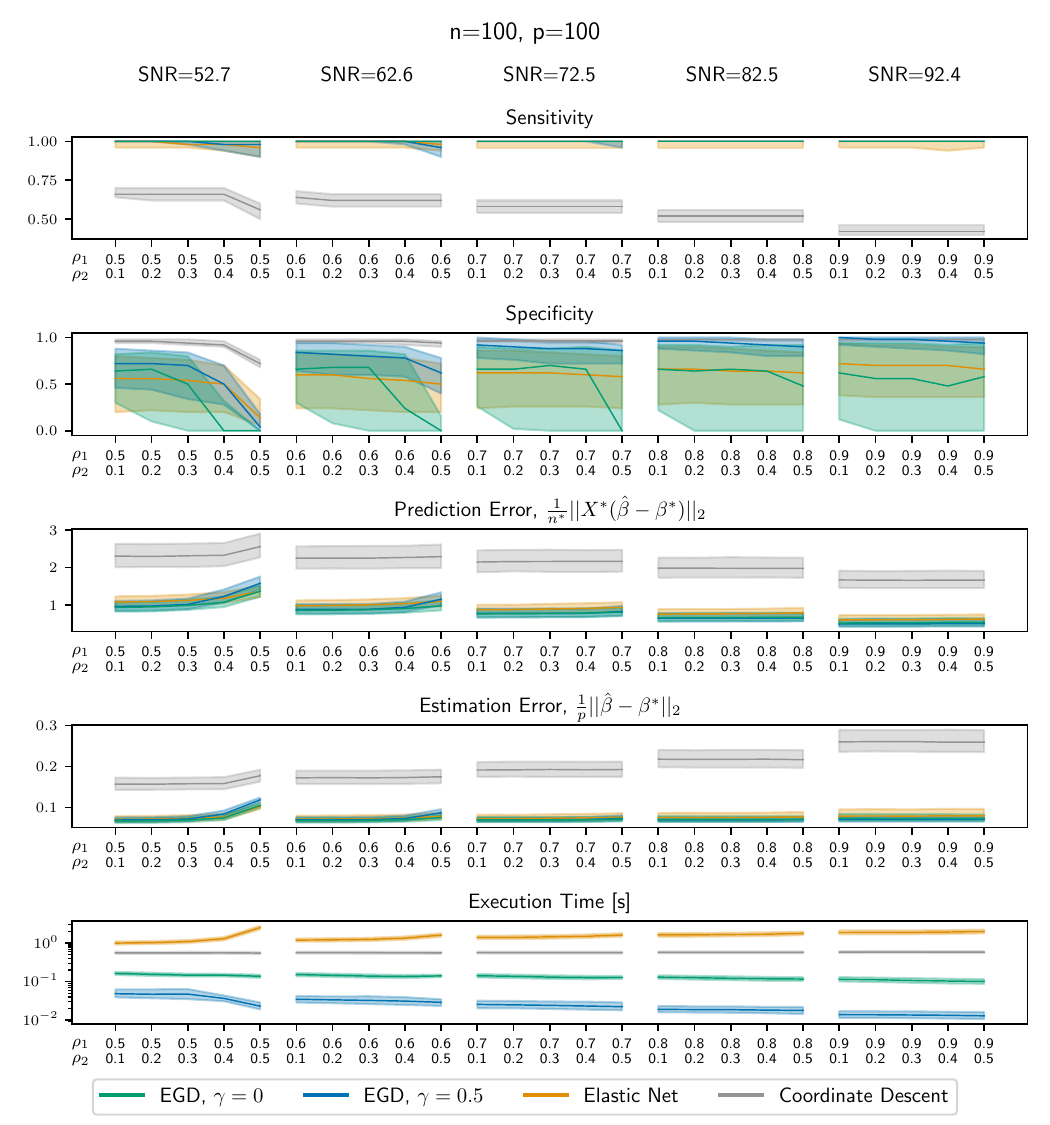}
  \caption{Median and first and third quartiles for the sensitivity, specificity, prediction and estimation errors, and execution time in seconds, when $n=p$. The value of $\rho_1$ is constant within each of the five panels in each row, while $\rho_2$ varies. The signal-to-noise ratio of the problem, which is essentially constant with respect to $\rho_2$, is stated for the different values of $\rho_1$. Compared to the elastic net, elastic gradient descent performs similarly, except for execution time, where it is much faster. Compared to coordinate descent, elastic gradient descent performs better in all aspects except specificity. Elastic gradient descent performs faster, and has better specificity, with momentum than without.}
  \label{fig:rho_sweep_100}
\end{figure}

\begin{figure}
  \center
  \includegraphics[width=1.\textwidth]{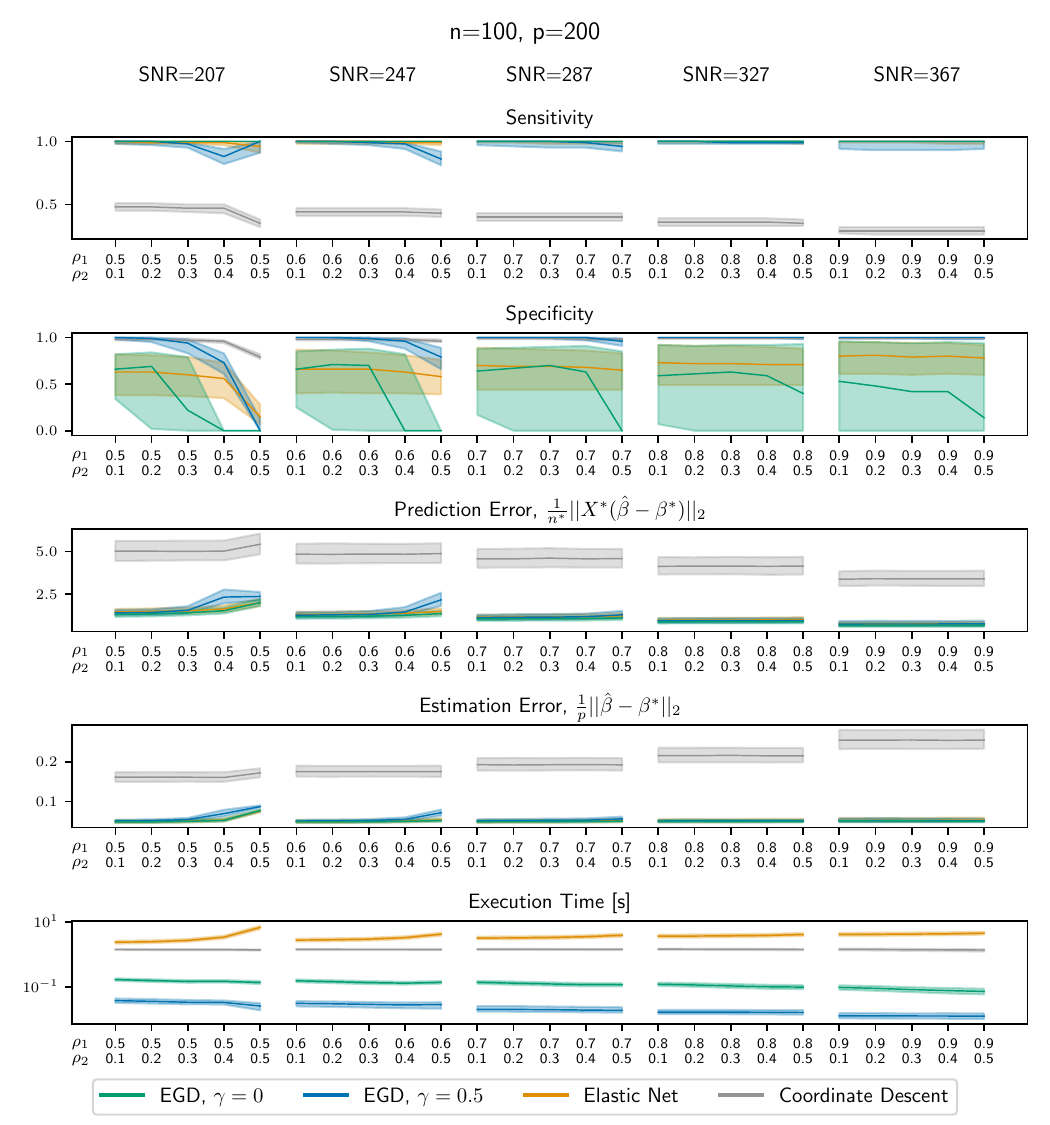}
  \caption{Median and first and third quartiles for the sensitivity, specificity, prediction and estimation errors, and execution time in seconds, in the high-dimensional case. The value of $\rho_1$ is constant within each of the five panels in each row, while $\rho_2$ varies. The signal-to-noise ratio of the problem, which is essentially constant with respect to $\rho_2$, is stated for the different values of $\rho_1$. Compared to the elastic net, elastic gradient descent performs similarly, or better, except for execution time, where it is much faster. Compared to coordinate descent, elastic gradient descent performs better in all aspects except specificity. Elastic gradient descent performs faster, and has better specificity, with momentum than without.}
  \label{fig:rho_sweep_200}
\end{figure}

\section{Elastic Gradient Flow}
\label{sec:flow}
In this section, we investigate the limits as the step size, $\Delta t$, goes to zero when solving linear least squares with gradient, coordinate, and elastic gradient descent with momentum. 
Gradient descent with infinitesimal step size is known as gradient flow, and analogously we use the terms coordinate flow and elastic gradient flow. 
We start by reviewing gradient flow and then consider coordinate flow and elastic gradient flow, where the latter generalizes both gradient and coordinate flow. Since it is not obvious that the limits of $\Icd(t)$ and $\Iegd(t)$ exist as $\Delta t\to 0$, coordinate and elastic gradient flow are presented as well-motivated definitions rather than as theorems, with motivations in Section \ref{constr:flows}. 

In the following, subscript with respect to a set denotes the sub-matrix (or sub-vector) specified by the indices in the set, $(\cdot)^{(k)}$ denotes the time derivative of order $k$, and $\odot$ denotes element-wise multiplication.

For linear least squares, the gradient at time $t$ is
\begin{equation}
\label{eq:grad}
\gv(t):=\nabla_{\betahv(t)}\left(\frac1{2n}\|\yv-\Xm\betahv(t)\|_2^2\right)=\frac 1n\Xm^\top(\Xm\betahv(t)-\yv)=-\Ss(\betahatolsv-\betahv(t)),
\end{equation}
where $\Ss:=\frac 1n \Xm^\top \Xm$ is the empirical covariance matrix and $\betahatolsv:=(\Xm^\top\Xm)^+\Xm^\top\yv$ (where $(\cdot)^+$ denotes the Moore-Penrose pseudoinverse) is the minimum norm ordinary least squares solution for $t=\infty$. The last equality in Equation \ref{eq:grad} follows from $\Xm^\top=(\Xm^\top\Xm)(\Xm^\top\Xm)^{+}\Xm^\top$.

\subsection{Gradient Flow}
\label{sec:grad_flow}
When linear regression is solved using gradient descent with momentum, $\betahv$ is updated iteratively according to
\begin{equation}
\label{eq:grad_desc}
\begin{aligned}
\betahv(t+\Delta t)=&\betahv(t)+\gamma\left(\betahv(t)-\betahv(t-\Delta t)\right)-\Delta t\cdot \gv(t)\\
=&\betahv(t)+\gamma\left(\betahv(t)-\betahv(t-\Delta t)\right)+\Delta t\cdot \Ss(\betahatolsv-\betahv(t)).
\end{aligned}
\end{equation}
Moving all but the last term to the left-hand side, dividing by $\Delta t$ and then letting $\Delta t\to 0$ results in the differential equation
$$(1-\gamma)\cdot\frac{\partial\betahv(t)}{\partial t}=\Ss(\betahatolsv-\betahv(t)),$$
which has the solution
$$\betahv(t)=\betahatolsv-\exp\left(-\frac t{1-\gamma}\Ss\right)(\betahatolsv-\betahnollv),$$
where $\exp$ denotes the matrix exponential.

For $\betahnollv=\bm{0}$ the gradient flow estimate becomes
\begin{equation}
\label{eq:beta_gf}
\betahvgf(t)=\left(\Imm-\exp\left(-\frac t{1-\gamma}\Ss\right)\right)\betahatolsv.
\end{equation}

\subsection{Coordinate Flow}
\label{sec:coord_flow}
When linear regression is solved using coordinate descent with momentum, $\betahv$ is updated iteratively according to
\begin{equation}
\label{eq:coord_desc}
\begin{aligned}
\betahv(t+\Delta t)=&\betahv(t)+\gamma\left(\betahv(t)-\betahv(t-\Delta t)\right)-\Delta t \cdot \Icd(t)\cdot\sgn(\gv(t))\\
=&\betahv(t)+\gamma\left(\betahv(t)-\betahv(t-\Delta t)\right)+\Delta t \cdot\Icd(t)\cdot\sgn(\Ss(\betahatolsv-\betahv(t)).
\end{aligned}
\end{equation}
When the magnitudes of two or more gradient components are all close to the maximum gradient value, the parameter to update changes in almost every time step, that is the 1 in $\Icd$ changes position in almost every time step. As $\Delta t \to 0$, we would like to replace $\Icd$ with a coordinate flow version, $\Icf$. In contrast to $\Icd$, where only one diagonal element is 1 and the rest are 0, for $\Icf$ we allow multiple diagonal elements to be non-zero, as long as the sum of the diagonal elements is 1. The only 1 in $\Icd$ is now distributed along the diagonal of $\Icf$, with $(\Icf)_{dd}>0$ if and only if $d$ belongs to the active set, i.e.\ the set of indices between which the 1 in $\Icd$ alters.
This leads us to define coordinate flow according to Definition \ref{dfn:coord_flow}, where details behind the definition are presented in Section \ref{constr:coord_flow}.

The coordinate flow estimate, $\betahvcf(t)$, changes linearly in time with a slope controlled by the piece-wise constant matrix $\Icf$. At certain times, $t_i$, $\Icf$ is updated to a new constant matrix. These times correspond to changes in the active set, $S_A$, which specifies which parameters are updated, namely the ones with non-zero slopes.
Since $\betahv$ being linear in $t$ implies that also $\|\betahv\|_1$ is linear in $t$, coordinate flow can be thought of as a dual formulation of the forward stagewise version LARS algorithm, providing $\betahv$ as a function of $t$ rather than of $\|\betahv\|_1$.

\begin{definition}[Coordinate Flow]~
\label{dfn:coord_flow}
\begin{itemize}
\item $\betahvcf(0)=\bm{0}$, $t_0=0$, $t_\imax$ is the time of convergence, i.e.\ $\betahvcf(t_\imax)=\betahatolsv$.
\item For $0<i<\imax$,
\begin{equation}
\label{eq:beta_cf}
\betahvcf(t)=\betahvcf(t_i)+\frac{t-t_i}{1-\gamma}\Icfi\cdot\sgn\left(\Ss(\betahatolsv-\betahvcf(t_i))\right),\quad t\in [t_i,t_{i+1}),
\end{equation}
where 
$\{\Icfi\}_{i=0}^\imax$ are constant diagonal matrices, with non-zero diagonal components given by
$$\left(\Icfi\right)_{S^i_A,S^i_A}=\textrm{diag}\left((\bm{B}_{:,1})^{-1}\odot\sgn\left(\Ss(\betahatolsv-\betahvcf(t_i))\right)_{S^i_A}\right),$$ 
where $\bm{B}$ is a square matrix, stated in the construction (Section \ref{constr:coord_flow}), that depends on $\sgn\left(\Ss(\betahatolsv-\betahvcf(t_i))\right)_{S^i_A}$ and $\Ss_{S^i_A,S^i_A}$.
\item $S^i_A:=\left\{d: \left(\Icfi\right)_{dd}>0\right\}$.
\item
The times $\{t_i\}_{i=0}^\imax$, when $\Icfi$ is updated are given by
\begin{equation*}
\begin{aligned}
&t_0=0\\
&\Delta t_{i,d,1}=\frac{\left(\Ss_{d,:}-\Ss_{m_1,:}\right)(\betahatolsv-\betahvcf(t_i))}{\left(\Ss_{d,:}-\Ss_{m_1,:}\right)\Icfi\cdot\sgn\left(\Ss(\betahatolsv-\betahvcf(t_i))\right)}\\
&\Delta t_{i,d,2}=\frac{\left(\Ss_{d,:}+\Ss_{m_1,:}\right)(\betahatolsv-\betahvcf(t_i))}{\left(\Ss_{d,:}+\Ss_{m_1,:}\right)\Icfi\cdot\sgn\left(\Ss(\betahatolsv-\betahvcf(t_i))\right)}\\
&t_{i+1}=t_i+\min_{\substack{d\notin S^i_A,\ k=1,2\\ \Delta t_{i,d,k}>0}}\Delta t_{i,d,k}.
\end{aligned}
\end{equation*}
\end{itemize}
\end{definition}

\subsection{Elastic Gradient Flow}
\label{sec:el_flow}
When linear regression is solved using elastic gradient descent, $\betahv$ is updated iteratively according to
\begin{equation}
\label{eq:en_desc}
\begin{aligned}
\betahv(t+\Delta t)=&\betahv(t)+\gamma\left(\betahv(t)-\betahv(t-\Delta t)\right)- \Delta t \cdot\Iegd(\alpha,t)\left(\alpha\cdot\sgn(\gv(t))+(1-\alpha)\cdot\gv(t)\right)\\
=&\betahv(t)+\gamma\left(\betahv(t)-\betahv(t-\Delta t)\right)\\
&+ \Delta t \cdot\Iegd(\alpha,t)\left(\alpha\cdot\sgn\left(\Ss(\betahatolsv-\betahv(t))\right)+(1-\alpha)\cdot\Ss(\betahatolsv-\betahv(t))\right).
\end{aligned}
\end{equation}
Compared to coordinate descent, where only the parameter with maximum gradient is updated, this time the parameters with large enough, but not necessarily the largest, gradients are updated. Just as for coordinate descent and flow, we replace $\Iegd(\alpha,t)$ with $\Iegf(\alpha,t)$. Again, $\Iegf(\alpha,t)$ is a diagonal matrix with $\left(\Iegf(\alpha,t)\right)_{dd}\in [0,1]$, but in contrast to $\Icf(t)$ it is not piece-wise constant.

We define elastic gradient flow according to Definition \ref{dfn:en_flow}, where details behind the definition are presented in Section \ref{constr:en_flow}. The active and inactive sets of coordinate flow are now generalized to the free, coupled, and inactive sets for elastic gradient flow.
Similar to coordinate flow, $\Iegf$ is recalculated at certain times, $t_i$. However, between the times of recalculation only the entries corresponding to the free and inactive sets, $(\Iegf)_{S_F\cup S_0,S_F\cup S_0}$, remain constant, while the entries corresponding to the coupled set, $(\Iegf)_{S_C,S_C}$, change with time on the interval $(0,1)$.
\begin{definition}[Elastic Gradient Flow]~
\label{dfn:en_flow}
~
\begin{itemize}
\item $\betahvegf(0)=\bm{0}$, $t_0=0$, $t_\imax$ is the time of convergence, i.e.\ $\betahvegf(t_\imax)=\betahatolsv$.
\item For $0<i<\imax$,
\begin{equation}
\begin{aligned}
\label{eq:beta_ef}
\betahvegf(t)=& \betahvegf(t_i)+ \left((1-\alpha)\Ss\right)^{-1}\left(\Imm-\exp\left(\Omi(t_i,t)\right)\right)\\
&\cdot\left(\alpha\cdot\sgn\left(\Ss(\betahatolsv-\betahvegf(t_i))\right)+(1-\alpha)\cdot\Ss\left(\betahatolsv-\betahvegf(t_i)\right)\right),\\
&t\in [t_i,t_{i+1}),
\end{aligned}
\end{equation}
where 
\begin{itemize}
\item 
$\Omi(t_i,t)$ is the Magnus expansion \citep{magnus1954exponential} of $-\frac{1-\alpha}{1-\gamma}\Ss\Iegfi(\alpha,t)$, 
\item
$\left\{\Iegfi(\alpha,t)\right\}_{i=0}^\imax$ are diagonal matrices with elements in $[0,1]$, such that
\begin{itemize}
\item
$\left(\Iegfi\right)_{S^i_F,S^i_F}(\alpha,t)=\Imm$.
\item
$\left(\Iegfi\right)_{S^i_0,S^i_0}(\alpha,t)=\bm{0}$.
\item
$\left(\Iegfi\right)_{S^i_C,S^i_C}(\alpha,t)$ is defined through its Taylor expansion:
$$\left(\Iegfi\right)_{S^i_C,S^i_C}(\alpha,t):=\sum_{k=0}^\infty \left(\left(\Iegfi\right)_{S^i_C,S^i_C}\right)^{(k)}(t_i)\frac{(t-t_i)^k}{k!},$$
where
\begin{equation}
\begin{aligned}
\label{eq:en_taylor}
\left(\left(\Iegfi\right)_{S^i_C,S^i_C}\right)^{(k)}(t_i)&=\text{\emph{diag}}\left(\bm{A}^{-1}\bm{b}(k)\odot\frac 1{\alpha\cdot\sgn(\gv(t_i))+(1-\alpha)\cdot\gv(t_i)}\right),\\
k&=0,1,\dots,
\end{aligned}
\end{equation}
where matrix $\bm{A}$ and vectors $\bm{b}(k)$, both stated in the construction (Section \ref{constr:en_flow}), depend on $\gv(t_i)$, $\Ss$ and $\left(\Iegfi\right)^{(l)}(\alpha,t_i)$, $l<k$.
\end{itemize}
\end{itemize}
\item
{\makebox[5.7cm]{$S^i_F(t):=\{d: \left(\Iegf\right)_{dd}(t)=1\}$.\hfill}{The free set.}}
\item
{\makebox[5.7cm]{$S^i_0(t):=\{d: \left(\Iegf\right)_{dd}(t)=0\}$.\hfill}{The inactive set.}}
\item
{\makebox[5.7cm]{$S^i_C(t):=\{d: \left(\Iegf\right)_{dd}(t)\in(0,1)\}$.\hfill}{The coupled set.}}
\end{itemize}
\end{definition}
\noindent\textbf{Remark 1:} Note that Equation \ref{eq:beta_ef} is defined even when $\Ss$ is not invertible. By Taylor expanding $\Ss^{-1}\left(\Imm-\exp\left(\Omi(t_i,t)\right)\right)$, it can be seen that $\Ss^{-1}$ is canceled by (at least) one $\Ss$ from $\left(\Imm-\exp\left(\Omi(t_i,t)\right)\right)$.\\
\textbf{Remark 2:} Since implementing elastic gradient flow is quite computationally heavy, the formulation should be seen as a tool for providing a deeper understanding of elastic gradient descent, rather than as a substitute.\\
\textbf{Remark 3:} For coordinate flow, $\left\{\Icf(\alpha,t)\right\}_{i=0}^{i_{\textrm{max}}}$ and $\{t_i\}_{i=1}^\imax$ could be calculated analytically. Due to the exponential function in Equation \ref{eq:beta_ef},
this is not the case for elastic gradient flow. However, $\Iegf(\alpha,t)$ can be expressed by its Taylor expansion of arbitrary order, using the derivatives from Equation \ref{eq:en_taylor}. The second order expansion of $\Omi(t_i,t)$ is presented in Section \ref{sec:magnus}.

To calculate $\{t_i\}_{i=1}^{i_{\textrm{max}}}$, the following criteria have to be evaluated numerically, selecting $t_i$ as the one that occurs first.
\begin{enumerate}
\item
{\makebox[6.1cm]{$|g_d(t)|=\alpha\cdot|g_m(t)|$ for $d\in S_0$.\hfill}{A parameter leaves the inactive set.}}
\item
{\makebox[6.1cm]{$|g_d(t)|=\alpha\cdot|g_m(t)|$ for $d\in S_F$.\hfill}{A parameter leaves the free set.}}
\item
{\makebox[6.1cm]{$\left(\Iegf\right)_{dd}(\alpha,t)\in \{0,1\}$ for $d\in S_C$.\hfill}{A parameter leaves the coupled set.}}
\item
{\makebox[6.1cm]{$|g_d(t)|=|g_m(t)|$ for $d\in S_F$, $d\neq m$.\hfill}{The maximum gradient component changes.}}
\end{enumerate}
\textbf{Remark 4:} Similar to how lasso and ridge regression are special cases of the elastic net, coordinate and gradient descent (flow) are special cases of elastic gradient descent (flow). Remembering that $\Iegd(0,t)=\Imm$, it is trivial that Equations \ref{eq:grad_desc} and \ref{eq:coord_desc} are special cases of Equation \ref{eq:en_desc}. The flow versions require slightly more work: When $\alpha=0$, all variables belong to the free set at all times, which means that $\Iegf(0,t)=\Imm$ and there are no update times $t_i$. Furthermore, since $\Ss\Iegf(0,t)=\Ss$ is independent of $t$, $\exp(-\Om(0,t))$ reduces to $\exp\left(-\frac1{1-\gamma}\int_0^t\Ss dt\right)=\exp\left(-\frac t{1-\gamma}\Ss\right)$ (see Section \ref{sec:magnus} for details), which commutes with $\Ss^{-1}$, and for $\betahnollv=\bm{0}$ Equation \ref{eq:beta_ef} simplifies to Equation \ref{eq:beta_gf}. When $\alpha=1$, all parameters are either in the inactive or in the coupled set; except when only one parameter is non-zero, then the free set consists of that single parameter and the coupled set is empty. That is, with $S_A:=S_F\cup S_C$ the definition of the $t_i$'s for coordinate flow and elastic gradient flow coincide. Letting $(1-\alpha)\to 0$ and using $\lim_{x\to 0}\frac{\Imm - \exp(-x(\bm{A}+x\bm{B}))}x=\bm{A}$, Equation \ref{eq:beta_ef} simplifies to Equation \ref{eq:beta_cf}.

In gradient flow (and descent), all parameters may update freely according to their gradient values, while both Definitions \ref{dfn:coord_flow} and \ref{dfn:en_flow} split the parameters into groups, with different update rules. For coordinate flow (and descent), some parameters are not allowed to update at all, while others update, but in a coupled fashion, making sure that the gradients are always equal. 
Elastic gradient flow (and descent) combines all of these three update properties. The free set contains the indices of the parameters for which $|g_d|>\alpha\cdot|g_m|$, which are updated according to their gradient value. The inactive set contains the indices of the parameters for which $|g_d|<\alpha\cdot|g_m|$, which are not updated. The coupled set contains the indices of the parameters for which $|g_d|=\alpha\cdot|g_m|$. In the discrete case this corresponds to $\left(\Iegd\right)_{dd}$ fluctuating between 0 and 1, and $|g_d|$ oscillating around $\alpha\cdot|g_m|$, while in the continuous case $\left(\Iegf\right)\in (0,1)$ and $|g_d|=\alpha\cdot|g_m|$. Since $\left(\Iegf\right)_{dd}$, and hence the update speed of these parameters, depends on the value of $|g_m|$, we refer to them as coupled. These three sets are illustrated in Section \ref{sec:diabetes}.

\subsection{Construction of Coordinate and Elastic Gradient Flow}
\label{constr:flows}
In this section, we present the details behind the definitions of coordinate and elastic gradient flow. The following notation is used: Uppercase boldface letters are used for matrices and lowercase boldface letters for vectors. Slices of matrices and vectors are marked by subscripts, which might be either a single index, a set of indices, or a colon that denotes an entire row/column. Complements are denoted with a minus sign. We will give two examples for $\bm{A}\in \R^{m\times n}$ and $S_A=\{3,5\}$: $\bm{A}_{S_A,-1}$, denotes a $2\times(n-1)$ matrix consisting of rows 3 and 5, and all but the first column of matrix $\bm{A}$. $\bm{A}_{:,-S_A}$ denotes an $m\times (n-2)$ matrix consisting of all rows and all columns except 3 and 5. Time derivatives of order $k$ are denoted interchangeably with $\frac{\partial^k}{\partial t^k}(\cdot)$ and $(\cdot)^{(k)}$ and $\odot$ denotes element-wise multiplication.

\subsubsection{Construction of Coordinate Flow}
\label{constr:coord_flow}
If at some time interval, $[t_1,t_2]$, the magnitudes of two or more gradient components are all close to the maximum gradient value, the index $m$ (where $m=\argmax_d|g_d|$ and $(\Icd)_{mm}=1$, as defined above) might alternate very frequently between these elements (or equivalently, the one in $\Icd$ frequently changes position along the diagonal) during the interval; we denote the corresponding set of indices $S_A(t_1,t_2)$. If we could look at an even finer time scale than $\Delta t$, we would observe the same behavior on the sub-interval $[t,t+\Delta t]$. To consider a finer time scale than $\Delta t$, we split the time step $\Delta t$ into $K$ sub-steps rewriting Equation \ref{eq:coord_desc} as
\begin{equation}
\label{eq:coord_flow1}
\begin{aligned}
\betahv(t+\Delta t)=&\betahv(t)+\gamma\left(\betahv(t)-\betahv(t-\Delta t)\right)\\
&+\sum_{k=0}^{K-1}\frac{\Delta t} K\cdot \Icf\left(t+k\frac{\Delta t} K\right)\sgn\left(\Ss\left(\betahatolsv-\betahv\left(t+k\frac{\Delta t}K\right)\right)\right).
\end{aligned}
\end{equation}

If $d\in S_A(t,t+\Delta t)$, then $|g_d(\tau)|> 0$ for $\tau \in [t,t+\Delta t]$, i.e.\ $g_d$ does not change sign, which means that $\sgn(g_d)$ remains constant on the interval. If, on the other hand, $d\notin S_A(t,t+\Delta t)$, then $g_d$ might change sign on the interval, but then $(\Icd)_{dd}=0$, and the value of $\sgn(g_d)$ is not considered. This means that Equation \ref{eq:coord_flow1} can be written as 
\begin{equation*}
\betahv(t+\Delta t)=\betahv(t)+\gamma\left(\betahv(t)-\betahv(t-\Delta t)\right)+\frac{\Delta t}K\sum_{k=0}^{K-1}\Icf\left(t+k\frac{\Delta t} K\right)\sgn(\Ss(\betahatolsv-\betahv(t))).
\end{equation*}
Rearranging and letting first $K\to\infty$, then $\Delta t\to 0$, assuming that the limits exist, we obtain
$$(1-\gamma)\cdot\frac{\partial \betahv(t)}{\partial t}=\Icf^\infty(t)\sgn(\Ss(\betahatolsv-\betahv(t))),$$
where
$$\Icf^\infty(t):=\lim_{\Delta t\to 0}\lim_{K\to\infty}\frac 1K\sum_{k=0}^{K-1}\Icd\left(t+k\frac {\Delta t}K\right)$$
is the limit of averages of matrices where one diagonal element equals one and the remaining elements equal zero.
Since it is not obvious that this limit exists, we instead define $\Icf$ as an average of matrices of type $\Icd$, i.e.,
\begin{itemize}
\item $(\Icf)_{dd}\in [0,1]$
\item $\sum_d(\Icf)_{dd}=1$
\item $(\Icf)_{dd}(t)>0\iff d\in S_A(t)$
\end{itemize}
and obtain
\begin{equation}
\label{eq:coord_flow2}
(1-\gamma)\cdot\frac{\partial \betahv(t)}{\partial t}=\Icf(t)\sgn(\Ss(\betahatolsv-\betahv(t))).
\end{equation}

Repeating the reasoning just after Equation \ref{eq:coord_flow1}, we can say that if $d\in S^i_A:=S_A(t_i,t_{i+1})$, then $g_d$ does not change sign for $t \in [t_i,t_{i+1})$ and 
$$-\sgn(g_d(t))=-\sgn(g_d(t_i))=\sgn(\Ss(\betahatolsv-\betahv(t_i)))_d=:(\svi)_d.$$
If, on the other hand, $d\notin S^i_A$, then $\left(\Icfi\right)_{dd}(t)=0$ and the value of $(\svi)_d$ is not considered. This means that Equation \ref{eq:coord_flow2} can be written as 
\begin{equation*}
(1-\gamma)\cdot\frac{\partial\betahv(t)}{\partial t}=\Icf(t)\svi,\quad t\in[t_i,t_{i+1}),
\end{equation*}
to which the solution is given by
$$\betahv(t)=\betahv(t_i)+\frac1{1-\gamma}\int_{t_i}^t\Icf(\tau)d\tau\svi.$$

We now show that for $t\in[t_i,t_{i+1})$, $\int_{t_i}^t\Icf(\tau)d\tau=(t-t_i)\Icfi$, and calculate $\Icfi$.

Assume $S^i_A=\{m_1,m_2,\dots, m_{p_m}\}$ at $t=t_i$, which implies $|g_{m_1}(t_i)|=|g_{m_2}(t_i)|=\dots=|g_{m_{p_m}}(t_i)|$. If $d\notin S^i_A$ at time $t$, then $\left(\Icfi\right)_{dd}(t)=0$, so we focus on the sub-matrix $\left(\Icf\right)_{S^i_A,S^i_A}(t)$, which is a $p_m\times p_m$ matrix containing only the rows and columns for which $d\in S^i_A$ at time $t_i$.

We want to construct $\left(\Icf\right)_{S^i_A,S^i_A}(t)$ such that the elements of $S^i_A$ do not change for $t \in [t_i,t_{i+1})$, which implies $|g_{m_1}(t)|=|g_{m_2}(t)|=\dots=|g_{m_{p_m}}(t)|$.
Let's start with $|g_{m_1}(t)|=|g_{m_2}(t)|$.
\begin{equation*}
\begin{aligned}
0=&|g_{m_2}(t)|-|g_{m_1}(t)|=|\Ss(\betahatolsv-\betahv(t))|_{m_2}-|\Ss(\betahatolsv-\betahv(t))|_{m_1}\\
=&\si_{m_2}\cdot(\Ss(\betahatolsv-\betahv(t)))_{m_2}-\si_{m_1}\cdot(\Ss(\betahatolsv-\betahv(t)))_{m_1}\\
=&\underbrace{\cancel{\si_{m_2}\cdot(\Ss(\betahatolsv-\betahv(t_i)))_{m_2}}}_{=|g_{m_2}(t_i)|}-\si_{m_2}\cdot\left(\Ss\frac1{1-\gamma}\int_{t_i}^t\Icf(\tau)d\tau\svi\right)_{m_2}\\
&-\underbrace{\cancel{\si_{m_1}\cdot(\Ss(\betahatolsv-\betahv(t_i)))_{m_1}}}_{=|g_{m_1}(t_i)|}+\si_{m_1}\cdot\left(\Ss\frac1{1-\gamma}\int_{t_i}^t\Icf(\tau)d\tau\svi\right)_{m_1}\\
=&\frac1{1-\gamma}\left(\si_{m_1}\cdot\Ss_{m_1,:}-\si_{m_2}\cdot\Ss_{m_2,:}\right)\int_{t_i}^t\Icf(\tau)d\tau\svi\\
\stackrel{(a)}{=}&\frac1{1-\gamma}\left(\si_{m_1}\cdot\Ss_{m_1,S^i_A}-\si_{m_2}\cdot\Ss_{m_2,S^i_A}\right)\int_{t_i}^t\left(\Icf\right)_{S^i_A,S^i_A}(\tau)d\tau\svi_{S^i_A}\\
\iff 0=&\left(\si_{m_1}\cdot\Ss_{m_1,S^i_A}-\si_{m_2}\cdot\Ss_{m_2,S^i_A}\right)\int_{t_i}^t\left(\Icf\right)_{S^i_A,S^i_A}(\tau)d\tau\svi_{S^i_A},\\
\end{aligned}
\end{equation*}
where $(a)$ follows from the fact that $\left(\Icfi\right)_{dd}(t)=0$ for $d\notin S^i_A$.
Repeating the same calculations for all combinations of indices in $S^i_A$ gives us $p_m-1$ independent equations. Together with 
\begin{equation*}
\begin{aligned}
1&=\sum_{m\in S^i_A} \left(\Icf\right)_{mm}(t)\iff t-t_i=\sum_{m\in S^i_A}\int_{t_i}^t\left(\Icf\right)_{mm}(\tau)d\tau=\bm{1}^\top \int_{t_i}^t\left(\Icf\right)_{S^i_A,S^i_A}(\tau)d\tau\bm{1}\\
&=(\svi_{S^i_A})^\top \int_{t_i}^t\left(\Icf\right)_{S^i_A,S^i_A}(\tau)d\tau\svi_{S^i_A},
\end{aligned}
\end{equation*}
we obtain the following system of linear equations
$$
\underbrace{
\begin{bmatrix}
\horzbar (\svi_{S^i_A})^\top \horzbar\\
\horzbar \left(\si_{m_1}\cdot\Ss_{m_1,S^i_A}-\si_{m_2}\cdot\Ss_{m_2,S^i_A}\right) \horzbar \\
\horzbar \left(\si_{m_2}\cdot\Ss_{m_2,S^i_A}-\si_{m_3}\cdot\Ss_{m_3,S^i_A}\right) \horzbar \\
\vdots& \\
\horzbar \left(\si_{m_{p_m-1}}\cdot\Ss_{m_{p_m-1},S^i_A}-\si_{m_{p_m}}\cdot\Ss_{m_{p_m},S^i_A}\right) \horzbar\\
\end{bmatrix}
}_{=:\bm{B}}
\begin{bmatrix}
\vline\\
\int_{t_i}^t\left(\Icf\right)_{S^i_A,S^i_A}(\tau)d\tau\svi_{S^i_A}\\
\vline\\
\end{bmatrix}
=
\underbrace{
\begin{bmatrix}
t-t_i\\
0\\
\vdots\\
0\\
\end{bmatrix}
}_{=:\bm{b}}.
$$
Solving this system, we obtain
\begin{equation*}
\begin{aligned}
&\int_{t_i}^t\left(\Icf\right)_{S^i_A,S^i_A}(\tau)d\tau\svi_{S^i_A}=\bm{B}^{-1}\bm{b}=(t-t_i)\cdot (\bm{B}_{:,1})^{-1}\\
\iff&\int_{t_i}^t\left(\Icf\right)_{S^i_A,S^i_A}(\tau)d\tau=(t-t_i)\cdot \textrm{diag}\left((\bm{B}_{:,1})^{-1}\odot\svi_{S^i_A}\right)=:(t-t_i)\left(\Icfi\right)_{S^i_A,S^i_A},
\end{aligned}
\end{equation*}
where $\odot$ denotes element-wise multiplication.

For some combinations of $\Ss$ and $\betahatolsv-\betav(t)$, we might obtain a solution where $\left(\Icfi\right)_{dd}<0$ for some $d$. Since this is obviously not feasible (since it would correspond to a negative time step), in these cases, we remove the corresponding parameter from $S^i_A$, which implies $\left(\Icfi\right)_{dd}=0$. Then $\left(\Icfi\right)_{S^i_A,S^i_A}$ is recalculated for the new $S^i_A$. If $\left(\Icfi\right)_{dd}<0$ for more than one parameter simultaneously, only the parameter with the largest (absolute valued) negative value is removed. This procedure is repeated until all $\left(\Icfi\right)_{dd} \in [0,1]$.

What is left to do is to compute the times when some new $d$ enters $S^i_A$, i.e.\ when $|g_{d}(t)|=|g_{m}(t)|$ for $m\in S^i_A$, $d\notin S^i_A$. Since the absolute gradient value is identical for all $m\in S^i_A$, we use $m_1$. First assume $\sgn(g_d(t))=\sgn(g_{m_1}(t))$. Then
\begin{equation*}
\label{eq:delta_t1}
\begin{aligned}
0&=g_{m_1}(t)-g_{d}(t)=-\Ss_{m_1,:}(\betahatolsv-\betahv(t))+\Ss_{d,:}(\betahatolsv-\betahv(t))\\
&=-\Ss_{m_1,:}(\betahatolsv-\betahv(t_i))+(t-t_i)\Ss_{m_1,:}\Icfi\svi+\Ss_{d,:}(\betahatolsv-\betahv(t_i))-(t-t_i)\Ss_{d,:}\Icfi\svi\\
&=\left(\Ss_{d,:}-\Ss_{m_1,:}\right)(\betahatolsv-\betahv(t_i))-(t-t_i)\left(\Ss_{d,:}-\Ss_{m_1,:}\right)\Icfi\svi\\
&\iff t=t_i+\underbrace{\frac{\left(\Ss_{d,:}-\Ss_{m_1,:}\right)(\betahatolsv-\betahv(t_i))}{\left(\Ss_{d,:}-\Ss_{m_1,:}\right)\Icfi\svi}}_{=:\Delta t_{i,d,1}}.
\end{aligned}
\end{equation*}
Repeating the same calculations when $\sgn(g_d(t))=-\sgn(g_{m_1}(t))$ results in
\begin{equation*}
\label{eq:delta_t2}
t=t_i+\underbrace{\frac{\left(\Ss_{d,:}+\Ss_{m_1,:}\right)(\betahatolsv-\betahv(t_i))}{\left(\Ss_{d,:}+\Ss_{m_1,:}\right)\Icfi\svi}}_{=:\Delta t_{i,d,2}}.
\end{equation*}

Putting this together, we obtain
$$t_{i+1}=t_i+\min_{\substack{d\notin S^i_A,\ k=1,2\\ \Delta t_{i,d,k}>0}}\Delta t_{i,d,k}.$$

\subsubsection{Construction of Elastic Gradient Flow}
\label{constr:en_flow}
This time the differential equation of interest becomes
\begin{equation}
\label{eq:en_flow}
(1-\gamma)\cdot\frac{\partial\betahv(t)}{\partial t}=\Iegf(\alpha,t)\left(\alpha\cdot\sgn\left(\Ss(\betahatolsv-\betahv(t))\right)+(1-\alpha)\cdot\Ss(\betahatolsv-\betahv(t))\right).
\end{equation}

Since the vector $\sgn(\Ss(\betahatolsv-\betahv(t)))$ is constant for $d\notin S_0$, Equation \ref{eq:en_flow} simplifies to
\begin{equation}
\label{eq:en_flow1}
(1-\gamma)\cdot\frac{\partial\betahvegf(t)}{\partial t}=\Iegfi(\alpha,t)\left(\alpha\cdot\svi+(1-\alpha)\cdot\Ss(\betahatolsv-\betahvegf(t))\right),\quad t\in[t_i,t_{i+1}),
\end{equation}
where $\svi:=-\sgn(g_d(t))=-\sgn(g_d(t_i))=\sgn(\Ss(\betahatolsv-\betahvegf(t_i)))$.

Let 
\begin{equation*}
\etav(t):=\alpha\cdot\svi+(1-\alpha)\cdot\Ss(\betahatolsv-\betahvegf(t)).
\end{equation*}
Then
\begin{equation}
\label{eq:betaeta}
\betahvegf(t)=-\left((1-\alpha)\Ss\right)^{-1}\left(\etav(t)-\left(\alpha\cdot\svi+(1-\alpha)\cdot\Ss(\betahatolsv-\betahvegf(t_i))\right)\right)+\betahvegf(t_i)\\
\end{equation}
and
\begin{equation*}
\frac{\partial \etav(t)}{\partial t}=-(1-\alpha)\Ss\frac{\partial\betahvegf(t)}{\partial t}\iff\frac{\partial \betahvegf(t)}{\partial t}=-\left((1-\alpha)\Ss\right)^{-1}\frac{\partial\etav(t)}{\partial t}.
\end{equation*}

We can now rewrite Equation \ref{eq:en_flow1} in terms of $\etav$:
$$\frac{\partial\etav(t)}{\partial t}=-\frac{1-\alpha}{1-\gamma}\Ss\Iegfi(\alpha,t)\etav(t),\quad t\in[t_i,t_{i+1}),$$
which gives us
\begin{equation}
\label{eq:etasol}
\etav(t)=\exp\left(\Omi(t_i,t)\right)\etav(t_i),\quad \in [t_i,t_{i+1}),
\end{equation}
where $\exp$ is the matrix exponential and $\Omi(t_i,t)$ is the Magnus expansion \citep{magnus1954exponential} of $-\frac{1-\alpha}{1-\gamma}\Ss\Iegfi(\alpha,t)$.
Plugging Equation \ref{eq:etasol} into Equation \ref{eq:betaeta}, we obtain
\begin{equation*}
\begin{aligned}
\betahvegf(t)=& \betahvegf(t_i)\\
&+ \left((1-\alpha)\Ss\right)^{-1}\left(\Imm-\exp\left(\Omi(t_i,t)\right)\right)
\cdot\left(\alpha\cdot\svi+(1-\alpha)\cdot\Ss\left(\betahatolsv-\betahvegf(t_i)\right)\right),\\
&t\in [t_i,t_{i+1}).
\end{aligned}
\end{equation*}

We now calculate the time derivatives of $\left(\Ii\right)_{S^i_C,S^i_C}(t_i)$. Let $m:=\textrm{argmax}_d|g_d(t_i)|$. If $c\in S^i_C$, then $(\Ii)_{cc}(t)\notin \{0,1\}$ and $|g_c(t)|=\alpha|g_{m}(t)|$.
This means that for $t \in [t_i,t_{i+1})$
\begin{equation}
\label{eq:abs_eq_flow}
\left|\frac{g_c(t)}{g_m(t)}\right|=\alpha,
\end{equation}
which we want to solve for $(\Ii)_{cc}(\alpha,t):=(\I)_{cc}(\alpha,t),$ $t\in[t_i,t_{i+1})$.\\
Since $\left|\frac{g_c(t_i)}{g_m(t_i)}\right|=\alpha$, Equation \ref{eq:abs_eq_flow} holds if 
\begin{equation}
\label{eq:grad_quot}
\left(\left|\frac{g_c(t_i)}{g_m(t_i)}\right|\right)^{(k+1)}=0,\ k=0,1,\dots,
\end{equation}
because requiring that the derivative is 0 for $t_i$ implies that the function remains constant for $t>t_i$, at least if the derivative remains zero, which is why we require the second derivative to be 0, and so on.
Using Lemma \ref{thm:abs_der}, we obtain
\begin{equation}
\label{eq:der0}
\begin{aligned}
\left(\left|\frac{g_c(t_i)}{g_m(t_i)}\right|\right)^{(k+1)}&=0\\
\iff&\\
 g^{(k+1)}_c(t_i)\cdot g_m(t_i)-g_c(t_i)\cdot g^{(k+1)}_m(t_i)&=-g^2_m(t_i)\cdot\sgn\left(\frac {g_c(t_i)}{g_m(t_i)}\right)\cdot\Oo\left(\left(\frac {g_c(t_i)}{g_m(t_i)} \right)^{(k)}\right).
\end{aligned}
\end{equation}

If we solve Equation \ref{eq:grad_quot} for $k$'s in increasing order, when solving for $k+1$,
$$\left(\left|\frac{g_c(t_i)}{g_m(t_i)}\right|\right)^{(l)}=0,\ l=1,2,\dots k$$
and Equation \ref{eq:der0} simplifies to
\begin{equation}
\label{eq:der0_1}
\left(\left|\frac{g_c(t_i)}{g_m(t_i)}\right|\right)^{(k+1)}=0\iff g^{(k+1)}_c(t_i)\cdot g_m(t_i)-g_c(t_i)\cdot g^{(k+1)}_m(t_i)=0.
\end{equation}

Using Lemma \ref{thm:gk0}, abbreviating $\gsvi:=\left(\alpha\cdot\sgn(\gv(t_i))+(1-\alpha)\cdot\gv(t_i)\right)$ and $\ckv:=\Oo((\Ii)^{(k-1)}(t_i)$, we obtain
\begin{equation}
\label{eq:gk0_1}
g_d^{(k+1)}(t_i)=\frac1{1-\gamma}\left(-\Ss_{d,:}(\Ii)^{(k)}(t_i)\gsvi + \ck_d\right),
\end{equation}
where $\ckv$ can be calculated by first setting $(\Ii)^{(k)}(t_i)=\bm{0}$, to remove the first term in Equation \ref{eq:gk0_1}, and then evaluating $\gv^{(k+1)}(t_i)$ using Equations \ref{eq:grad} and \ref{eq:beta_ef}:
\begin{equation*}
\begin{aligned}
\ckv&=(1-\gamma)\gv^{(k+1)}(t_i) =(1-\gamma)\Ss\betav^{(k+1)}(t_i)\\
&=-(1-\gamma)\Ss\left.\frac{d^{k+1}\exp\left(\Omi(t_i,t)\right)}{dt^k}\right|_{t=t_i}\left(\frac {\alpha}{1-\alpha}\Ss^{-1}\svi+\betahatolsv-\betavti\right),
\end{aligned}
\end{equation*}
where the derivative of the matrix exponential can be calculated using Lemma \ref{thm:mat_exp_der}.

Now, inserting Equation \ref{eq:gk0_1} into Equation \ref{eq:der0_1}, we obtain.
\begin{equation*}
\label{eq:ii_ab}
\begin{aligned}
&\frac1{1-\gamma}\left(-\Ss_{c,:}(\Ii)^{(k)}(t_i)\gsvi+\ck_c\right)\cdot g_m(t_i)\\
&-g_c(t_i)\cdot\frac1{1-\gamma}\left(-\Ss_{m,:}(\Ii)^{(k)}(t_i)\gsvi+\ck_m\right)=0\\
\iff&(-\Ss_{c,:}(\Ii)^{(k)}(t_i)\gsvi)+\ck_c)\cdot g_m(t_i)-g_c(t_i)\cdot(-\Ss_{m,:}(\Ii)^{(k)}(t_i)\gsvi)+\ck_m)=0\\
\iff&\ck_m\cdot g_c(t_i)-\ck_c\cdot g_m(t_i)=\left(g_c(t_i)\Ss_{m,:}-g_m(t_i)\cdot\Ss_{c,:}\right)(\Ii)^{(k)}(t_i)\gsvi\\
&\stackrel{(a)}{=}
\begin{bmatrix}
g_c(t_i)\Ss_{m,c}-g_m(t_i)\cdot\Ss_{c,c} & g_{c}(t_i)\Ss_{m,-c}-g_m(t_i)\cdot\Ss_{c,-c}
\end{bmatrix}
\begin{bmatrix}
\left((\Ii)^{(k)}(t_i)\gsvi\right)_c\\
\left((\Ii)^{(k)}(t_i)\gsvi\right)_{-c}
\end{bmatrix}\\
\iff&= \underbrace{\ck_m\cdot g_c(t_i)-g_m(t_i)\cdot\ck_c-\left(g_c(t_i)\Ss_{m,-c}-g_m(t_i)\cdot\Ss_{c,-c}\right)\left((\Ii)^{(k)}(t_i)\gsvi\right)_{-c}}_{=:b}\\
&=\underbrace{\left(g_c(t_i)\Ss_{m,c}-g_m(t_i)\cdot\Ss_{c,c}\right)}_{=:a}\underbrace{\left((\Ii)^{(k)}(t_i)\gsvi\right)_c}_{=(\Ii)_{cc}^{(k)}(t_i)\gsi_c}\\
\iff&(\Ii)_{cc}^{(k)}(t_i)=\frac{b}{a\cdot \gsi_c},\\
\end{aligned}
\end{equation*}
where in $(a)$, the columns of $\Ss$ and the rows of $(\Ii)^{(k)}(t_i)\gsvi$ are split into the two sets $c$ and $-c$.

Writing this as a system of linear equations, we obtain
\begin{equation*}
\begin{aligned}
\bm{A}:=&\left(\gv_{S^i_C}(t_i)\Ss_{m,S^i_C}-g_m(t_i)\cdot\Ss_{S^i_C,S^i_C}\right)\\
\bm{b}:=&\ck_m\cdot\gv_{S^i_C}(t_i)-g_m(t_i)\cdot\ckv_{S^i_C}\\
&-\left(\gv_{S^i_C}(t_i)\Ss_{m,S^i_F\cup S^i_0}-g_m(t_i)\cdot\Ss_{S^i_C,S^i_F\cup S^i_0}\right)\left((\Ii)^{(k)}(t_i)\gsvi\right)_{S^i_F\cup S^i_0}\\
\left(\Ii\right)_{S^i_C,S^i_C}^{(k)}(t_i)=&\textrm{diag}(\bm{A}^{-1}\bm{b}/\gsvi_{S^i_C}),
\end{aligned}
\end{equation*}
where the division is element-wise.

For $k=0$, we have $\ckv=(\Ii)^{(-1)}(t_i)=\bm{0}$, $(\Ii)_{S^i_0,S^i_0}(t_i)=\bm{0}$ and $(\Ii)_{S^i_F,S^i_F}(t_i)=\Imm$, which means that $\bm{b}$ simplifies to
$$\bm{b}_{k=0}=\left(g_m(t_i)\cdot\Ss_{S^i_C,S^i_F}-\gv_{S^i_C}(t_i)\Ss_{m,S^i_F}\right)\gsvi_{S^i_F}.$$
For $k\geq1$, we have $(\Ii)_{S^i_F\cup S^i_0,S^i_F\cup S^i_0}^{(k)}(t_i)=\bm{0}$, which means that $\bm{b}$ simplifies to
$$\bm{b}_{k\geq1}=\ck_m\cdot\gv_{S^i_C}(t_i)-g_m(t_i)\cdot\ckv_{S^i_C}.$$

Similar to for coordinate flow, $(\I)_{cc}(\alpha,t)\in[0,1]$ might not always hold and the corresponding modification, in this case, amounts to: If at any time $t_i$, $(\Ii)_{cc}\leq0$ for some parameter in $S^i_C$, then that parameter is moved from $S^i_C$ to $S^i_0$; if $(\Ii)_{dd}\geq1$ for some parameter in $S^i_C$, then that parameter is moved from $S^i_C$ to $S^i_F$. If $(\Ii)_{dd}\notin[0,1]$ for more than one parameter simultaneously, only the parameter that deviates most from $[0,1]$ is removed. This procedure is repeated until all $(\Ii)_{dd} \in [0,1]$, which means that $(\Ii)_{cc}\in (0,1)$ for $c\in S^i_C$.

\subsection{Magnus Expansion}
\label{sec:magnus}
According to \cite{magnus1954exponential},
\begin{equation}
\label{eq:magnus_exp}
\begin{aligned}
\Omi(t_i,t)=&\int_{t_i}^t\bm{A}(\tau_1)d\tau_1+\frac12\int_{t_i}^t\int_{t_i}^{\tau_1}\left[\bm{A}(\tau_1),\bm{A}(\tau_2)\right]d\tau_2d\tau_1\\
&+\frac14\int_{t_i}^t\int_{t_i}^{\tau_1}\int_{t_i}^{\tau_2}\left[\bm{A}(\tau_1),\left[\bm{A}(\tau_2),\bm{A}(\tau_3)\right]\right]d\tau_3d\tau_2d\tau_1+\dots,
\end{aligned}
\end{equation}
where the commutator is defined according to $[\bm{A},\bm{B}]:=\bm{A}\bm{B}-\bm{B}\bm{A}$. For $\bm{A}(t)=-(1-\alpha)\Ii(\alpha,t)\Ss$, the first two terms in Equation \ref{eq:magnus_exp}, together with its time derivatives, are calculated below.
\subsubsection{First Term}
\begin{equation*}
\begin{aligned}
\Omi_1(t_i,t)&=\int_{t_i}^t-\frac{1-\alpha}\Ss{1-\gamma}\Ii(\alpha,\tau) d\tau\stackrel{(a)}=-\frac{1-\alpha}{1-\gamma}\Ss\int_{t_i}^t\sum_{l=0}^\infty\frac{(\tau-t_i)^l}{l!}(\Ii)^{(l)}(\alpha,t_i)d\tau\\
&=-\frac{1-\alpha}{1-\gamma}\Ss\sum_{l=0}^\infty\frac{(t-t_i)^{l+1}}{(l+1)!}(\Ii)^{(l)}(\alpha,t_i).\\
(\Omi_1)^{(k)}(t_i,t)&=-\frac{1-\alpha}{1-\gamma}\Ss\sum_{l=0}^\infty\frac{(t-t_i)^{l+1-k}}{(l+1-k)!}(\Ii)^{(l)}(\alpha,t_i)\\
(\Omi_1)^{(k)}(t_i,t_i)&=-\frac{1-\alpha}{1-\gamma}\Ss(\Ii)^{(k-1)}(\alpha,t_i),
\end{aligned}
\end{equation*}
where in $(a)$, we use the Taylor expansion of $\Ii(\alpha,t)$ around $t_i$, and the last step uses that $(t_i-t_i)^{l+1-k}=\begin{cases} 1\textrm{ if } l=k-1\\0 \textrm{ else.}\end{cases}$

\subsubsection{Second Term}
\begin{equation*}
\begin{aligned}
\Omi_2(t_i,t)&=\frac12\int_{t_i}^t\int_{t_i}^{\tau_1}\left[-\frac{1-\alpha}{1-\gamma}\Ss\Ii(\alpha,\tau_1),\ -\frac{1-\alpha}{1-\gamma}\Ss\Ii(\alpha,\tau_2)\right] d\tau_2d\tau_1\\
&=\frac12\left(\frac{1-\alpha}{1-\gamma}\right)^2\int_{t_i}^t\int_{t_i}^{\tau_1}\left[\Ss\Ii(\alpha,\tau_1),\ \Ss\Ii(\alpha,\tau_2)\right] d\tau_2d\tau_1.
\end{aligned}
\end{equation*}
Focusing on the commutator, and writing $\Ii(\alpha,t)$ as its Taylor expansion, we obtain
\begin{equation*}
\begin{aligned}
&\left[\Ss\Ii(\alpha,\tau_1),\ \Ss\Ii(\alpha,\tau_2)\right]\\
=&\left[\Ss\sum_{l_1=0}^\infty\frac{(\tau_1-t_i)^{l_1}}{l_1!}(\Ii)^{(l_1)}(\alpha,t_i),\ \Ss\sum_{l_2=0}^\infty\frac{(\tau_2-t_i)^{l_2}}{l_2!}(\Ii)^{(l_2)}(\alpha,t_i)\right]\\
=&\sum_{l_1=0}^\infty\sum_{l_2=0}^\infty\frac{(\tau_1-t_i)^{l_1}}{l_1!}\frac{(\tau_2-t_i)^{l_2}}{l_2!}\left[\Ss(\Ii)^{(l_1)}(\alpha,t_i),\ \Ss(\Ii)^{(l_2)}(\alpha,t_i)\right]\\
\stackrel{(a)}=&\sum_{l_1=1}^\infty\sum_{l_2=0}^{l_1-1}\frac1{l_1!l_2!}\left((\tau_1-t_i)^{l_1}(\tau_2-t_i)^{l_2}-(\tau_2-t_i)^{l_2}(\tau_1-t_i)^{l_1}\right)\\
&\cdot\left[\Ss(\Ii)^{(l_1)}(\alpha,t_i),\ \Ss(\Ii)^{(l_2)}(\alpha,t_i)\right],
\end{aligned}
\end{equation*}
where in $(a)$, we use
\begin{equation*}
\begin{aligned}
&[\Ss(\Ii)^{(l)}(\alpha,t_i),\ \Ss(\Ii)^{(l)}(\alpha,t_i)]=0\\
&[\Ss(\Ii)^{(l_2)}(\alpha,t_i),\ \Ss(\Ii)^{(l_1)}(\alpha,t_i)]=-[\Ss(\Ii)^{(l_1)}(\alpha,t_i),\ \Ss(\Ii)^{(l_2)}(\alpha,t_i)].
\end{aligned}
\end{equation*}
We now solve the integral with respect to $\tau_1$ and $\tau_2$,
$$\int_{t_i}^t\int_{t_i}^{\tau_1}\left((\tau_1-t_i)^{l_1}(\tau_2-t_i)^{l_2}-(\tau_2-t_i)^{l_2}(\tau_1-t_i)^{l_1}\right)d\tau_2d\tau_1$$
$$=\frac{l_1-l_2}{(l_1+1)!(l_2+1)!(l_1+l_2+2)}(t-t_i)^{l_1+l_2+2},$$
and putting it together, we obtain
\begin{equation*}
\begin{aligned}
&\Omi_2(t_i,t)\\
=&\frac12\left(\frac{1-\alpha}{1-\gamma}\right)^2\sum_{l_1=1}^\infty\sum_{l_2=0}^{l_1-1}\frac{(l_1-l_2)(t-t_i)^{l_1+l_2+2}}{(l_1+1)!(l_2+1)!(l_1+l_2+2)}\left[\Ss(\Ii)^{(l_1)}(\alpha,t_i),\ \Ss(\Ii)^{(l_2)}(\alpha,t_i)\right]\\
=&\frac12\left(\frac{1-\alpha}{1-\gamma}\right)^2\sum_{l_1=2}^\infty\sum_{l_2=1}^{l_1-1}\frac{(l_1-l_2)(t-t_i)^{l_1+l_2}}{l_1!l_2!(l_1+l_2)}\left[\Ss(\Ii)^{(l_1-1)}(\alpha,t_i),\ \Ss(\Ii)^{(l_2-1)}(\alpha,t_i)\right].
\end{aligned}
\end{equation*}
Differentiating $k$ times with respect to $t$ yields
\begin{equation*}
\begin{aligned}
(\Omi_2)^{(k)}(t_i,t)=&\frac12\left(\frac{1-\alpha}{1-\gamma}\right)^2\sum_{l_1=2}^\infty\sum_{l_2=1}^{l_1-1}\frac{(l_1-l_2)(l_1+l_2)!\cdot(t-t_i)^{l_1+l_2-k}}{l_1!l_2!(l_1+l_2)(l_1+l_2-k)!}\\
&\cdot\left[\Ss(\Ii)^{(l_1-1)}(\alpha,t_i),\ \Ss(\Ii)^{(l_2-1)}(\alpha,t_i)\right]\\
(\Omi_2)^{(k)}(t_i,t_i)=&\frac12\left(\frac{1-\alpha}{1-\gamma}\right)^2\sum_{l_2=1}^{\lfloor\frac{k-1}{2}\rfloor}\frac{(k-2l_2)(k-1)!}{l_2!(k-l_2)!}\\
&\cdot\left[\Ss(\Ii)^{(k-l_2-1)}(\alpha,t_i),\ \Ss(\Ii)^{(l_2-1)}(\alpha,t_i)\right],
\end{aligned}
\end{equation*}
where the last step uses that the only surviving term in the first sum is when $l_1+l_2-k=0 \iff l_1=k-l_2$.

\section{Proofs}
\label{sec:proofs}

\begin{proof}[Proof of Proposition \ref{thm:conv_anal}]~\\
Since $L$ is assumed to be strongly convex, with Hessian bounded by $M$, 

\begin{equation*}
\begin{aligned}
&L(\betahv-\Delta t\cdot \dbetahvegd)=L(\betahv)-\Delta t\cdot \nabla L(\betahv)^\top\dbetahvegd+\frac{\Delta t^2}2\dbetahvegd^\top\nabla^2L(\betahv)\dbetahvegd\\
\leq& L(\betahv)-\Delta t\cdot \nabla L(\betahv)^\top\dbetahvegd+\frac{M \Delta t^2}2\dbetahvegd^\top\dbetahvegd\\
=&L(\betahv)-\Delta t\cdot \gv^\top\Iegd\cdot(\alpha\cdot\sgn(\gv)+(1-\alpha)\gv)\\
&+\frac{M \Delta t^2}2 (\alpha\cdot\sgn(\gv)+(1-\alpha)\gv)^\top\underbrace{\Iegd^\top\Iegd}_{=\Iegd}\cdot(\alpha\cdot\sgn(\gv)+(1-\alpha)\gv).\\
=&L(\betahv)-\Delta t\cdot\left(\alpha \cdot \underbrace{\gv^\top\Iegd\sgn(\gv)}_{=\|\Iegd\gv\|_1}+(1-\alpha)\cdot \underbrace{\gv^\top\Iegd\gv}_{=\|\Iegd\gv\|^2_2}\right)\\
&+\frac{M \Delta t^2}2 \left(\alpha^2\cdot\underbrace{\sgn(\gv)\Iegd\sgn(\gv)}_{=\|\Iegd\gv\|_0}+2\alpha(1-\alpha)\cdot\underbrace{\gv^\top\Iegd\sgn(\gv)}_{=\|\Iegd\gv\|_1}+(1-\alpha)^2\cdot\underbrace{\gv^\top\Iegd\gv}_{\|\Iegd\gv\|_2^2}\right).
\end{aligned}
\end{equation*}
Using the inequalities
\begin{equation*}
\begin{aligned}
\|\Iegd\gv\|_1\geq\frac1\gmax\cdot\|\Iegd\gv\|_2^2\\
\|\Iegd\gv\|_0\leq\frac1{\gmin^2}\cdot\|\Iegd\gv\|_2^2\\
\|\Iegd\gv\|_1\leq\frac1\gmin\cdot\|\Iegd\gv\|_2^2,\\
\end{aligned}
\end{equation*}
we obtain
\begin{equation*}
\begin{aligned}
&L(\betahv-\Delta t\cdot \dbetahvegd)-L(\betahv)\leq
-\Delta t\cdot\left(\alpha \cdot \frac1{\gmax}\cdot\|\Iegd\gv\|_2^2+(1-\alpha)\cdot \|\Iegd\gv\|^2_2\right)\\
&+\frac{M \Delta t^2}2 \left(\alpha^2\cdot\frac1{\gmin^2}\cdot\|\Iegd\gv\|_2^2+2\alpha(1-\alpha)\cdot\frac1{\gmin}\cdot\|\Iegd\gv\|_2^2+(1-\alpha)^2\cdot\|\Iegd\gv\|_2^2\right).\\
=&-\Delta t\cdot\|\Iegd\gv\|_2^2\cdot\left(\frac\alpha{\gmax}+(1-\alpha)-\frac{M \Delta t}2 \left(\frac{\alpha^2}{\gmin^2}+\frac{2\alpha(1-\alpha)}{\gmin}+(1-\alpha)^2\right)\right).\\
=&-\Delta t\cdot\|\Iegd\gv\|_2^2\cdot\left(\frac\alpha{\gmax}+(1-\alpha)-\frac{M \Delta t}2 \left(\frac{\alpha}{\gmin}+(1-\alpha)\right)^2\right).
\end{aligned}
\end{equation*}
Thus, $L(\betahv-\Delta t\cdot \dbetahvegd)-L(\betahv)\leq 0$ if  

\begin{equation}
\label{eq:bound2a}
\begin{aligned}
&\frac\alpha{\gmax}+(1-\alpha)-\frac{M \Delta t}2 \left(\frac{\alpha}{\gmin}+(1-\alpha)\right)^2\geq 0\\
\iff &\Delta t\leq\frac2M\cdot\frac{\alpha\gmax^{-1}+(1-\alpha)}{(\alpha\gmin^{-1}+(1-\alpha))^2}=\frac2M\cdot\frac{\gmin^2}{\gmax}\cdot \frac{\alpha+(1-\alpha)\gmax}{(\alpha+(1-\alpha)\gmin)^2}.
\end{aligned}
\end{equation}

Using $\gmin\geq\alpha\cdot \gmax$ and $\gmin>0$, we can remove $\gmin$ from Equation \ref{eq:bound2a}, lower the expression. From the two bounds on $\gmin$ we obtain
$$\frac\alpha\gmin\geq \frac{1_{\alpha> 0}}{\gmax},$$
where $\gmin\geq\alpha$ is used for $\alpha>0$, and $\gmin>0$ is used for $\alpha=0$. Thus, we obtain
\begin{equation*}
\Delta t\leq \frac2M\cdot\frac{\alpha\gmax^{-1}+(1-\alpha)}{\left(1_{\alpha>0}\cdot\gmax^{-1}+(1-\alpha)\right)^2}=\frac2M\cdot\gmax\cdot\frac{\alpha+(1-\alpha)\gmax}{\left(1_{\alpha>0}+(1-\alpha)\gmax\right)^2}.
\end{equation*}
\end{proof}

\begin{proof}[Proof of Lemma \ref{thm:ridge_rewrite}]~\\
The ridge estimate is defined as
$$\betahatv(\lambda):=(\Xm^\top\Xm+n\lambda \Imm)^{-1}\Xm^\top\yv,$$
which can be rewritten as
\begin{equation*}
\begin{aligned}
(\Xm^\top\Xm+n\lambda \Imm)^{-1}\Xm^\top\yv
&\stackrel{(a)}=(\underbrace{\Xm^\top\Xm}_{n\Ss}+n\lambda \Imm)^{-1}\underbrace{(\Xm^\top\Xm)}_{n\Ss}\underbrace{(\Xm^\top\Xm)^{+}\Xm^\top\yv}_{\betahatolsv}\\
&=\frac1n(\Ss+\lambda \Imm)^{-1}n\Ss\betahatolsv
=(\Ss+\lambda \Imm)^{-1}(\Ss+\lambda\Imm-\lambda\Imm)\betahatolsv\\
&=(\underbrace{(\Ss+\lambda \Imm)^{-1}(\Ss+\lambda \Imm)}_{\Imm}-\lambda(\Ss+\lambda \Imm)^{-1})\betahatolsv\\
&=\left(\Imm-\left(\Imm+\frac 1 \lambda \Ss\right)^{-1}\right)\betahatolsv,
\end{aligned}
\end{equation*}
where $(a)$ follows from $\Xm^\top=(\Xm^\top\Xm)(\Xm^\top\Xm)^{+}\Xm^\top$.
\end{proof}

\begin{proof}[Proof of Proposition \ref{thm:t_lbda}]~\\
We first note that when $\Ss=\Imm$, $\Ii(\alpha,t_1)\Imm$ and $\Ii(\alpha,t_2)\Imm$ are diagonal and thus commute, which means that $\Omi(t_i,t)$ reduces to $-\frac{1-\alpha}{1-\gamma}\int_{t_i}^t\Ii(\alpha,\tau)d\tau$. 

Since the data is uncorrelated and $\betahv(t_0)=\bm{0}$, $|\betahat^{\text{\normalfont egf}}_d(t)|\leq|\betahatols_d|$ and for $\betahat^{\text{\normalfont egf}}_d(t)\neq0$, $\sgn(\betahat^{\text{\normalfont egf}}_d(t))=\sgn(\betahatols_d)=\sgn(\betahatols_d-\betahat^{\text{\normalfont egf}}_d(t))$. This means that Equation \ref{eq:beta_ef} can be written as
\begin{equation*}
\begin{aligned}
\betahat^{\text{\normalfont egf}}_d(t)=&\sgn(\betahatols_d)\cdot\min\left(|\betahat^{\text{\normalfont egf}}_d(t_i)|+\frac {1}{1-\alpha}\left(1-\exp\left(-\frac{1-\alpha}{1-\gamma}\int_{t_i}^t(\Ii)_{dd}(\alpha,\tau)d\tau\right)\right)\right.\\
&\left.\cdot\left(\alpha+(1-\alpha)\left(|\betahatols_d|-|\betahat^{\text{\normalfont egf}}_d(t_i)|\right)\right), |\betahatols_d|\right),
\end{aligned}
\end{equation*}
where the minimum, which is included to assure that $|\betahat^{\text{\normalfont egf}}_d(t)|\leq|\betahatols_d|$, becomes active once the OLS solution is reached.

We now compare this the closed form solution of the elastic net,
$$\betahat^{\text{\normalfont en}}_d(\lambda)=\sgn(\betahatols_d)\frac{\max(|\betahatols_d|-\alpha\lambda,0)}{1+(1-\alpha)\lambda}=\sgn(\betahatols_d)\frac{\left(|\betahatols_d|-\min(\alpha\lambda,|\betahatols_d|)\right)}{1+(1-\alpha)\lambda}.$$

Requiring $\betahat^{\text{\normalfont en}}_d(\lambda)=\betahat^{\text{\normalfont egf}}_d(t)$, we obtain
\begin{equation*}
\begin{aligned}
\betahat^{\text{\normalfont en}}_d(\lambda)=\betahat^{\text{\normalfont egf}}_d(t)\iff
\lambda_d=\max\left(\frac{|\betahatols_d|-|\betahat^{\text{\normalfont egf}}_d(t_i)|-\vv}
{\alpha +(1-\alpha) \left(|\betahat^{\text{\normalfont egf}}_d(t_i)|+\vv\right)},0\right),
\end{aligned}
\end{equation*}
where $$\vv=\frac1{1-\alpha}\left(1-\exp\left(-\frac{1-\alpha}{1-\gamma}\int_{t_i}^t(\Ii)_{dd}(\alpha,\tau)d\tau\right)\right)\left(\alpha+(1-\alpha)\left(|\betahatols_d|-|\betahat^{\text{\normalfont egf}}_d(t_i)|\right)\right),$$
and where the equivalence is tedious but straightforward to show.

Letting $(1-\alpha)\to 0$, using $\lim_{x\to0}\frac{1-e^{-ax}}{x}=a$, we obtain
$$\lambda_d=\max\left(|\betahatols_d|-|\betahat^{\text{\normalfont cf}}_d(t_i)|-\frac{t-t_i}{1-\gamma}(\Icfi)_{dd},0\right).$$

To calculate $\sgn\left(\frac{\partial \lambda_d(t)}{\partial t}\right)$, we note first that
\begin{equation*}
\begin{aligned}
\frac{\partial \vv(t)}{\partial t} = & \frac1{1-\gamma}(\Ii)_{dd}(\alpha,t)\exp\left(-\frac{1-\alpha}{1-\gamma}\int_{t_i}^t(\Ii)_{dd}(\alpha,\tau)d\tau\right)\\
&\cdot\left(\alpha+(1-\alpha)\underbrace{\left(|\betahatols_d|-|\betahat^{\text{\normalfont egf}}_d(t_i)|\right)}_{\geq 0}\right)\geq 0,
\end{aligned}
\end{equation*}
which implies
\begin{equation*}
\begin{aligned}
\frac{\partial \lambda_d(t)}{\partial t}=&\underbrace{\frac{\partial(\max(0,\lambda_d(t)))}{\partial \lambda_d}}_{=:f_1(t)\geq 0}\cdot \underbrace{\frac1{\left(\alpha +(1-\alpha) \left(|\betahat^{\text{\normalfont egf}}_d(t_i)|+\vv(t)\right)\right)^2}}_{=:f_2(t)\geq 0}\\
&\cdot\left(\frac{\partial}{\partial t}\left(|\betahatols_d|-|\betahat^{\text{\normalfont egf}}_d(t_i)|-\vv(t)\right)\cdot\left(\alpha +(1-\alpha) \left(|\betahat^{\text{\normalfont egf}}_d(t_i)|+\vv(t)\right)\right)\right.\\
&-\left.\left(|\betahatols_d|-|\betahat^{\text{\normalfont egf}}_d(t_i)|-\vv(t)\right)\cdot\frac{\partial}{\partial t}\left(\alpha +(1-\alpha) \left(|\betahat^{\text{\normalfont egf}}_d(t_i)|+\vv(t)\right)\right)\right)\\
=&f_1(t)\cdot f_2(t)\cdot\left(-\frac{\partial\vv(t)}{\partial t}\cdot\left(\alpha +\cancel{(1-\alpha) \left(|\betahat^{\text{\normalfont egf}}_d(t_i)|+\vv(t)\right)}\right)\right.\\
&-\left.\left(|\betahatols_d|-\cancel{|\betahat^{\text{\normalfont egf}}_d(t_i)|-\vv(t)}\right)\cdot(1-\alpha)\frac{\partial\vv(t)}{\partial t}\right)\\
=&-\underbrace{f_1(t)}_{\geq 0}\cdot \underbrace{f_2(t)}_{\geq 0}\cdot\underbrace{\frac{\partial\vv(t)}{\partial t}}_{\geq 0}\cdot\underbrace{\left(\alpha+(1-\alpha)\cdot|\betahatols_d|\right)}_{\geq 0}\leq 0.
\end{aligned}
\end{equation*}
\end{proof}

\begin{proof}[Proof of Proposition \ref{thm:h_alpha_sd}]~\\
For $c_\alpha\geq 0$,
\begin{equation*}
\begin{aligned}
\left\|c_\alpha\dbetahvegds\right\|_1
=& c_\alpha\left\|\dbetahvegds\right\|_1= c_\alpha\|\Iegds\gv\|_1\left(\frac{\alpha}{\left\|\Iegds\gv\right\|_1}+\frac{1-\alpha}{\left\|\Iegds\gv\right\|_2}\right)\\
=&c_\alpha\left(\alpha+(1-\alpha)\frac{\left\|\Iegds\gv\right\|_1}{\left\|\Iegds\gv\right\|_2}\right)
=c_\alpha\left(\alpha+(1-\alpha)\sqrt{q_1}\right).\\
\end{aligned}
\end{equation*}
\begin{equation*}
\begin{aligned}
\left\|c_\alpha\dbetahvegds\right\|_2^2
=&c_\alpha^2\|\Iegds\gv\|_2^2\left(\frac{\alpha}{\left\|\Iegds\gv\right\|_1}+\frac{1-\alpha}{\left\|\Iegds\gv\right\|_2}\right)^2\\
=&c_\alpha^2\left(\alpha\frac{\left\|\Iegds\gv\right\|_2}{\left\|\Iegds\gv\right\|_1}+(1-\alpha)\right)^2 
=c_\alpha^2\left(\frac\alpha{\sqrt{q_1}}+1-\alpha\right)^2.\\
\end{aligned}
\end{equation*}
Solving 
\begin{equation*}
\begin{aligned}
1=&\alpha \left\|c_\alpha\dbetahvegds\right\|_1 +(1-\alpha)\left\|c_\alpha\dbetahvegds\right\|_2^2\\
=&c_\alpha\alpha\left(\alpha +(1-\alpha)\cdot\sqrt{q_1}\right) +c_\alpha^2(1-\alpha)\left(\alpha\cdot\frac{1}{\sqrt{q_1}} +1-\alpha\right)^2
\end{aligned}
\end{equation*}
for $c_\alpha$, the non-negative root is
$$c_\alpha=\frac{\sqrt{q_1(\alpha^2q_1+4(1-\alpha))}-\alpha q_1}{2(1-\alpha)(\sqrt{q_1}(1-\alpha)+\alpha)}.$$
\end{proof}

\begin{proof}[Proof of Proposition \ref{thm:eg_norm_sd}]
\begin{equation*}
\begin{aligned}
\left\|\dbetahvegds\right\|_1
=& \|\Iegds\gv\|_1\left(\frac{\alpha}{\left\|\Iegds\gv\right\|_1}+\frac{1-\alpha}{\left\|\Iegds\gv\right\|_2}\right)\\
=&\left(\alpha+(1-\alpha)\frac{\left\|\Iegds\gv\right\|_1}{\left\|\Iegds\gv\right\|_2}\right).
\end{aligned}
\end{equation*}
\begin{equation*}
\begin{aligned}
\left\|\dbetahvegds\right\|_2^2
=&\|\Iegds\gv\|_2^2\left(\frac{\alpha}{\left\|\Iegds\gv\right\|_1}+\frac{1-\alpha}{\left\|\Iegds\gv\right\|_2}\right)^2\\
=&\left(\alpha\frac{\left\|\Iegds\gv\right\|_2}{\left\|\Iegds\gv\right\|_1}+(1-\alpha)\right)^2.
\end{aligned}
\end{equation*}

\begin{equation*}
\begin{aligned}
&\alpha \left\|\dbetahvegds\right\|_1 +(1-\alpha)\left\|\dbetahvegds\right\|_2^2\\
=&\alpha\left(\alpha +(1-\alpha)\cdot\frac{\left\|\Iegds\gv\right\|_1}{\left\|\Iegds\gv\right\|_2}\right) +(1-\alpha)\left(\alpha\cdot\frac{\left\|\Iegds\gv\right\|_2}{\left\|\Iegds\gv\right\|_1} +1-\alpha\right)^2.
\end{aligned}
\end{equation*}

Using the inequalities $1\leq \frac{\|\Iegds\gv\|_1}{\|\Iegds\gv\|_2}\leq \sqrt{p_1}$ and the equalities $\alpha+(1-\alpha)^3=1-\alpha(1-\alpha)(2-\alpha)$ and $\alpha^2+1-\alpha=1-\alpha(1-\alpha)$, we obtain the lower bound 
$$1-\alpha(1-\alpha)\left(2-\alpha-\frac\alpha{p_1}-\frac{2(1-\alpha)}{\sqrt{p_1}}\right)\stackrel{(a)}{\geq}1-\alpha(1-\alpha)(2-\alpha)\cdot\left(1-\frac 1{p_1}\right),$$
where $(a)$ follows from $\sqrt{p_1}\leq p_1$ for $p_1\geq 1$, and the upper bound
$$1+\alpha(1-\alpha)\cdot(\sqrt{p_1}-1).$$

Noting that $\alpha(1-\alpha)(2-\alpha)<0.39$ for $\alpha \in [0,1]$, and $\alpha(1-\alpha)\leq \frac14$ completes the proof.
\end{proof}

\begin{proof}[Proof of Proposition \ref{thm:h_alpha_gs}]~\\
For $c_{\alpha,\Delta t}\geq 0$,
\begin{equation*}
\begin{aligned}
\left\|\dbetahvegdgc\right\|_1
=& \|\Iegd\gv\|_1\left(\frac{\alpha\cdot c_{\alpha,\Delta t}\cdot \Delta t}{\left\|\Iegd\gv\right\|_1}+\frac{(1-\alpha)\sqrt{c_{\alpha,\Delta t}\cdot \Delta t}}{\left\|\Iegd\gv\right\|_2}\right)\\
=&\left(\alpha\cdot c_{\alpha,\Delta t}\cdot\Delta t+(1-\alpha)\sqrt{c_{\alpha,\Delta t}\cdot\Delta t}\frac{\left\|\Iegd\gv\right\|_1}{\left\|\Iegd\gv\right\|_2}\right)\\
=&\left(\alpha\cdot c_{\alpha,\Delta t}\cdot \Delta t+(1-\alpha)\sqrt{c_{\alpha,\Delta t}\cdot \Delta t\cdot q_1}\right).\\
\end{aligned}
\end{equation*}
\begin{equation*}
\begin{aligned}
\left\|\dbetahvegdgc\right\|_2^2
=& \|\Iegd\gv\|_2^2\left(\frac{\alpha\cdot c_{\alpha,\Delta t}\cdot \Delta t}{\left\|\Iegd\gv\right\|_1}+\frac{(1-\alpha)\sqrt{c_{\alpha,\Delta t}\cdot \Delta t}}{\left\|\Iegd\gv\right\|_2}\right)^2\\
=&\left(\alpha\cdot c_{\alpha,\Delta t}\cdot\Delta t\frac{\left\|\Iegd\gv\right\|_2}{\left\|\Iegd\gv\right\|_1}+(1-\alpha)\sqrt{c_{\alpha,\Delta t}\cdot\Delta t}\right)^2\\
=&\left(\alpha\cdot c_{\alpha,\Delta t}\cdot \Delta t\cdot\frac1{\sqrt{q_1}}+(1-\alpha)\sqrt{c_{\alpha,\Delta t}\cdot \Delta t}\right)^2.\\
\end{aligned}
\end{equation*}
Solving 
\begin{equation*}
\begin{aligned}
\Delta t=&\alpha \left\|\dbetahvegdgc\right\|_1 +(1-\alpha)\left\|\dbetahvegdgc\right\|_2^2\\
&\alpha\cdot\left(\alpha\cdot c_{\alpha,\Delta t}\cdot \Delta t+(1-\alpha)\sqrt{c_{\alpha,\Delta t}\cdot \Delta t\cdot q_1}\right)\\
&+(1-\alpha)\left(\alpha\cdot c_{\alpha,\Delta t}\cdot \Delta t\cdot\frac1{\sqrt{q_1}}+(1-\alpha)\sqrt{c_{\alpha,\Delta t}\cdot \Delta t}\right)^2\\
\end{aligned}
\end{equation*}
for $c_{\alpha,\Delta t}$, we obtain
$$c_{\alpha,\Delta t}=\left(\frac{\sqrt{2\alpha\sqrt{q_1(\alpha^2q_1 + 4\Delta t(1-\alpha))} + q_1((1-\alpha)^3-2\alpha^2)} - (1-\alpha)\sqrt{q_1(1-\alpha)}}{\alpha\sqrt{4\Delta t(1-\alpha)}}\right)^2.$$
\end{proof}

\begin{proof}[Proof of Proposition \ref{thm:eg_norm_gs}]
\begin{equation*}
\begin{aligned}
\left\|\dbetahvegdg\right\|_1
=& \|\Iegd\gv\|_1\left(\frac{\alpha\cdot \Delta t}{\left\|\Iegd\gv\right\|_1}+\frac{(1-\alpha)\sqrt{\Delta t}}{\left\|\Iegd\gv\right\|_2}\right)\\
=&\left(\alpha\cdot \Delta t+(1-\alpha)\sqrt{\Delta t}\frac{\left\|\Iegds\gv\right\|_1}{\left\|\Iegds\gv\right\|_2}\right).
\end{aligned}
\end{equation*}
\begin{equation*}
\begin{aligned}
\left\|\dbetahvegdg\right\|_2^2
=&\|\Iegd\gv\|_2^2\left(\frac{\alpha\cdot \Delta t}{\left\|\Iegd\gv\right\|_1}+\frac{(1-\alpha)\sqrt{\Delta t}}{\left\|\Iegd\gv\right\|_2}\right)^2\\
=&\left(\alpha\cdot \Delta t\frac{\left\|\Iegd\gv\right\|_2}{\left\|\Iegd\gv\right\|_1}+(1-\alpha)\sqrt{\Delta t}\right)^2.
\end{aligned}
\end{equation*}

\begin{equation*}
\begin{aligned}
&\alpha \left\|\dbetahvegdg\right\|_1 +(1-\alpha)\left\|\dbetahvegdg\right\|_2^2\\
=&\alpha\left(\alpha\cdot \Delta t +(1-\alpha)\sqrt{\Delta t}\cdot\frac{\left\|\Iegd\gv\right\|_1}{\left\|\Iegd\gv\right\|_2}\right) +(1-\alpha)\left(\alpha\cdot \Delta t\cdot\frac{\left\|\Iegd\gv\right\|_2}{\left\|\Iegds\gv\right\|_1} +(1-\alpha)\sqrt{\Delta t}\right)^2.
\end{aligned}
\end{equation*}

Using the inequalities $\frac{\|\Iegd\gv\|_1}{\|\Iegd\gv\|_2}\leq \sqrt{p_1}$, $\frac{\|\Iegd\gv\|_2}{\|\Iegd\gv\|_1}\leq 1$ we obtain upper bound
\begin{equation*}
\begin{aligned}
&\alpha\cdot\left(\alpha\Delta t+(1-\alpha)\sqrt{\Delta t}\cdot\frac{\|\Iegd\gv\|_1}{\|\Iegd\gv\|_2}\right)+(1-\alpha)\cdot\left(\alpha\Delta t\cdot\frac{\|\Iegd\gv\|_2}{\|\Iegd\gv\|_1}+(1-\alpha)\sqrt{\Delta t}\right)^2\\
\leq&\alpha\cdot\left(\alpha\Delta t+(1-\alpha)\sqrt{\Delta t}\cdot\sqrt{p_1}\right)+(1-\alpha)\cdot\left(\alpha\Delta t+(1-\alpha)\sqrt{\Delta t}\right)^2\\
\stackrel{(a)}{=}&\Delta t+\alpha(1-\alpha)\left((\alpha-3)\Delta t+\sqrt{\Delta t}\sqrt{p_1}+\alpha(\Delta t)^2+2(1-\alpha)\Delta t\sqrt{\Delta t}\right)\\
\stackrel{(b)}{\leq}&\Delta t+\alpha(1-\alpha)\left((\alpha-3)\Delta t+\sqrt{\Delta t}\sqrt{p_1}+\alpha\Delta t+2(1-\alpha)\Delta t\right)\\
=&\Delta t\left(1+\alpha(1-\alpha)\left(\sqrt{\frac{p_1}{\Delta t}}-1\right)\right),\\
\end{aligned}
\end{equation*}
where in $(a)$, we use $\alpha^2+(1-\alpha)^3=1+\alpha(1-\alpha)(\alpha-3)$ and $(b)$ uses $(\Delta t)^2\leq \Delta t$, $\sqrt{\Delta t}\leq 1$ for $\Delta t\leq 1$.

The lower bound is obtain by using $\frac{\|\Iegd\gv\|_1}{\|\Iegd\gv\|_2}\geq 1$, $\frac{\|\Iegd\gv\|_2}{\|\Iegd\gv\|_1}\geq \frac 1{\sqrt{p_1}}$:
\begin{equation*}
\begin{aligned}
&\alpha\cdot\left(\alpha\Delta t+(1-\alpha)\sqrt{\Delta t}\cdot\frac{\|\Iegd\gv\|_1}{\|\Iegd\gv\|_2}\right)+(1-\alpha)\cdot\left(\alpha\Delta t\cdot\frac{\|\Iegd\gv\|_2}{\|\Iegd\gv\|_1}+(1-\alpha)\sqrt{\Delta t}\right)^2\\
\geq&\alpha\cdot\left(\alpha\Delta t+(1-\alpha)\sqrt{\Delta t}\right)+(1-\alpha)\cdot\left(\frac{\alpha\Delta t}{\sqrt{p_1}}+(1-\alpha)\sqrt{\Delta t}\right)^2\\
\stackrel{(a)}{=}&\Delta t+\alpha(1-\alpha)\left((\alpha-3)\Delta t+\sqrt{\Delta t}+\alpha\frac{(\Delta t)^2}{p_1}+2(1-\alpha)\frac{\Delta t\sqrt{\Delta t}}{\sqrt{p_1}}\right)\\
\stackrel{(b)}{\geq}&\Delta t+\alpha(1-\alpha)\left((\alpha-3)\Delta t+\Delta t+\alpha\frac{(\Delta t)^2}{p_1}+2(1-\alpha)\frac{(\Delta t)^2}{p_1}\right)\\
=&\Delta t\left(1-\alpha(1-\alpha)(2-\alpha)\left(1-\frac{\Delta t}{p_1}\right)\right),
\end{aligned}
\end{equation*}
where in $(a)$, we use $\alpha^2+(1-\alpha)^3=1+\alpha(1-\alpha)(\alpha-3)$ and $(b)$ uses $\sqrt{\Delta t}\geq \Delta t$ for $\Delta t\leq 1$, and $\sqrt{p_1}\leq p_1$ for $p_1\geq 1$.

Noting that $\alpha(1-\alpha)(2-\alpha)<0.39$ for $\alpha \in [0,1]$, and $\alpha(1-\alpha)\leq \frac14$ completes the proof.
\end{proof}

\begin{proof}[Proof of Lemma \ref{thm:decr_p1}]~\\
We assume without loss of generality that $\gv$ is sorted, so that $|g_1|\geq|g_2|\geq \dots$.

$$\frac{\|\Iegds\gv\|_2^2}{\|\Iegds\gv\|_1}=\frac{g_1^2\cdot\left(1+\sum_{i=2}^{p_1}\left(\frac{g_i}{g_1}\right)^2\right)}{|g_1|\cdot\left(1+\sum_{i=2}^{p_1}\left|\frac{g_i}{g_1}\right|\right)},$$
Since $(g_i/g_1)^2\leq |g_i/g_1|$ and $|g_{i}|\geq |g_{i+1}|$, $\left(1+\sum_{i=2}^{p_1}\left(\frac{g_i}{g_1}\right)^2\right)/\left(1+\sum_{i=2}^{p_1}\left|\frac{g_i}{g_1}\right|\right)$ is a decreasing function of $p_1$.
\end{proof}

\begin{lemma}
\label{thm:abs_der}
$$\left(\left|\frac f g\right|\right)^{(k)}=\sgn\left(\frac f g\right)\cdot\left(\frac{f^{(k)}\cdot g-f\cdot g^{(k)}}{g^2}\right)+\Oo\left(\left(\frac f g\right)^{(k-1)}\right),$$
where $\Oo\left(\left(\frac f g\right)^{(k-1)}\right)$ denotes derivatives of $\left(\frac f g\right)$ of orders strictly lower than $k$.
\end{lemma}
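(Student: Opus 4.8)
The plan is to reduce everything to a single observation: away from its zeros, $|u|$ agrees with $\sgn(u)\cdot u$ with $\sgn(u)$ \emph{locally constant}, so that taking $k$ derivatives of $|u|$ is the same as taking $k$ derivatives of $u$ and pulling the sign out front. Thus at any point where $g\neq 0$ and $f/g\neq 0$ --- which is exactly the regime in which the lemma is applied in Appendix A.2, where $|g_c/g_m|=\alpha>0$ --- I would first record $\left(\left|f/g\right|\right)^{(k)}=\sgn(f/g)\cdot(f/g)^{(k)}$ and so reduce the claim to showing $(f/g)^{(k)}=\dfrac{f^{(k)}g-fg^{(k)}}{g^2}+\Oo\!\left((f/g)^{(k-1)}\right)$.

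To get the latter without iterating the quotient rule, I would instead start from the identity $f=(f/g)\cdot g$ and apply the Leibniz rule: $f^{(k)}=\sum_{j=0}^{k}\binom kj (f/g)^{(j)}g^{(k-j)}$. Peeling off the $j=k$ term $(f/g)^{(k)}g$ and solving, $(f/g)^{(k)}=\dfrac{f^{(k)}}{g}-\dfrac1g\sum_{j=0}^{k-1}\binom kj (f/g)^{(j)}g^{(k-j)}$. The key bookkeeping point is that the $j=0$ term of the remaining sum is precisely $(f/g)\,g^{(k)}$, which after division by $g$ equals $fg^{(k)}/g^2$; combining it with $f^{(k)}/g$ produces the stated leading term $(f^{(k)}g-fg^{(k)})/g^2$.

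What is left over is $-\dfrac1g\sum_{j=1}^{k-1}\binom kj (f/g)^{(j)}g^{(k-j)}$, a linear combination of the derivatives $(f/g)^{(j)}$ for $1\le j\le k-1$ with coefficients built only from $g$ and its derivatives; this is by definition an instance of $\Oo\!\left((f/g)^{(k-1)}\right)$ (and the sum is empty when $k=1$). Restoring the constant factor $\sgn(f/g)\in\{\pm1\}$, which is harmless since a constant multiple of such a combination is again of that form, finishes the argument. I do not anticipate a real obstacle; the only points requiring a word of care are the standing smoothness hypotheses on $f,g$ that make the $k$-th derivatives meaningful, and the explicit assumption $f/g\neq 0$ at the point of differentiation that licenses treating $\sgn(f/g)$ as locally constant.
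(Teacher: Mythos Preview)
Your proof is correct and follows essentially the same route as the paper: both obtain the quotient identity by applying the Leibniz rule to $f=(f/g)\cdot g$ and isolating the top and bottom terms of the sum. The only difference is cosmetic---the paper invokes Fa\`a di Bruno's formula to differentiate $|\,\cdot\,|$ (yielding an extra $\Oo$ term), whereas you observe directly that $\sgn(f/g)$ is locally constant away from zeros, which is the cleaner argument.
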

\begin{proof}~\\
We first show that
\begin{equation}
\label{eq:quotient_higher}
\left(\frac fg\right)^{(k)}=\frac{f^{(k)}g -f g^{(k)}}{g^2}-\sum_{i=1}^{k-1} \binom{k}{i} \left(\frac fg\right)^{(k-i)}\frac{g^{(i)}}{g}:
\end{equation}
\begin{equation*}
\begin{aligned}
&\frac{f^{(k)}g -f g^{(k)}}{g^2}-\sum_{i=1}^{k-1} \binom{k}{i} \left(\frac fg\right)^{(k-i)}\frac{g^{(i)}}{g}=\frac{1}{g} \left(f^{(k)} -\left(\frac fg\right) g^{(k)}-\sum_{i=1}^{k-1} \binom{k}{i} \left(\frac fg\right)^{(k-i)}g^{(i)}\right)\\
=&\frac{1}{g} \left(f^{(k)} -\left(\sum_{i=0}^{k} \binom{k}{i} \left(\frac fg\right)^{(k-i)}g^{(i)}-\left(\frac fg\right)^{(k)}g\right)\right)\\
\stackrel{(a)}{=}&\frac{1}{g} \left(f^{(k)} -\left(\left(\frac fg g\right)^{(k)}-\left(\frac fg\right)^{(k)}g\right)\right)\\
=&\frac{1}{g} \left(f^{(k)} -f^{(k)}+\left(\frac fg\right)^{(k)}g\right)=\left(\frac fg\right)^{(k)},\\
\end{aligned}
\end{equation*}
where $(a)$ follows from the general Leibniz rule, $(fg)^{(k)}=\sum_{i=0}^{k} \binom{k}{i} f^{(k-i)}g^{(i)}$.
Now, according to the chain rule for higher order derivatives, Fa\`{a} di Bruno's formula,
$$\frac {df(g(x))} {dx^k} =f'(g(x))\cdot g^{(k)}(x)+\Oo(g^{(k-1)}),$$
where $\Oo(g^{(k-1)})$ denotes terms with derivatives of $g$ of orders strictly lower than $k$. Applying this, we obtain
$$\left(\left|\frac f g\right|\right)^{(k)}=\sgn\left(\frac f g\right)\cdot\left(\frac f g\right)^{(k)}+\Oo\left(\left(\frac f g\right)^{(k-1)}\right).$$
Applying Equation \ref{eq:quotient_higher} completes the proof.
\end{proof}

\begin{lemma}~\\
\label{thm:gk0}
With $\gv(t)=-\Ss(\betahatolsv-\betav(t))$ for $\betav(t)$ given by Equation \ref{eq:beta_ef},
$$\gv^{(k)}(t_i) =\frac1{1-\gamma}\left(-\Ss(\Ii)^{(k-1)}(t_i)(\alpha\cdot\sgn(\gv(t_i))+(1-\alpha)\cdot\gv(t_i)) + \Oo((\Ii)^{(k-2)}(t_i))\right),$$
where $\Oo((\Ii)^{(k-2)}(t_i))$ depends only on derivatives of order $k-2$ and lower, and\\$\Oo((\Ii)^{(k)}(t_i))=\bm{0}$ for $k<0$.
\end{lemma}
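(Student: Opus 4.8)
The plan is to prove the identity by induction on $k$, using only the explicit first-order equation \ref{eq:en_flow1} and the general Leibniz rule.

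First I would record two reductions. Since $\gv(t) = \Ss\betav(t) - \Ss\betahatolsv$ by Equation \ref{eq:grad} and $\Ss\betahatolsv$ is constant, we have $\gv^{(k)}(t) = \Ss\betav^{(k)}(t)$ for every $k \geq 1$. Moreover, inserting $\Ss(\betahatolsv - \betav(t)) = -\gv(t)$ and $\svi = \sgn(\Ss(\betahatolsv - \betav(t_i))) = -\sgn(\gv(t_i))$ into the right-hand side of Equation \ref{eq:en_flow1} rewrites it, on $[t_i, t_{i+1})$, as $\betav^{(1)}(t) = -\Ii(\alpha,t)\bm{h}(t)$, where $\bm{h}(t) := \alpha\cdot\sgn(\gv(t_i)) + (1-\alpha)\cdot\gv(t)$. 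Both $\Ii(\alpha,\cdot)$ (by Definition \ref{dfn:en_flow}) and $\bm{h}$ are smooth on $[t_i,t_{i+1})$, which legitimizes taking higher derivatives. I would then note three properties of $\bm{h}$: its only $t$-dependence is the linear term $(1-\alpha)\gv(t)$, the sign vector being frozen at $t_i$; $\bm{h}(t_i)$ is precisely the vector abbreviated $\gsvi = \alpha\cdot\sgn(\gv(t_i)) + (1-\alpha)\cdot\gv(t_i)$; and $\bm{h}^{(l)}(t_i) = (1-\alpha)\gv^{(l)}(t_i)$ for all $l \geq 1$.

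For the inductive step, I would differentiate $\Ii(\alpha,t)\bm{h}(t)$ a total of $k-1$ times and evaluate at $t_i$, so that $\gv^{(k)}(t_i) = \Ss\betav^{(k)}(t_i) = -\Ss\sum_{j=0}^{k-1}\binom{k-1}{j}(\Ii)^{(j)}(t_i)\,\bm{h}^{(k-1-j)}(t_i)$ by Leibniz. The term $j = k-1$ contributes $-\Ss\,(\Ii)^{(k-1)}(t_i)\,\bm{h}(t_i) = -\Ss\,(\Ii)^{(k-1)}(t_i)\,\gsvi$, which is the asserted leading term. For each $j \leq k-2$ we have $k-1-j \geq 1$, hence $\bm{h}^{(k-1-j)}(t_i) = (1-\alpha)\gv^{(k-1-j)}(t_i)$, and by the induction hypothesis this depends only on derivatives of $\Ii$ of order at most $k-2-j$; paired with $(\Ii)^{(j)}(t_i)$, the highest order of an $\Ii$-derivative occurring in that term is $\max(j, k-2-j) \leq k-2$. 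Therefore the full sum over $j \leq k-2$ is $\Oo((\Ii)^{(k-2)}(t_i))$, which completes the step. The base case $k = 1$ is immediate: $\gv^{(1)}(t_i) = -\Ss\,\Ii(\alpha,t_i)\,\gsvi$ straight from the first-order equation, with no remainder, consistent with the convention $\Oo((\Ii)^{(l)}) = \bm{0}$ for $l < 0$.

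I expect no genuine difficulty; the only steps needing care are the sign bookkeeping in passing from Equation \ref{eq:en_flow1} to $\betav^{(1)} = -\Ii\bm{h}$, and verifying after the Leibniz expansion that every term with $j < k-1$ involves only $\Ii$-derivatives of order at most $k-2$ — this is exactly the place where the induction hypothesis applied to $\gv^{(k-1-j)}$ is invoked.
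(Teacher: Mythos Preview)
Your argument is correct, and it takes a genuinely different route from the paper's. The paper works directly with the closed-form solution in Equation \ref{eq:beta_ef}: it first proves that $(\Omi)^{(k)}(t_i,t_i)=-(1-\alpha)(\Ii)^{(k-1)}(t_i)\Ss+\Oo((\Ii)^{(k-2)}(t_i))$ from the Magnus expansion, then pushes this through the derivative of the matrix exponential (using Lemma \ref{thm:mat_exp_der} and the fact that $\Omi(t_i,t_i)=\bm{0}$) to obtain $\frac{d^k}{dt^k}\exp(\Omi(t_i,t))\big|_{t=t_i}=-(1-\alpha)(\Ii)^{(k-1)}(t_i)\Ss+\Oo((\Ii)^{(k-2)}(t_i))$, and finally multiplies by $-\Ss(\tfrac{\alpha}{1-\alpha}\Ss^{-1}\svi+\betahatolsv-\betavti)$ to recover the stated formula.

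You bypass all of this by observing that, since Equation \ref{eq:beta_ef} solves the ODE \eqref{eq:en_flow1}, one can simply differentiate $\betav'(t)=-\Ii(\alpha,t)\bm{h}(t)$ and run an induction via Leibniz. This is more elementary: it uses only the first-order equation and never touches the Magnus series or derivatives of matrix exponentials. The paper's route has the advantage of being entirely non-inductive and of making the dependence on the explicit representation \eqref{eq:beta_ef} transparent, which matters because the surrounding construction (Appendix \ref{constr:en_flow}) actually needs the derivatives of $\exp(\Omi)$ to compute the vectors $\ckv$. Your approach, in contrast, gives the lemma itself with less machinery, and your careful tracking of the highest $\Ii$-derivative order in the $j\le k-2$ terms (via the inductive hypothesis on $\gv^{(k-1-j)}$) is exactly what is needed.
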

\begin{proof}~\\
We begin by showing that
\begin{equation}
\label{eq:omega_of_i}
\Om^{(k)}(t_i,t_i)=-\frac{1-\alpha}{1-\gamma}\Ss\cdot(\Ii)^{(k-1)}(t_i)+\frac1{1-\gamma}\Oo((\Ii)^{(k-2)}(t_i)),
\end{equation}
where $\Oo((\Ii)^{(k-2)}(t_i))$ depends only on derivatives of order $k-2$ and lower, and\\$\Oo((\Ii)^{(k)}(t_i))=\bm{0}$ for $k<0$.

According to \cite{magnus1954exponential}
\begin{equation*}
\begin{aligned}
\Omi(t_i,t)=&\int_{t_i}^t\bm{A}(\tau_1)d\tau_1+\frac12\int_{t_i}^t\int_{t_i}^{\tau_1}\left[\bm{A}(\tau_1),\bm{A}(\tau_2)\right]d\tau_2d\tau_1\\
&+\frac14\int_{t_i}^t\int_{t_i}^{\tau_1}\int_{t_i}^{\tau_2}\left[\bm{A}(\tau_1),\left[\bm{A}(\tau_2),\bm{A}(\tau_3)\right]\right]d\tau_3d\tau_2d\tau_1+\dots,
\end{aligned}
\end{equation*}
where the commutator is defined according to $[\bm{A},\bm{B}]:=\bm{A}\bm{B}-\bm{B}\bm{A}$. 

Multiple applications of the fundamental theorem of calculus result in
$$(\Omi)^{(k)}(t_i,t_i)=\bm{A}^{(k-1)}(t_i)+\frac12\Oo(\bm{A}^{(k-2)}(t_i))+\frac14\Oo(\bm{A}^{(k-3)}(t_i))+\dots$$
where $\Oo(\bm{A}^{(k)}(t_i))$ depends only on derivatives of order $k$ and lower, and $\Oo(\bm{A}^{(k)}(t_i))=\bm{0}$ for $k<0$. Setting $\bm{A}(t)=-\frac{1-\alpha}{1-\gamma}\Ss\Ii(t)$ results in Equation \ref{eq:omega_of_i}.

Next, we note that derivatives of order $k$ only appear in terms where $i=0$ or $i=k$ in Equation \ref{eq:mat_exp_der} and obtain
\begin{equation}
\label{eq:exp_x_oo}
\frac{d^k \exp(\Xm(t))}{dt^k}=\sum_{n=1}^\infty \frac 1{n!}\left(\Xm(t)\frac{d^k\Xm(t)^{n-1}}{dt^k}+\frac{d^k\Xm(t)}{dt^k}\Xm(t)^{n-1}+\Oo(\Xm^{(k-1)}(t))\right),
\end{equation}
where $\Oo(\Xm^{(k-1)}(t))$ depends only on derivatives of order $k-1$ and lower, and $\Oo(\Xm^{(0)}(t))=\bm{0}$.\\
Since $\Om(t_i,t_i)=\bm{0}$, for $n\in \mathbb{N}_0$,
$$\Om^n(t_i,t_i)=
\begin{cases}
\Imm,\ n=0\\
\bm{0},\ n>0,
\end{cases}$$
and inserting $\Om(t_i,t_i)$ into Equation \ref{eq:exp_x_oo} we obtain
\begin{equation}
\label{eq:dexpom_dt}
\begin{aligned}
&\frac{d^k \exp(\Om(t_i,t_i))}{dt^k}\\
&=\sum_{n=1}^\infty \frac 1{n!}\left(\underbrace{\Om(t_i,t_i)}_{=\bm{0}}\cdot(\Om^{n-1})^{(k)}(t_i,t_i)+\Om^{(k)}(t_i,t_i)\cdot \underbrace{\Om(t_i,t_i)^{n-1}}_{\Imm,\ n=1;\ \bm{0},\ n>1} +\Oo(\Om^{(k-1)}(t_i,t_i))\right)\\
&=\Om^{(k)}(t_i,t_i)+\Oo(\Om^{(k-1)}(t_i,t_i))=-\frac{1-\alpha}{1-\gamma}\Ss\cdot(\Ii)^{(k-1)}(t_i)+\frac1{1-\gamma}\Oo((\Ii)^{(k-2)}(t_i)).
\end{aligned}
\end{equation}
Now
\begin{equation*}
\begin{aligned}
\gv^{(k)}(t_i)=& (-\Ss(\betahatolsv-\betahvegf(t)))^{(k)}=\Ss\betahvegf^{(k)}(t_i)\\
\stackrel{(a)}{=}&-\frac1{1-\alpha}\exp(\Omi(t_i,t_i))^{(k)}\left(\alpha\cdot\svi+(1-\alpha)\cdot\Ss\left(\betahatolsv-\betahvegf(t_i)\right)\right)\\
\stackrel{(b)}{=}&\frac1{1-\gamma}\left(\Ss\cdot(\Ii)^{(k-1)}(t_i)\left(\alpha\cdot\underbrace{\svi}_{=-\sgn(\gv(t_i))}+(1-\alpha)\cdot\underbrace{\Ss\left(\betahatolsv-\betahvegf(t_i)\right)}_{=-\gv(t_i)}\right)\right.\\
&\left.+\Oo((\Ii)^{(k-2)}(t_i))\right)\\
=&\frac1{1-\gamma}\left(-\Ss(\Ii)^{(k-1)}(t_i)\left(\alpha\cdot\sgn(\gv(t_i))+(1-\alpha)\cdot\gv(t_i)\right)+\Oo((\Ii)^{(k-2)}(t_i))\right),
\end{aligned}
\end{equation*}
where $(a)$ follows from Equation \ref{eq:beta_ef} and $(b)$ follows from Equation \ref{eq:dexpom_dt}.
\end{proof}

\begin{lemma}
\label{thm:mat_exp_der}
\begin{equation}
\label{eq:mat_exp_der}
\frac{d^k \exp(\Xm(t))}{dt^k}=\sum_{n=1}^\infty \frac 1 {n!} \sum_{i=0}^k \binom{k}{i}\cdot\frac{d^i \Xm(t)}{dt^i}\cdot\frac{d^{k-i}\Xm(t)^{n-1}}{dt^{k-i}}.
\end{equation}
\end{lemma}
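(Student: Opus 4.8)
The plan is to differentiate the defining power series of the matrix exponential term by term. First I would write $\exp(\Xm(t))=\sum_{n=0}^\infty \frac1{n!}\Xm(t)^n$ and, fixing $k\ge 1$, argue that on any compact time interval the differentiated series $\sum_{n=0}^\infty \frac1{n!}\frac{d^k}{dt^k}\Xm(t)^n$ converges uniformly. Indeed, by the iterated product rule $\frac{d^k}{dt^k}\Xm(t)^n$ is a finite sum (with multinomial coefficients, whose number grows only polynomially in $n$ for fixed $k$) of products of exactly $n$ factors, each factor being $\Xm(t)$ or one of its derivatives up to order $k$, hence bounded by some constant $M$ on the interval; so the $n$-th summand has norm $O(n^k M^n/n!)$ and the Weierstrass $M$-test applies. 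This justifies interchanging $\frac{d^k}{dt^k}$ with the sum, giving $\frac{d^k}{dt^k}\exp(\Xm(t))=\sum_{n=0}^\infty \frac1{n!}\frac{d^k}{dt^k}\Xm(t)^n$.

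Next I would drop the $n=0$ term, since $\Xm(t)^0=\Imm$ is constant and so has vanishing derivatives of order $k\ge 1$, and for each $n\ge 1$ factor $\Xm(t)^n=\Xm(t)\cdot\Xm(t)^{n-1}$. Applying the general Leibniz rule to this product of two matrix-valued functions yields $\frac{d^k}{dt^k}\Xm(t)^n=\sum_{i=0}^k\binom ki \frac{d^i\Xm(t)}{dt^i}\cdot\frac{d^{k-i}\Xm(t)^{n-1}}{dt^{k-i}}$, and substituting this back gives exactly the claimed identity. I would remark that the Leibniz rule holds verbatim for matrix-valued functions — proved by the same straightforward induction on $k$ as in the scalar case (cf.\ the use of the general Leibniz rule in the proof of Lemma~\ref{thm:abs_der}) — the only subtlety being that the order of the two factors $\Xm(t)$ and $\Xm(t)^{n-1}$ must be preserved throughout, which is automatic here.

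The only genuine obstacle is the analytic justification of the term-by-term differentiation, i.e.\ the uniform convergence of the differentiated series on compact time sets; the rest is bookkeeping. An equivalent route would be induction on $k$, with base case $k=1$ (itself an instance of term-by-term differentiation of the exponential series) and the inductive step an application of the Leibniz rule to one further factor of $\Xm(t)$ — but this merely redistributes the same ingredients rather than removing the convergence issue, so I would prefer the direct argument above.
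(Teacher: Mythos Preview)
Your proposal is correct and follows essentially the same approach as the paper: differentiate the power series of $\exp(\Xm(t))$ term by term and apply the general Leibniz rule to the factorization $\Xm(t)^n=\Xm(t)\cdot\Xm(t)^{n-1}$. The only difference is that you supply the uniform-convergence justification for swapping $\frac{d^k}{dt^k}$ with the series, which the paper simply takes for granted.
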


\begin{proof}~\\
For $n\geq 1$, according to the general Leibniz rule, $(fg)^{(k)}=\sum_{i=0}^{k} \binom{k}{i} f^{(i)}g^{(k-i)}$,
$$\frac{d^k \Xm(t)^n}{dt^k}=\frac{d^k\left(\Xm(t) \Xm(t)^{n-1}\right)}{dt^k}=\sum_{i=0}^k \binom{k}{i}\cdot\frac{d^i \Xm(t)}{dt^i}\cdot\frac{d^{k-i}\Xm(t)^{n-1}}{dt^{k-i}}.$$
Inserting this into the Taylor expansion of the matrix exponential, we obtain
$$\frac{d^k \exp(\Xm(t))}{dt^k}=\sum_{n=1}^\infty \frac 1 {n!} \frac {d^k \Xm(t)^n}{dt^k}=\sum_{n=1}^\infty \frac 1 {n!} \sum_{i=0}^k \binom{k}{i}\cdot\frac{d^i \Xm(t)}{dt^i}\cdot\frac{d^{k-i}\Xm(t)^{n-1}}{dt^{k-i}}.$$
\end{proof}

\end{document}